\documentclass[a4paper,USenglish,thm-restate]{lipics-v2021}
\hideLIPIcs

\usepackage{graphicx}
\usepackage{cite}

\newtheorem{theoremalpha}{Theorem}

\newcommand{\I}{\mathcal{I}}
\renewcommand{\O}{\mathcal{O}}
\newcommand{\K}{\mathcal{K}}
\newcommand{\Cl}{\mathsf{Cl}}
\newcommand{\SA}{\mbox{\rm SA}}
\newcommand{\HX}{\mbox{\rm HX}}

\newcommand{\Bary}{\mathsf{Bary}}

\newcommand{\Skel}{\mathsf{Skel}}
\newcommand{\val}{\mathsf{val}}
\newcommand{\IIS}{{\rm\textsf{IIS}}}
\newcommand{\ID}{{\rm\textsf{ID}}}




\bibliographystyle{plainurl}

\EventEditors{Rotem Oshman}
\EventNoEds{1}
\EventLongTitle{37th International Symposium on Distributed Computing (DISC 2023)}
\EventShortTitle{DISC 2023}
\EventAcronym{DISC}
\EventYear{2023}
\EventDate{October 10-12, 2023}
\EventLocation{L'Aquila, Italy}
\EventLogo{}
\SeriesVolume{281}
\ArticleNo{3}


\title{One Step Forward, One Step Back:\\
	FLP-Style Proofs and the Round-Reduction Technique for Colorless Tasks}

\titlerunning{FLP-Style Proofs and the Round-Reduction Technique for Colorless Tasks}
\author{Hagit Attiya}{Department of Computer Science, Technion, Israel}{hagit@cs.technion.ac.il}{0000-0002-8017-6457}{partially supported by the Israel Science Foundation (grants 380/18 and 22/1425)}
\author{Pierre Fraigniaud}{IRIF --- CNRS \& Universit\'e Paris Cit\'e}{pierre.fraigniaud@irif.fr}{}{partially supported by the ANR projects DUCAT (ANR-20-CE48-0006), FREDDA (ANR-17-CE40-0013), and QuData (ANR-18-CE47-0010)}
\author{Ami Paz}{LISN --- CNRS \& Universit\'e Paris-Saclay}{ami.paz@lisn.fr}{0000-0002-6629-8335}{}
\author{Sergio Rajsbaum}{IRIF, \'Ecole Polytechnique and Instituto de Matem\'aticas, UNAM, Mexico}{rajsbaum@im.unam.mx}{}{}
\authorrunning{Attiya, Fraigniaud, Paz and Rajsbaum}

\Copyright{Hagit Attiya, Pierre Fraigniaud, Ami Paz, and Sergio Rajsbaum} 

\ccsdesc[500]{Theory of computation~Distributed algorithms}

\keywords{Wait-free computing, lower bounds}

\acknowledgements{The authors thank Faith Ellen and the referees of DISC'23 for helpful comments}

\nolinenumbers 

\begin{document}

\maketitle


\begin{abstract}
The paper compares two generic techniques for deriving lower bounds and impossibility results in distributed computing.
First, we prove a \emph{speedup theorem} (a-la Brandt, 2019), for wait-free \emph{colorless} algorithms, aiming at capturing the essence of the seminal \emph{round-reduction} proof establishing a lower bound on the number of rounds for 3-coloring a cycle (Linial, 1992), and going by \emph{backward induction}.
Second, we consider
\emph{FLP-style} proofs, aiming at capturing the essence of the seminal consensus impossibility proof (Fischer, Lynch, and Paterson, 1985) and using \emph{forward induction}.

We show that despite their very different natures, these two forms of proof are tightly connected. In particular, we show that for every colorless task~$\Pi$, if there is a round-reduction proof establishing the impossibility of solving $\Pi$ using wait-free colorless algorithms, then there is an FLP-style proof establishing the same impossibility. For 1-dimensional colorless tasks (for an arbitrarily  number $n\geq 2$ of processes), we prove that the two proof techniques have exactly the \emph{same} power, and more importantly, both are \emph{complete}: if a 1-dimensional colorless task is \emph{not} wait-free solvable by $n\geq 2$ processes, then the impossibility can be proved by both proof techniques. Moreover, a round-reduction proof can be \emph{automatically} derived, and an FLP-style proof can be automatically generated from it.

Finally, we illustrate the use of these two techniques by establishing the impossibility of solving any colorless \emph{covering} task of arbitrary dimension by wait-free algorithms.
\end{abstract}

\newpage
\tableofcontents
\newpage

\section{Introduction}

We analyze the relative power of two generic and versatile techniques for establishing lower bounds and impossibility results in asynchronous distributed computing.
We focus on solving \emph{tasks}, defined as triples $\Pi=(\I,\O,\Delta)$,
where processes start with initial input values defined by $\I$,
and decide irrevocably on output values allowed by $\O$ after communicating
with each other for some number of steps, respecting the input/output relation $\Delta$;
the sets of processes, input values and output values are all finite.

This paper concentrates on the family of
\emph{colorless} tasks, including consensus, set agreement~\cite{Chaudhuri93},
loop agreement~\cite{HerlihyR03},
and various robot and graph agreement tasks~\cite{AlistarhER21,CastanedaRR16}.
A colorless task is defined only in terms
of input and output values,
regardless of the number of processes involved, and regardless of which process
has a particular input or output value;
accordingly, $\I$ and $\O$ consist of sets of values, without process ids.
For instance, in the binary consensus task, $\I=\big\{\{0\},\{1\},\{0,1\}\big\}$, meaning that all processes may start with input~0, or all processes may start with input~1, or some processes may start with input~0 while others may start with input~1. In the same task, $\O=\big\{\{0\},\{1\}\big\}$, meaning that the only valid output configurations are when all processes output~0, or all processes output~1.
Finally, consensus specification is captured by $\Delta(\{0\})=\{0\}$, $\Delta(\{1\})=\{1\}$, and $\Delta(\{0,1\})=\big\{\{0\},\{1\}\big\}$, meaning that, if there was an initial agreement between the input values then the processes must stick to this agreement and output their input values, and otherwise they are allow to output either~0 or~1, as long as they all agree on the same value.

Colorless tasks are simpler to analyze than general tasks,
such as symmetry breaking tasks~\cite{CastanedaIRR16},
but they are still undecidable~\cite{HerlihyR97}.
They have an elegant computability
characterization~\cite{bookHerlihyKR2013,HerlihyRRS17},
essentially stating that a colorless task is wait-free solvable
if and only if there is a continuous map from the geometric realization
of $\I$ to that of $\O$ that respects $\Delta$.
One major interest of colorless tasks is that, whenever solvable, they can be solved by simple algorithms, referred to as \emph{colorless algorithms}~\cite{HerlihyRRS17} (see also~\cite[Ch.~4]{bookHerlihyKR2013}).
Roughly speaking, such algorithms ignore multiplicities of input values and process states, and manipulate only \emph{sets} of values; in contrast, general algorithms manipulate \emph{vectors} of values and take into account which processes possess which value.

The two lower-bound or impossibility techniques at the core of our work are \emph{FLP-style proofs}, named after Fischer, Lynch and Paterson~\cite{FLP85}, and \emph{round-reduction proofs}, whose first occurrence might be attributed to Linial~\cite{linial92}.
The former offers a form of \emph{forward} induction technique, while the latter offers a form of \emph{backward} induction, and they both present some form of \emph{locality}.
We recall these two techniques hereafter.

\subsection{Stepping  Forward: FLP-Style Impossibility Proofs}

The celebrated FLP proof technique~\cite{FLP85} is perhaps the most used technique
for proving impossibility results in distributed computing.
It has been used to prove the impossibility of solving consensus
and several other problems~\cite{AttiyaEllenBook,FichR03hundred,Lynch89hundred},
as well as to derive lower bounds~\cite{AttiyaEllenBook,AguileraT99,KeidarR03}.
To prove that a task $\Pi=(\I,\O,\Delta)$ is not wait-free solvable,
the FLP technique considers a hypothetical algorithm \textsc{alg} solving the task,
and constructs an infinite sequence $\sigma_0,\sigma_1,\ldots$ of system configurations such that, for every $i\geq 0$,
\begin{itemize}
\item $\sigma_{i+1}$ is a successor of $\sigma_i$, and
\item \textsc{alg} cannot output in $\sigma_i$.
\end{itemize}
To initiate this sequence, the prover is allowed to ask the \emph{valencies}
of all initial configurations~$\sigma\in \I$,
where the valency of a configuration $\sigma$ is the set of values that are output by \textsc{alg}
in all executions starting from $\sigma$.
Based on these valencies, the prover selects an initial configuration $\sigma_0\in \I$.
Then, given a configuration $\sigma_i$, the prover asks the algorithm for
the valencies of some successor configurations of $\sigma_i$, and one of these configurations is selected to be the next configuration $\sigma_{i+1}$ in the sequence, and so on.
This is a form of \emph{forward induction},
starting with $\sigma_0$ and constructing a sequence of configurations one after the other.

If \textsc{alg} actually solves the task~$\Pi$, then the prover will fail to construct an infinite sequence, merely because \textsc{alg} can reveal the actual valencies that it produces for each given configuration.
On the other hand, if for every algorithm \textsc{alg} hypothetically solving~$\Pi$ the prover successfully constructs an infinite sequence, then this establishes that $\Pi$ is not solvable because the sequence is constructed in such a way that  \textsc{alg}  cannot output in $\sigma_i$.
For instance, the FLP impossibility proof for consensus~\cite{FLP85} constructs such an infinite sequence $\sigma_0,\sigma_1,\ldots$ for every algorithm \textsc{alg} by establishing that (1)~there exists an initial \emph{bivalent} configuration~$\sigma_0$ (i.e., a configuration from which an execution of  \textsc{alg} leads all processes to output~0, and another execution of  \textsc{alg} leads all processes to output~1), and (2)~for every bivalent configuration~$\sigma_i$, there exists a bivalent one-step successor~$\sigma_{i+1}$ of $\sigma_i$ that is bivalent.

An FLP-style impossibility proof is a simple case of
extension-based impossibility proofs~\cite{AlistarhAEGZ19}
	and \emph{local proofs}~\cite{AttiyaCR20},
	hence if such a proof exists for a task~$\Pi$ then there are also extension-based and local impossibility proofs for it.
	All these techniques use types of \emph{valency-arguments}~\cite[Chapter~7]{AttiyaEllenBook}, in the sense that they consider output values in executions starting from a given configuration, and then decide on the next configuration.
All these techniques work for consensus but fail for set agreement (i.e., set agreement is not solvable, but this impossibility cannot be established by valency arguments),
and the exact set of tasks for which each of them applies is not known.

\subsection{Stepping Back: Round-Reduction Impossibility Proofs}
\label{subsec:rr}

The round-reduction proof technique in distributed computing can be traced back to the seminal work of  Linial~\cite{linial92} establishing a lower bound for coloring the $n$-node cycle~$C_n$ using a (failure-free) synchronous algorithm. In a nutshell, he proved that for every $t\geq 1$, if there exists a $t$-round algorithm \textsc{alg} producing a proper $k$-coloring of $C_n$, then there exists a $(t-1)$-round algorithm $\textsc{alg}'$ producing a proper $2^{2^k}$-coloring of $C_n$.
Repeating this argument for roughly $\frac12\log^\ast n$ times implies that if there is a
$(\frac12\log^\ast n)$-round algorithm for $3$-coloring $C_n$ then there exists a $0$-round algorithm for $(n-1)$-coloring it, which is impossible.
Here, $\log^\star n$ denotes the number of times one should apply the $\log_2$ function to get from $n$ to a value smaller than~1.

This technique was generalized as a \emph{speedup theorem} by Brandt~\cite{Brandt19}.
Such a theorem is based on establishing the existence of a map~$F$ transforming any task~$\Pi$ in some class~$\mathcal{T}$ of tasks into another task $\Pi'\in\mathcal{T}$ such that, for every $t\geq 1$, if $\Pi$ is solvable in $t$ rounds by an algorithm \textsc{alg} from some class~$\mathcal{A}$ of algorithms, then $F(\Pi)$ is solvable in $t-1$ rounds by an algorithm  $\textsc{alg}'\in\mathcal{A}$.
Whenever such a theorem can be established, we get that for every task $\Pi\in\mathcal{T}$, and for every $t\geq 0$, if the task $F^{(t)}(\Pi)$ obtained by iterating $t$~times $F$ on~$\Pi$ is not solvable in zero rounds by an algorithm in~$\mathcal{A}$, then $\Pi$ is not solvable in $t$ rounds by an algorithm in~$\mathcal{A}$, which provides a lower bound on the complexity of~$\Pi$.
In particular, if $F^{(t)}(\Pi)$ is not solvable in zero rounds  by an algorithm in~$\mathcal{A}$ for all $t\geq 0$, then $\Pi$ cannot be solved by an algorithm in~$\mathcal{A}$.
We refer to this technique as a \emph{round-reduction} impossibility or and lower bound proof.
In contrast with the forward induction approach of FLP-style proofs, round-reduction is an a-posteriori technique, starting by assuming an algorithm solves a problem in $t$ rounds, claiming another problem is solvable in $t-1$ rounds, and repeating this argument down to $0$ rounds;
we hence refer to it as backward induction.

A speedup theorem has been established in~\cite{Brandt19} for solving locally-checkable labeling (LCL) tasks~\cite{NaorS95} using algorithms running in the anonymous \textsf{LOCAL} model~\cite{Peleg00}.
A speedup theorem has also been established in~\cite{FPR2022}, but for general (colored) tasks and wait-free algorithms running in the \emph{iterated immediate snapshot} (\IIS\/) model~\cite{AWbook}.
The transformations $F_{\mbox{\tiny \textsf{LOCAL}}}$ and  $F_{\mbox{\tiny \IIS}}$ used in \cite{Brandt19} and \cite{FPR2022} respectively, are of very different natures.
Nevertheless, both enabled to establish lower bounds for various tasks, including sink-less orientation and maximal independent set (MIS) in the  \textsf{LOCAL} model (see \cite{Balliu0HORS19,Brandt19}), and approximate agreement  in the \IIS\/ model, and even when the \IIS\/ model is enhanced with powerful objects like \textsf{test\&set} (see~\cite{FPR2022}).

Our first contribution is a speedup theorem for wait-free colorless algorithms solving colorless tasks in the \IIS\/ model. It is important to note that, although the set of colorless tasks is a subset of the set of general tasks, the speedup theorem in~\cite{FPR2022}  does not imply our speedup theorem, since the transformation $F_{\mbox{\tiny IIS}}$ used in~\cite{FPR2022} does not apply to colorless \emph{algorithms}.
Specifically, if $\Pi$ is solvable in $t$ rounds by $n$ processes running a wait-free colorless algorithm, then $F_{\mbox{\tiny IIS}}(\Pi)$ is indeed solvable in $t-1$ rounds by $n$ processes, but running a wait-free algorithm that \emph{may not be colorless}. As a consequence, the theorem cannot be iterated, which ruins the ability to design a round-reduction proof for colorless tasks.
The speedup theorem for colorless tasks we present here uses a transformation that preserves solvability by colorless algorithms.
The transformation $\Cl$ we define (Definition~\ref{def:colorless-closure}) has the following properties, as shown in Theorem~\ref{theo:closure} and Theorem~\ref{theo:closure-reciprocal}.
Applications of this theorem to approximate agreement and covering tasks can be found in Section~\ref{sec:applications}.

\begin{theoremalpha}\label{theo:A}
For every $n\geq 2$ and every $t\geq 1$,
the transformation $\Cl$ maps any colorless task $\Pi$ to a colorless task $\Cl(\Pi)$ such that, when considering $n$ processes running a colorless wait-free algorithm in the \IIS\/ model:
\begin{itemize}
	\item
	If $\Pi$ is $1$-dimensional, then
	$\Pi$ is solvable in $t$ rounds
	if and only if
	$\Cl(\Pi)$ is solvable in $t-1$ rounds;
	
	\item
	Regardless of the dimension of $\Pi$,
	if $\Pi$ is solvable in $t$ rounds
	then
	$\Cl(\Pi)$ is solvable in $t-1$ rounds.
%
%
\end{itemize}
\end{theoremalpha}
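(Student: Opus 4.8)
\medskip
\noindent\textbf{Proof plan.} The plan is to split the statement into its two implications and reduce each to a purely combinatorial fact about one round of colorless immediate snapshot. Recall the characterization underlying all of our arguments: for $n$ processes, a colorless task $\Pi=(\I,\O,\Delta)$ is solvable in $t$ rounds by a colorless wait-free algorithm in the \IIS\/ model if and only if there is a simplicial map from the $t$-round colorless protocol complex of $\I$ to $\O$ that is carried by $\Delta$. Write $\mathrm{Div}_n$ for one round of the colorless immediate-snapshot protocol-complex operator, so that the $t$-round complex is $\mathrm{Div}_n^{t}(\I)=\mathrm{Div}_n(\mathrm{Div}_n^{t-1}(\I))$; the place where dimension enters is that, on a $1$-dimensional complex and for every $n\ge 2$, $\mathrm{Div}_n$ coincides with the barycentric subdivision $\Bary$. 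The transformation $\Cl$ of Definition~\ref{def:colorless-closure} is built precisely so that ``solving $\Cl(\Pi)$ and then performing one more \IIS\/ round'' amounts to ``solving $\Pi$''. Thus it suffices to prove the following local statement and instantiate it with $K=\mathrm{Div}_n^{t-1}(\I)$: for every complex $K$ equipped with a carrier map to $\I$, any simplicial map $\mathrm{Div}_n(K)\to\O$ carried by $\Delta$ induces a simplicial map $K\to\Cl(\O)$ carried by $\Cl(\Delta)$ (this gives the second bullet and the ``only if'' of the first), and when $K$ is $1$-dimensional the converse also holds (this gives the ``if'' of the first). These are exactly Theorem~\ref{theo:closure} and Theorem~\ref{theo:closure-reciprocal}.

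\medskip
\noindent\textbf{From $t$ rounds for $\Pi$ to $t-1$ rounds for $\Cl(\Pi)$.} Let $f:\mathrm{Div}_n^{t}(\I)\to\O$ be the decision map of an algorithm $\mathcal A$ solving $\Pi$ in $t$ rounds, carried by $\Delta$. The algorithm $\mathcal A'$ for $\Cl(\Pi)$ runs $\mathcal A$ for $t-1$ rounds, reaching a vertex $u$ of $\mathrm{Div}_n^{t-1}(\I)$, and outputs the set $S_u\subseteq\O$ of values that $f$ assigns to the vertices of $\mathrm{Div}_n^{t}(\I)$ reachable from $u$ by one more \IIS\/ round. Three points must be checked: $(i)$ each $S_u$ is a vertex of $\Cl(\O)$; $(ii)$ whenever the participating processes occupy a simplex $\sigma=\{u_1,\dots,u_k\}$ of $\mathrm{Div}_n^{t-1}(\I)$, the tuple $\{S_{u_1},\dots,S_{u_k}\}$ of their outputs is a simplex of $\Cl(\O)$; $(iii)$ this simplex lies in $\Cl(\Delta)(s)$ for the input simplex $s\in\I$ carrying $\sigma$. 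Points $(i)$ and $(ii)$ hold because $\{S_{u_1},\dots,S_{u_k}\}$ records exactly the behaviour of $f$ on the one-round subdivision sitting above $\sigma$, which is the data that the simplices of $\Cl(\O)$ are defined to encode; point $(iii)$ holds because $f$ is carried by $\Delta$, so every value involved already lies in $\Delta(s)$, hence in $\Cl(\Delta)(s)$ by definition of the closure. No dimension assumption is used, so this settles the second bullet and the forward implication of the first.

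\medskip
\noindent\textbf{From $t-1$ rounds for $\Cl(\Pi)$ to $t$ rounds for $\Pi$, $1$-dimensional case.} Let $g:\mathrm{Div}_n^{t-1}(\I)\to\Cl(\O)$ be the decision map of $\mathcal A'$ solving $\Cl(\Pi)$ in $t-1$ rounds, carried by $\Cl(\Delta)$. The algorithm $\mathcal A$ runs $t-1$ rounds to reach a vertex $u$, reads off $S_u=g(u)$, runs one more round to a vertex $u'$ of $\mathrm{Div}_n^{t}(\I)$, and must output a value $f(u')\in\bigcap_i S_{u_i}$, the intersection over the $(t-1)$-round vertices $u_i$ in the carrier of $u'$. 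We must choose the values $f(u')$ so that $f:\mathrm{Div}_n^{t}(\I)\to\O$ is simplicial and carried by $\Delta$. This is where $1$-dimensionality is used: since $\I$ is $1$-dimensional, $\mathrm{Div}_n^{t-1}(\I)=\Bary^{t-1}(\I)$ is $1$-dimensional too, and $\mathrm{Div}_n$ subdivides each of its edges $\{u_1,u_2\}$ by a single midpoint $b$; choosing a simplicial $f$ above such an edge just means choosing $f(u_1)\in S_{u_1}$, $f(u_2)\in S_{u_2}$ and $f(b)\in S_{u_1}\cap S_{u_2}$ with $\{f(u_1),f(b)\}$ and $\{f(u_2),f(b)\}$ both in $\O$. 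That these choices can be made coherently across all edges is precisely what the combinatorial structure of $\Cl(\O)$ --- its vertices being the realizable subsets of $\O$, its edges the pairs of such subsets that admit a connecting value --- is designed to guarantee, and the resulting $f$ is carried by $\Delta$ because $g$ is carried by $\Cl(\Delta)$ and this property passes through the closure. Hence $\mathcal A$ solves $\Pi$ in $t$ rounds.

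\medskip
\noindent\textbf{Main obstacle.} The technical heart is calibrating Definition~\ref{def:colorless-closure} so that the conditions defining the simplices of $\Cl(\O)$ and the map $\Cl(\Delta)$ are at once \emph{necessary} --- so that the sets $S_u$ produced by the forward construction genuinely form simplices of $\Cl(\O)$ --- and \emph{sufficient} --- so that an abstract simplex of $\Cl(\O)$ can be turned back into an honest simplicial map over the extra subdivision round, with the carrier condition tracked correctly. Sufficiency is exactly what fails in dimension $\ge 2$: locally realizable pieces on the subdivided faces need not glue into a single global simplicial map on $\mathrm{Div}_n(K)$, so only the ``easier-after-closure'' implication survives for general tasks, whereas in dimension $1$ the gluing reduces to an edge-by-edge consistency check. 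Making that $1$-dimensional gluing rigorous, while keeping the interaction between the extra subdivision round and the carrier map under control and verifying that the reconstructed decision map is genuinely simplicial and genuinely carried by $\Delta$, is the part of the proof that requires the most care.
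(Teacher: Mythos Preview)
Your proposal rests on a misreading of Definition~\ref{def:colorless-closure}. You have the $(t-1)$-round algorithm for $\Cl(\Pi)$ output the \emph{set} $S_u\subseteq V(\O)$ of values that $f$ could produce in one more round from~$u$, and you describe the vertices of $\Cl(\O)$ as ``the realizable subsets of $\O$'' whose edges are ``pairs of such subsets that admit a connecting value''. But the definition states explicitly that $V(\O')=V(\O)$: the closure leaves the output values unchanged and only enlarges $\Delta'(\sigma)\supseteq\Delta(\sigma)$ by those $\tau\subseteq V(\Delta(\sigma))$ for which the local task $\Pi_{\tau,\sigma}$ is one-round solvable. A process solving $\Cl(\Pi)$ must output a single vertex of $\O$, not a subset; consequently your points (i)--(iii) in the forward direction do not type-check against the actual closure, and the ``$f(b)\in S_{u_1}\cap S_{u_2}$'' mechanism in the converse has no counterpart.

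The paper's constructions look quite different for this reason. Forward: given $f:\Bary^{(t)}(\I)\to\O$, set $f'(u)=f(\{u\})$, the value $f$ returns when the process with $(t-1)$-round view $u$ runs \emph{solo} in round~$t$; showing $f'(\rho)\in\Delta'(\sigma)$ then amounts to exhibiting a one-round solution of $\Pi_{f'(\rho),\sigma}$, obtained by restricting $f$ to $\Bary^{(1)}(\rho')$ for a face $\rho'\subseteq\rho$ on which $f'$ is injective. Converse (dimension~$1$): after $t-1$ rounds one holds single values $y=\delta'(w)$, $y'=\delta'(w')$ with $\{y,y'\}\in\Delta'(e)$; the extra round runs the guaranteed one-round solver $f_{e',e}:\Bary^{(1)}(\{y,y'\})\to\Delta(e)$ for the local task, the midpoint outputs $f_{e',e}(\{y,y'\})$, and correctness uses that $f_{e',e}$ fixes the endpoints. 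A side remark: for colorless algorithms the one-round operator is $\Bary$ in \emph{every} dimension (Lemma~\ref{lem:colorless-solvability}), so dimension~$1$ is not where ``$\mathrm{Div}_n=\Bary$'' enters; it enters because gluing the local-task solvers is an edge-by-edge check with no higher-dimensional compatibility to enforce.
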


A round-reduction impossibility or lower bound proof derived from Theorem~\ref{theo:A} essentially proceeds by computing $\Cl^{(t)}(\Pi)$, and checking whether $\Cl^{(t)}(\Pi)$ is solvable in zero rounds.
If the answer is negative for some $t\geq 1$, the proof successfully shows that $\Pi$ cannot be solved in $t$ rounds, and if it is negative for all $t\geq 1$, the proof shows $\Pi$ is not solvable;
otherwise, the proof fails.
While $\Cl$ is not the only possible round-reduction operator for colorless tasks, we focus solely on it in this work.

In Appendix~\ref{app:comparison-based}, we illustrate the fact that round-reduction does not extend in a straightforward manner to all classes of algorithms.
To this end, we consider \emph{comparison-based} algorithms, an important class of algorithms used for studying tasks such as renaming and weak symmetry-breaking.
We show that a closure operator similar to the ones considered here and in~\cite{FPR2022} but restricted to comparison-based algorithms, does not suffice for deriving a speedup theorem for comparison-based algorithms.

\subsection{Round-Reduction vs.~FLP-Style Impossibility Proofs}

Interestingly, as for extension-based proofs, the round-reduction proofs derived from the speedup theorem in~\cite{FPR2022} work for consensus 
but fail for set-agreement 
and the same holds for our transformation~$\Cl$.
We next show why the fact that both transformations succeed for binary consensus but fail for set-agreement should not come as a surprise, in light of prior results about FLP-style proofs.

Our second contribution shows that, although round-reduction proofs and FLP-style proofs may appear very different in nature, their power in term of establishing impossibility results can be compared.
We actually show that the FLP-style proof technique is at least as strong as the round-reduction proof technique.

\begin{theoremalpha}%
	[Theorem~\ref{theo:rr-to-ex-colorless}]
	\label{theo:B}
	For every colorless task~$\Pi$ and $n\geq2$, if there is a round-reduction  proof establishing the impossibility of solving $\Pi$ by $n$ processes running a wait-free colorless algorithm, then there is an FLP-style proof establishing the same impossibility.
\end{theoremalpha}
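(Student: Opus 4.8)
The plan is to convert the round-reduction certificate into an explicit strategy for the FLP-style prover. First, recall from the definition of $\Cl$ (Definition~\ref{def:colorless-closure}) and Theorem~\ref{theo:A} that one round of the \IIS\/ model turns a configuration into a simplex of a barycentric-type subdivision, and that a colorless wait-free algorithm which decides within $i$ rounds from a configuration $\sigma$ is precisely a $0$-round colorless solution of the task that $\Cl^{(i)}(\Pi)$ induces on $\sigma$ and its faces. Consequently, a \emph{successful} round-reduction proof of the impossibility of $\Pi$ amounts exactly to the assertion that, for every $t\geq 0$, the task $\Cl^{(t)}(\Pi)$ is \emph{not} solvable in zero rounds by a colorless algorithm, i.e.\ there is no vertex map from the input complex of $\Cl^{(t)}(\Pi)$ to $\O$ respecting the iterated specification. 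This reformulation is the only feature of the round-reduction proof we use.

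Next I would build the required infinite sequence $\sigma_0,\sigma_1,\dots$ by a compactness argument over the tower $\Pi,\Cl(\Pi),\Cl^{(2)}(\Pi),\dots$. Call a configuration $\sigma$ reachable after $i$ rounds \emph{live} if the task that $\Cl^{(i)}(\Pi)$ induces on $\sigma$ remains not $0$-round solvable after any further number of applications of $\Cl$ — informally, $\sigma$ is a configuration from which $\Pi$ is still ``round-reduction impossible''. The live configurations form a finitely-branching tree under the one-round successor relation (each step merely refines a finite complex), so by König's lemma it suffices to prove: (i) some initial configuration $\sigma_0\in\I$ is live, and (ii) every live $\sigma_i$ has a live one-round successor. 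For (i): if no $\sigma\in\I$ were live, then, since $\I$ is finite and solvability in $r$ rounds is monotone in $r$, there would be a single level $T$ at which the task induced on \emph{every} $\sigma\in\I$ is $0$-round solvable; because the algorithms considered are \emph{colorless} — decisions depend only on the \emph{set} of input values gathered, not on ids or multiplicities — these per-simplex solutions glue into a single colorless $0$-round solution of $\Cl^{(T)}(\Pi)$, contradicting the reformulated hypothesis. Claim (ii) is the same localization step one level down: if all one-round successors of $\sigma_i$ were non-live, the corresponding eventually-$0$-round solutions would assemble, along the simplices of the subdivision of $\sigma_i$ and using colorlessness together with the way $\Cl$ pushes $\Delta$ forward along carriers, into a witness that the task induced on $\sigma_i$ is itself $0$-round solvable after finitely many $\Cl$'s, contradicting that $\sigma_i$ is live.

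Finally I would verify that the sequence obtained is a legal FLP-style proof: given any hypothetical colorless wait-free algorithm \textsc{alg} for $\Pi$, the prover queries the valencies of the initial configurations and of the relevant one-round successors exactly as the framework prescribes, but always picks the next configuration according to the precomputed live branch; this choice is always admissible because a live $\sigma_i$ induces a task that is not $0$-round solvable, so a $\Delta$-compliant \textsc{alg} cannot already have committed to a decision at $\sigma_i$ — in FLP terms, \textsc{alg} cannot output in $\sigma_i$. The sequence is infinite, so \textsc{alg} has an execution in which no process ever decides; as \textsc{alg} is arbitrary, this is an FLP-style impossibility proof for $\Pi$. I expect the main obstacle to be the gluing/localization step behind (i) and (ii): showing that non-$0$-round-solvability of $\Cl^{(t)}(\Pi)$ is ``witnessed locally'', forcing a persistently bad successor at every level. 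This is where colorlessness (decisions as functions of value-sets) and the precise definition of $\Cl$ — in particular how it transports $\Delta$ along carriers of the subdivision — must be used carefully; note that only the general-dimension direction of Theorem~\ref{theo:A} is needed, so no $1$-dimensionality assumption enters, but the argument is cleanest in that case.
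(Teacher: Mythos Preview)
Your approach has a genuine gap: the notion of ``live'' (configurations from which the induced task remains round-reduction impossible) is already too strong to survive past round~$0$, even for binary consensus. Take $\sigma_0=\{0,1\}$, the only candidate initial simplex; at level~$1$ the edge $e=\{\{0\},\{0,1\}\}\in\Bary^{(1)}(\sigma_0)$ carries the induced task ``vertex~$\{0\}$ must output~$0$, vertex~$\{0,1\}$ may output~$0$ or~$1$, and the pair of outputs must be a singleton'', which is trivially $0$-round solvable (both output~$0$). The other edge is symmetric, so no successor of~$\sigma_0$ is live and your K\"{o}nig tree dies immediately. More generally, the gluing you need for (i) and (ii) fails: per-simplex $0$-round (or $T$-round) solutions need not agree on shared vertices, and colorlessness does nothing to force agreement---a vertex $x$ lying in two facets may be sent to different outputs by the two local maps, and there is no canonical reconciliation. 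There is also a type mismatch in your definition: a round-$i$ configuration is a simplex of $\Bary^{(i)}(\I)$, whereas $\Cl^{(i)}(\Pi)$ still has input complex~$\I$, so ``the task that $\Cl^{(i)}(\Pi)$ induces on~$\sigma$'' is not defined for such~$\sigma$; and the ``precisely'' in your first paragraph invokes the converse direction of Theorem~\ref{theo:A}, which is only available in dimension~$1$.

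The paper's proof takes a genuinely different route, hinging on the structure theorem for the closure (Theorem~\ref{theo:connectivity}): in the fixed point~$\Pi^\ast$, each $\Delta^\ast(\sigma)$ has exactly the connected components of~$\Delta(\sigma)$, and every component is complete up to $\dim(\sigma)$. Non-$0$-round-solvability of~$\Pi^\ast$ then yields, \emph{using the hypothetical algorithm's own solo outputs}~$\delta(x)$, two inputs $x,x'$ with $\delta(x)$ and $\delta(x')$ in distinct connected components of $\Delta(\{x,x'\})$. One sets $\sigma_0=\{x,x'\}$ and calls an edge \emph{bivalent} when its valency meets two components of~$\Delta(\sigma_0)$; the inductive step is a one-line valency argument on the two edges of $\Bary^{(1)}(\sigma_t)$ (their valencies cover~$\val(\sigma_t)$ and overlap at the middle vertex, so one of them is bivalent). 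Crucially, the constructed sequence depends on the algorithm's reported valencies---it cannot be precomputed---and the invariant that propagates is a topological one (disconnection of~$\Delta(\sigma_0)$), not unsolvability of an induced subtask.
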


So, in particular, the fact that there is no round-reduction proof for the impossibility of solving set-agreement wait-free should not come as a surprise, since it is known that set-agreement has no extension-based impossibility proof~\cite{AlistarhAEGZ19,AttiyaCR20,BrusseE21}, which is a proof technique at least as strong as FLP-style proofs.
Yet, the situation is a bit subtle here: the results in~\cite{AlistarhAEGZ19,AttiyaCR20} rule out the existence of an FLP-style impossibility proof for solving set-agreement using general algorithms,
and~\cite{BrusseE21} proves a similar result for anonymous algorithms (where processes see a multi-set of the values stored in the memory and not a vector),
but these works do not necessarily rule out the existence of an FLP-style impossibility proof for solving set agreement using colorless algorithms.
The restriction to colorless algorithm allows the claimed algorithm \textsc{alg}
less flexibility regarding the valencies it announces to the prover, so an FLP-style impossibility proof when restricting \textsc{alg} to be colorless is not ruled out by the previous results.

Nevertheless, we were able to show that, for \emph{1-dimensional} colorless tasks, round-reduction proofs and FLP-style proofs have exactly the same power
(Corollary~\ref{cor:flp iff rr dim 1}).
Recall that, in 1-dimensional colorless tasks, $\I$ and $\O$ are graphs,
i.e., the $n$ processes can start with at most two different input values and output at most two different values.
This family includes binary consensus,
approximate agreement with binary inputs, and the colorless version of wait-free checkable tasks~\cite{FraigniaudRT13}, but not set-agreement.
1-dimensional colorless tasks are well studied, and they are known to be wait-free solvable if and only if they are $1$-resilient solvable,
both in the read/write model (with at least two processes)
and in the message-passing model (with at least three processes)~\cite{BiranMZ90,BorowskyGLR01,HerlihyR12}.
In fact, we not only show equivalence between round-reduction and FLP-style proof techniques for 1-dimensional colorless tasks, but we show that both are complete for these tasks
(Corollary~\ref{cor:round-reduction-complete} for the round-reduction technique
and
Corollary~\ref{cor:flp is complete for 1 dim} for FLP-style proofs).
Hence, if a 1-dimensional colorless task is not wait-free solvable by $n\geq 2$ processes, then this impossibility can be proved by both proof techniques.

\begin{theoremalpha}\label{theo:C}
The round-reduction and FLP-style proof techniques are both complete for 1-dimensional colorless tasks and wait-free colorless algorithms.
\end{theoremalpha}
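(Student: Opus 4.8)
The plan is to obtain Theorem~\ref{theo:C} as a direct consequence of the speedup theorem for the closure operator $\Cl$ (Theorem~\ref{theo:A}) together with the simulation of round-reduction proofs by FLP-style proofs (Theorem~\ref{theo:B}). Two preliminary facts are needed so that Theorem~\ref{theo:A} can be iterated starting from a $1$-dimensional input task. First, $\Cl$ maps $1$-dimensional colorless tasks to $1$-dimensional colorless tasks; this is read off from Definition~\ref{def:colorless-closure}, so that the first item of Theorem~\ref{theo:A} applies not only to $\Pi$ but to each iterate $\Cl^{(i)}(\Pi)$. Second, a $1$-dimensional colorless task is wait-free solvable by $n\geq 2$ processes running a colorless algorithm if and only if it is solvable in $t$ rounds for some finite $t\geq 0$; this is a compactness statement and follows from the topological characterization of colorless wait-free computability recalled in the introduction, realizing a $\Delta$-respecting continuous map $|\I|\to|\O|$ by a simplicial map out of a finite iterated subdivision of $\I$ via simplicial approximation.

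Granting these, the next step is a straightforward induction on $t$, using the $1$-dimensional equivalence in Theorem~\ref{theo:A}: for every $1$-dimensional colorless task $\Pi$ and every $t\geq 0$, $\Pi$ is solvable in $t$ rounds by $n$ processes running a colorless algorithm if and only if $\Cl^{(t)}(\Pi)$ is solvable in $0$ rounds by $n$ such processes. The base case $t=0$ holds since $\Cl^{(0)}(\Pi)=\Pi$, and the induction step applies the first item of Theorem~\ref{theo:A} to the $1$-dimensional task $\Cl^{(t)}(\Pi)$, which is legitimate by the first preliminary fact.

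Completeness then follows. Let $\Pi$ be a $1$-dimensional colorless task that is \emph{not} wait-free solvable by $n\geq 2$ processes running a colorless algorithm. By the second preliminary fact, $\Pi$ is not solvable in $t$ rounds for any finite $t\geq 0$, hence by the equivalence above $\Cl^{(t)}(\Pi)$ is not solvable in $0$ rounds for any $t\geq 1$. This is precisely the condition under which the round-reduction proof derived from Theorem~\ref{theo:A} succeeds in certifying that $\Pi$ is unsolvable; thus the round-reduction technique is complete for $1$-dimensional colorless tasks (Corollary~\ref{cor:round-reduction-complete}). Applying Theorem~\ref{theo:B} to this round-reduction proof produces an FLP-style proof of the same impossibility, so the FLP-style technique is complete as well (Corollary~\ref{cor:flp is complete for 1 dim}). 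Since both techniques are sound and both are complete for $1$-dimensional colorless tasks, a $1$-dimensional colorless task admits a round-reduction impossibility proof if and only if it admits an FLP-style one (Corollary~\ref{cor:flp iff rr dim 1}); soundness of the round-reduction technique, although not needed for completeness, is itself recovered from the second item of Theorem~\ref{theo:A} applied iteratively, since $\Cl^{(t)}(\Pi)$ being $0$-round solvable for some $t$ forces $\Pi$ to be solvable in $t$ rounds.

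The step I expect to require the most care is the second preliminary fact, namely that for $1$-dimensional colorless tasks ``not wait-free solvable'' is genuinely equivalent to ``not solvable within any finite number of rounds'': this is where one must invoke the topology of the colorless protocol complex and simplicial approximation, rather than a purely combinatorial argument. Everything else is bookkeeping over the iterates of $\Cl$, contingent only on $1$-dimensionality being preserved, which is immediate from the construction of $\Cl$.
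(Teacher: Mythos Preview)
Your argument is correct and follows the paper's route: iterate the 1-dimensional equivalence in Theorem~\ref{theo:A} (the paper packages this via the fixed point $\Pi^\ast$ and Lemma~\ref{lem:0round}, but the content is the same) to obtain Corollary~\ref{cor:round-reduction-complete}, then invoke Theorem~\ref{theo:B} for FLP-style completeness.

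Two minor corrections. Your ``second preliminary fact'' is not where the work lies: for completeness you only use the trivial implication ($t$-round solvable $\Rightarrow$ wait-free solvable), so no simplicial approximation is needed; the non-trivial direction is Lemma~\ref{lem:colorless-solvability} in the paper, and it would only enter if you also wanted soundness. Also, your final clause on soundness has the implication backwards---iterating the second item of Theorem~\ref{theo:A} gives ``$\Pi$ solvable in $t$ rounds $\Rightarrow$ $\Cl^{(t)}(\Pi)$ solvable in $0$ rounds,'' not the converse you wrote.
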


Theorem~\ref{theo:C} follows from the fact that, for 1-dimensional colorless tasks (and colorless algorithms), our speedup theorem (Theorem~\ref{theo:A}) also provides a \emph{necessary} condition, that is, for every 1-dimensional colorless task $\Pi$, every $n\geq 2$, and every $t\geq 1$, $\Pi$ is solvable in $t$ rounds by $n$ processes running a wait-free colorless algorithm in the \IIS\/ model \emph{if and only if} $\Cl(\Pi)$ is solvable in $t-1$ rounds by $n$ processes running a wait-free colorless algorithm in the \IIS\/ model.
This if-and-only-if condition provides a mechanical way for deciding whether a given 1-dimensional colorless task $\Pi$ is solvable.
Indeed, the transformation $\Cl$ used in Theorem~\ref{theo:A} has a desirable property: it preserves the number of input and output values (i.e., the vertices in $\I$ and $\O$), and it may just potentially add some combinations of output values that were not legal in~$\Pi$ but become legal in $\Cl(\Pi)$.
As a consequence, iterating $\Cl$ starting from $\Pi$ necessarily leads to a fixed point $\Pi^\ast$ for $\Cl$, i.e., $\Cl(\Pi^\ast)=\Pi^\ast$,
after a bounded number of iterations.
It follows that a 1-dimensional colorless task $\Pi$ is wait-free solvable in the \IIS\/ model if and only if $\Pi^\ast$ is wait-free solvable in zero rounds, which is decidable.  The completeness of the FLP-style proof technique follows. Indeed, if a 1-dimensional colored task $\Pi$ is not wait-free solvable, then $\Pi$ has a round-reduction impossibility proof (by computing the fixed point $\Pi^\ast$, and showing that $\Pi^\ast$  is not solvable in zero rounds), from which it follows, thanks to Theorem~\ref{theo:B}, that $\Pi$ has an FLP-style impossibility proof.

\subsection{Applications}

We illustrate the concepts and results introduced in this paper by applying them to the vast class of \emph{covering tasks},
whose colored version was introduced and studied in~\cite{FraigniaudRT13}
under the name \emph{locality-preserving} tasks
	and further studied in~\cite{DitmarschGLLR21}.
To get the intuition of such a tasks, it is easier to consider 1-dimensional covering tasks, for which $\I$ and $\O$ are graphs.
Recall that for two connected simple (i.e., no self-loops nor multiple edges) graphs $G$ and $H$, and a function $f:V(H)\to V(G)$, the pair $(H,f)$ is a covering of~$G$ if $f$ is an homorphism (i.e., it preserves edges) and, for every $v\in V(H)$,
the restriction of $f$ to $N_H[v]$ is a one-to-one mapping $f:N_H[v]\to N_G[f(v)]$.
For instance, for $C_3=(v_0,v_1,v_2)$, $C_6=(u_0,\dots,u_5)$, and $f:V(C_6)\to V(C_3)$ defined as $f(u_i)=v_{i\bmod 3}$, $(C_6,f)$ is a covering of $C_3$. A covering $(\O,f)$ of $\I$ induces a task $\Pi=(\I,\O,\Delta)$ where $\Delta$ is essentially defined as~$f^{-1}$. For higher dimensional colorless tasks,  $\I$ and $\O$ are connected simplicial complexes, and $f$ must be simplicial, but the general idea is the same~\cite{Rotman73}. A covering $(\O,f)$ of $\I$  is trivial if  $\I$ and $\O$  are isomorphic.
It is known that no non-trivial covering tasks can be solved wait-free in the \IIS\/ model~\cite{FraigniaudRT13}.
Here, we consider a colorless version of covering tasks, and study impossibility proofs for solving them using colorless algorithms.

\begin{theoremalpha}\label{theo:D}
Every non-trivial covering task admits an FLP-style impossibility proof for $n\geq2$ processes running wait-free colorless algorithms.
\end{theoremalpha}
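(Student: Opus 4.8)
The strategy is to produce a round-reduction impossibility proof for any non-trivial covering task $\Pi$ and then invoke Theorem~\ref{theo:B}.
Following the recipe recalled after Theorem~\ref{theo:A}, such a proof amounts to showing that $\Cl^{(t)}(\Pi)$ is \emph{not} solvable in zero rounds for \emph{every} $t\ge 1$: chaining the dimension-free direction of Theorem~\ref{theo:A} ($\Pi$ solvable in $t$ rounds $\Rightarrow$ $\Cl(\Pi)$ solvable in $t-1$ rounds) $t$ times, non-$0$-round solvability of $\Cl^{(t)}(\Pi)$ rules out solvability of $\Pi$ in $t$ rounds, and since this holds for all $t$ and wait-free solvability in the \IIS\ model entails solvability within finitely many rounds, $\Pi$ is not wait-free solvable.
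Theorem~\ref{theo:B} then converts this round-reduction proof into an FLP-style one.

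First I would record the zero-round characterization for colorless algorithms: a colorless task $(\I',\O',\Delta')$ is solvable in zero rounds iff there is a simplicial map $\delta\colon\I'\to\O'$ carried by $\Delta'$ (a process with input $v$ outputs $\delta(v)$; consistency on every input simplex is exactly simpliciality together with being carried by $\Delta'$).
For a non-trivial covering task $\Pi=(\I,\O,\Delta)$ with covering map $f\colon\O\to\I$ and $\Delta=f^{-1}$, such a $\delta$ would satisfy $f\circ\delta=\mathrm{id}$ on vertices, i.e.\ be a simplicial section of $f$; on geometric realizations this gives a continuous section of the covering $|f|$ over the connected space $|\I|$, forcing $|f|$ to be a homeomorphism and hence $\I\cong\O$, contradicting non-triviality (cf.~\cite{Rotman73}).
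So $\Pi$ is not $0$-round solvable.

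The crux is then to control the effect of $\Cl$ on a covering task.
I expect the key lemma to be that \emph{covering tasks are fixed points of $\Cl$}, i.e.\ $\Cl(\Pi)=\Pi$, whence $\Cl^{(t)}(\Pi)=\Pi$ is not $0$-round solvable for all $t\ge1$ by the previous paragraph.
Proving this requires unfolding Definition~\ref{def:colorless-closure} and checking that, because $\Delta=f^{-1}$ is already the most permissive output relation compatible with the local-bijection condition of a covering, none of the output combinations that $\Cl$ could add to $\O$ or to the values of $\Delta$ is in fact new.
Should the closure genuinely enlarge $\O$ in some cases, the fallback is to argue directly that any carried simplicial section of $f$ with image in $\Cl(\Pi)$ restricts to one with image in $\Pi$, and hence cannot exist, so that $\Cl^{(t)}(\Pi)$ is non-$0$-round solvable regardless.
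I expect this step to be the main obstacle: pinning down what $\Cl$ does to the combinatorial structure of a covering, and ruling out that the closure process ever creates a section.

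Finally, I would note that in dimension $1$ the argument is smoother, since there $\Cl$ has the stabilizing behaviour discussed after Theorem~\ref{theo:C}: the iteration reaches a fixed point $\Pi^\ast$, which one shows is still a non-trivial covering and therefore not $0$-round solvable.
The plan above is meant to handle covering tasks of arbitrary dimension uniformly; once the key lemma is in place, Theorem~\ref{theo:D} follows immediately from the reduction in the first paragraph.
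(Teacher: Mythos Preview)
Your plan is correct and matches the paper's proof exactly in structure: establish $\Cl(\Pi)=\Pi$, show $\Pi$ is not $0$-round solvable, and invoke Theorem~\ref{theo:B}. The step you flag as the main obstacle is in fact immediate once Theorem~\ref{theo:connectivity} is in hand: for every $\sigma\in\I$, $\Delta(\sigma)$ is the disjoint union of the sheets over~$\sigma$, each a full simplex of dimension $\dim(\sigma)$, so every connected component of $\Delta(\sigma)$ is already complete up to dimension $\dim(\sigma)$ and the closure adds nothing---no unfolding of Definition~\ref{def:colorless-closure} or fallback argument is needed. For non-$0$-round solvability your topological section argument (a simplicial section of the covering forces $\I\cong\O$, using that $\O$ is connected) is correct and is the clean geometric counterpart of the paper's combinatorial path-lifting argument in Theorem~\ref{thm:covering impossibility}.
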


The proof is based on showing that, for every covering task~$\Pi$, the transformation $F$ used in our speedup theorem satisfies $F(\Pi)=\Pi$, i.e., $\Pi$ is itself a fixed point for~$F$.
As a consequence, since $\Pi$ is not solvable in zero rounds (unless $\I$ and $\O$ are isomorphic, i.e., the task~$\Pi$ is trivial), there is a round-reduction impossibility proof for~$\Pi$
(Theorem~\ref{thm:covering impossibility}),
and the existence of an FLP-style impossibility proof for~$\Pi$ then follows from Theorem~\ref{theo:B}.
This last fact is of particular interest, as it shows that FLP-style proofs are not limited to cases where $\O$ is disconnected
or to problems that are unsolvable even when restricted to a single input simplex and its faces (as is the case for consensus and approximate agreement).

%


\section{Model and Definitions}
\label{sec:model}

We consider the  wait-free  iterated immediate snapshots model (IIS).
This model and its variants have been frequently used e.g.~\cite{BorowskyG97,iterated2010,KRH18} due to its simplicity, while
being equivalent to the usual wait-free read/write shared memory model for task solvability~\cite{GafniR10,BorowskyG97}.
Furthermore, it is known that as far as colorless task solvability is concerned, one can assume
\emph{colorless
computation} without loss of generality~\cite{bookHerlihyKR2013,HerlihyRRS17}, by which we mean
that in each round of computation processes do no consider which process wrote a value, nor by how many processes it was written.
We next provide a brief overview of the model.

Processes communicate through a sequence of shared memory objects, and the computation is split into rounds.
In the $i$-th round, process $p$ writes a value to the $p$-th location of the $i$-th object, and then takes a snapshot of all the $i$-th object.
The \emph{view} of a process at the end of a round is the set of values it read from the object, without the information of which process wrote which value, nor by how many processes it was written to the memory.
As common in lower bound proofs, we only consider full information protocols: in each round, each process writes its entire state (or equivalently, the view read from the previous object) to the memory.

\subsection{Colorless Tasks}

In this paper we consider colorless tasks~\cite{LynchR96}.
See~\cite[Chapter 4]{bookHerlihyKR2013} and~\cite{HerlihyRRS17} for an overview.

\begin{definition}
A \emph{colorless} task $\Pi=(\I,\O,\Delta)$ is defined by an input simplicial complex $\I$, an output simplicial complex $\O$, and an input-output specification $\Delta:\I\to 2^\O$ mapping every simplex $\sigma\in\I$ to a sub-complex $\Delta(\sigma)$ of~$\O$ with dimension at most~$\dim(\sigma)$.
\end{definition}

The semantics of a colorless task $\Pi=(\I,\O,\Delta)$ is that every vertex of $\I$ is an input value, and every vertex of $\O$ is an output value. A set of processes may start with different input values in~$V(\I)$, as long as the set~$\sigma$ of these input values belongs to~$\I$. To solve the task, it is required that any collection of processes starting with input values forming a set~$\sigma\in\I$  outputs only output values in the vertices $V(\O)$, as long as the set $\tau$ of these output values forms a simplex in $\Delta(\sigma)$.

\subsection{Examples} 
The next two colorless tasks will serve as running examples in the rest of the paper.

\paragraph*{The Set Agreement Tasks}
Set agreement is a well studied relaxation of the classical consensus task.
Let $k\geq 2$. \emph{Set-agreement} with input set $[k]=\{1,\dots,k\}$ is the colorless task $\SA_k=(\I,\O,\Delta)$
where $\I=\{\sigma\subseteq [k]\mid \sigma\neq\varnothing\}$, $\O=\{\tau\subseteq [k]\mid (\tau\neq\varnothing) \land (\tau\neq [k])\}$, and, for every $\sigma\in \I$,
\[
\Delta(\sigma)=\{\tau\in \O\mid \tau\subseteq \sigma\}.
\]
Said differently, $\Delta(\sigma)=\{\tau\subseteq\sigma\}$ if $\dim(\sigma)<k-1$, and $\Delta(\sigma)=\O$ otherwise. 
$\SA_k$ is solvable (in zero rounds) for $n<k$ processes, but not solvable for $n\geq k$ processes~\cite{HerlihyS99,SaksZ93,BorowskyG93}.

\paragraph*{The Hexagone Task}
\begin{figure}[tb]
	\centering
	\includegraphics[width=8cm]{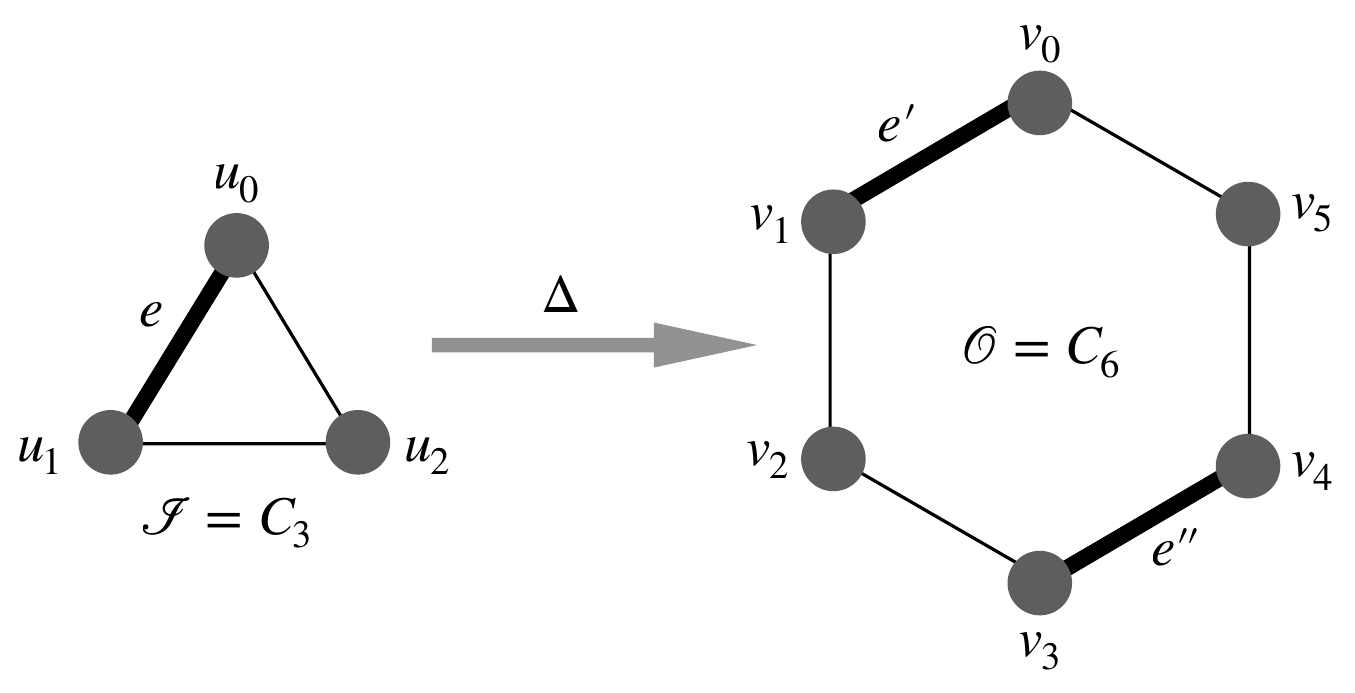}
	\caption{\sl The Hexagone Task. In particular, $\Delta(e)$ is the complex $e'\cup e''$.}
	\label{fig:hexagone}
\end{figure}
The Hexagone Task is one basic example of the colorless \emph{covering} tasks, thoroughly studied in Section~\ref{subsec:covering-tasks}.
For every $t\geq 3$, let $C_t$ denotes the $t$-node cycle. 
The \emph{hexagone task} is the task $\HX=(C_3,C_6,\Delta)$ where $C_3=(u_0,u_1,u_2)$, $C_6=(v_0,v_1,\dots,v_5)$
and $\Delta$ is defined as follows
(see Figure~\ref{fig:hexagone}). 
For every $i\in\{0,1,2\}$,
\[
\Delta(u_i)=\{\{v_i\},\{v_{i+3}\}\} \;\mbox{and}\; \Delta(\{u_i,u_{i+1 \bmod 3}\})=\{\{v_i,v_{i+1}\},\{v_{i+3},v_{i+4\bmod 6}\}\}, 
\]
where formally $\Delta(\{u_i,u_{i+1 \bmod 3}\})$ also contains all the vertices contained it its edges, i.e. $\{v_i\},\{v_{i+1}\},\{v_{i+3}\},\{v_{i+4\bmod 6}\}$.
The map $\Delta$ is the inverse of the maps
$f:V(C_6)\to V(C_3)$ defined as $f(u_i)=v_{i\bmod 3}$, 
where $(C_6,f)$ is a covering of $C_3$.

\subsection{Colorless Algorithms}

The solvability of a colorless task may depend on the number $n$ of processes involved in the computation.
Remarkably, it is enough to consider \emph{colorless computation} for colorless tasks, according to the following result (see, e.g.,~\cite{HerlihyRRS17}), that summarizes and formalizes what we need to know about the model of computation\footnote{
Results similar to this one are known for several models of computation~\cite{bookHerlihyKR2013}.
For instance,
a corresponding result is known for Byzantine failures~\cite[Theorem~6.5.1]{bookHerlihyKR2013} and dependent failures~\cite[Theorem 4.3]{HR2010adversaries}
with an adversary of core size $c=2$.
For dependent failures in the case of wait-free computation, the theorem states that a colorless task is solvable if and only if there exists a continuous map between geometric realization $f:\Skel_{c-1}(\I)\rightarrow \O$  carried by~$\Delta$.
}.
The result holds both for the wait-free read/write memory model and for its iterated version,
by the equivalence proved in~\cite{GafniR10}.

\begin{figure}[!htb]
\centering
\includegraphics[width=8cm]{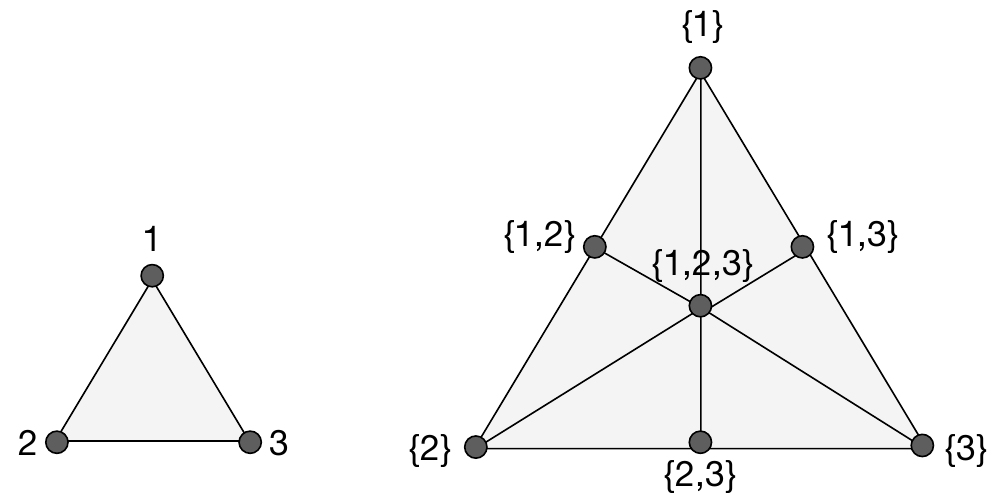}
\caption{\sl Barycentric subdivision}
\label{fig:barycentric}
\end{figure}

\begin{lemma}\label{lem:colorless-solvability}
A colorless task $\Pi=(\I,\O,\Delta)$ is read/write solvable by $n$~processes running a wait-free algorithm if and only if there exists $t\geq 0$ and a simplicial map
\[
f:\Bary^{(t)}(\Skel_n(\I))\to \O
\]
that agrees with~$\Delta$, i.e., for every $\sigma\in\I$, $f(\Bary^{(t)}(\Skel_n(\sigma)))\subseteq\Delta(\sigma)$.
Furthermore, $\Pi$ is $1$-round solvable by $n$ processes running a wait-free colorless algorithm in the IIS model if and only if there is a simplicial map
\[
f:\Bary(\Skel_n(\I))\to \O
\]
that agrees with~$\Delta$.
\end{lemma}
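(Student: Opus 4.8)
The plan is to reduce the statement to the \emph{colorless asynchronous computability theorem} (see~\cite{HerlihyRRS17} and~\cite[Ch.~4]{bookHerlihyKR2013}) together with the standard identification of the colorless protocol complex of the \IIS\ model with iterated barycentric subdivision. The single geometric observation doing all the work is the following: one round of colorless immediate snapshots started from an input simplex~$\sigma$ yields views that are exactly the nonempty subsets of~$\sigma$, and a set of views is mutually compatible in a single execution precisely when it forms a chain under inclusion (the immediate-snapshot ordering); hence the one-round colorless protocol complex on a complex~$\K$ is the barycentric subdivision~$\Bary(\K)$, and iterating $t$ rounds gives $\Bary^{(t)}(\K)$. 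Restricting to executions in which at most~$n$ distinct input values appear, this complex is $\Bary^{(t)}(\Skel_n(\I))$. Throughout I use that the read/write and \IIS\ models agree on task solvability~\cite{GafniR10,BorowskyG97}.

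For the ``if'' direction of the first claim, given a simplicial map $f:\Bary^{(t)}(\Skel_n(\I))\to\O$ agreeing with~$\Delta$, I would let the processes run $t$ rounds of colorless full-information \IIS\ and have each process output $f(v)$, where $v$ is its final view. In any execution whose actual set of inputs is $\sigma\in\I$, the views present after round~$t$ form a simplex $\tau\in\Bary^{(t)}(\Skel_n(\sigma))$, so the set of outputs is $f(\tau)\subseteq\Delta(\sigma)$ by agreement with $\Delta$; this is a simplex of~$\O$, so the algorithm solves~$\Pi$ in the \IIS\ model, hence~$\Pi$ is read/write solvable.

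For the ``only if'' direction, suppose~$\Pi$ is read/write solvable by~$n$ processes running a wait-free algorithm. By the model equivalence it is \IIS-solvable, and since~$\Pi$ is colorless one may replace the algorithm by a colorless one~\cite{HerlihyRRS17,bookHerlihyKR2013}; after normalizing so that all processes decide at a common round~$t$ (faster processes simply carry their decisions forward), the decision function is a map from the vertices of $\Bary^{(t)}(\Skel_n(\I))$ to $V(\O)$. Correctness on every execution forces this map to be simplicial and to agree with~$\Delta$: the views arising in one execution from input set~$\sigma$ form a simplex of $\Bary^{(t)}(\Skel_n(\sigma))$, and the corresponding outputs must form a simplex of~$\Delta(\sigma)$. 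The ``furthermore'' statement is then exactly the case $t=1$: by full information, a colorless wait-free algorithm solving~$\Pi$ in one \IIS\ round is precisely a simplicial decision map on the one-round colorless protocol complex $\Bary(\Skel_n(\I))$, legal iff it agrees with~$\Delta$, and conversely any such map is implemented by a single round of colorless immediate snapshot followed by applying~$f$.

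I expect the main obstacle to be essentially expository rather than mathematical: the statement bundles two results whose complete proofs are standard but long — the reduction from arbitrary wait-free algorithms to colorless ones for colorless tasks, and the computation of the colorless \IIS\ protocol complex as an iterated barycentric subdivision — so the task is to cite these precisely and to dispatch the mild mismatches (arbitrary versus fixed decision round; ``one round'' of read/write not literally corresponding to ``one round'' of \IIS, which is why the second claim is phrased inside the \IIS\ model) without reproving the topology. Everything else is the routine back-and-forth translation between an algorithm and its simplicial decision map.
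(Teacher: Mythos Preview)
Your proposal is correct and matches the paper's treatment: the paper does not prove this lemma at all but states it as a known result, citing~\cite{HerlihyRRS17} (and~\cite[Ch.~4]{bookHerlihyKR2013}) for the colorless computability characterization and~\cite{GafniR10} for the read/write--\IIS\ equivalence. Your sketch is precisely the standard argument behind those citations, including the identification of the one-round colorless protocol complex with $\Bary(\Skel_n(\I))$, so there is nothing to add.
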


In the above, $\Bary^{(t)}$ denotes $t$~successive applications of the barycentric subdivision (see Fig.~\ref{fig:barycentric}), and $\Skel_n$ denotes the $(n-1)$-dimensional skeleton operator, i.e., $\Skel_n(\I)$ is the subcomplex of~$\I$ (resp., of~$\sigma$) including all simplices of~$\I$ (resp., all faces of~$\sigma$) with dimension at most~$n-1$.
When considering colored (general) tasks, 
a similar result holds when using chromatic simplicial maps and the \emph{standard chromatic subdivision};
for colorless tasks, however, the barycentric subdivision suffices since if a colorless task is solvable then it is solvable by an algorithm ignoring multiplicities of identical views (or inputs) in the snapshots~\cite{HerlihyRRS17}.


\section{Round-Reduction Proofs}
\label{sec:round-reduction-proofs}

This section is essentially dedicated to the proof of Theorem~\ref{theo:A}. 
The task transformation $\Cl$ in this theorem is based on a \emph{closure} operator similar to the one in~\cite{FPR2022}, where  
the main difference is that our closure operator is required to be colorless, and we stress that our result is not implied by the (colored) theorem of~\cite{FPR2022}.
That is, given a class $\mathcal{A}$ of algorithms, requiring the closure operator to be in the class may not be sufficient for extending the speedup theorem in~\cite{FPR2022} to apply to algorithms in~$\mathcal{A}$. We illustrate this in Appendix~\ref{app:comparison-based} for \emph{comparison-based} algorithms, an important class of algorithms used for studying tasks such as renaming and weak symmetry-breaking. 
Using a comparison-based closure operator does not suffice for deriving a speedup theorem for comparison-based algorithms. 
On the other hand, in this section we show that for colorless algorithms, restricting the closure operator to be colorless suffices.

\subsection{Colorless Closure}
\label{subsec:closure}

\subsubsection{Definition of Closure}

We first rephrase the notions of local tasks introduced in~\cite{FPR2022} in the context of colorless tasks.

\begin{definition}
Let $\Pi=(\I,\O,\Delta)$ be a colorless task, let $\sigma\in\I$, and let $\tau\subseteq V(\Delta(\sigma))$. Let us consider $\tau$ as a simplicial complex (with a unique facet). The \emph{local task}  with respect to $\sigma$ and $\tau$ is the colorless task $\Pi_{\tau,\sigma}=(\tau,\Delta(\sigma),\Delta_{\tau,\sigma})$ where, for every face $\tau'$ of $\tau$,
\[
\Delta_{\tau,\sigma}(\tau')=
\left\{\begin{array}{ll}
v & \mbox{if $\tau'=v$ is a vertex,}\\
\Skel_{\dim(\tau')}(\Delta(\sigma)) & \mbox{otherwise.}
\end{array}\right.
\]
\end{definition} 

Note that $\Pi_{\tau,\sigma}$ is a well-defined colorless task, because $\Skel_{\dim(\tau')}$ guarantees that the output complex for $\tau'$ has dimension at most $\dim(\tau')$;
this is true even if $\tau$ is not a simplex of $\Delta(\sigma)$, as seen in the second example in Section~\ref{examples: colorless closure}. 
Note also that the validity constraint of $\Pi_{\tau,\sigma}$ is just that if all processes start with the same input~$v$, i.e., if they start from the same vertex $v\in V(\tau)$, then they must all output~$v$. 
Without this constraint, the processes are only constrained to output values forming a legal set $\sigma'$ of outputs w.r.t.~$\sigma$, i.e., a set $\sigma'\in\Delta(\sigma)$ (and $\dim(\sigma')\leq \dim(\tau')$).

As opposed to the general (chromatic) tasks, which each has a fixed maximum number of participating processes, colorless tasks are defined for any number of processes. In particular, a colorless task may be solvable for a certain number of processes, but not for another number of processes. The definition below refers to solving local tasks with a prescribed number of processes.

\begin{definition}\label{def:colorless-closure}
The \emph{colorless closure} of a colorless task $\Pi=(\I,\O,\Delta)$ is the colorless task $\Cl(\Pi)=(\I,\O',\Delta')$ where $V(\O')=V(\O)$, and, for every~$\sigma\in \I$, and every non-empty set $\tau\subseteq V(\O)$, we set $\tau\in \Delta'(\sigma)$ if $\tau\subseteq V(\Delta(\sigma))$ and $\Pi_{\tau,\sigma}$ is solvable in one round by $\dim(\tau)+1$ processes running a colorless algorithm. The simplices of $\O'$ are the images of~$\Delta'$, and all their faces.
\end{definition}
Note that this definition is constructive, i.e., one can check the $1$-round solvability of $\Pi_{\tau,\sigma}$, by Lemma~\ref{lem:colorless-solvability}.
The closure operator also has the nice property that it does not change the allowed input and output values, and the only difference between $\Pi$ and $\Cl(\Pi)$ is in the addition of some allowed combinations of output values.

For a simplex $\tau\in\Delta(\sigma)$, the local task $\Pi_{\tau,\sigma}$ is solvable in 0~rounds, by having each process starting with input $v\in V(\tau)$ output~$v$. 
It follows that if $\tau\in\Delta(\sigma)$ then $\tau\in\Delta'(\sigma)$, and therefore, for every $\sigma\in \I$, $\Delta(\sigma)\subseteq \Delta'(\sigma)$.
As a consequence, the colorless closure $\Cl(\Pi)$ of a task~$\Pi$ is not more difficult to solve than $\Pi$. Whether or not $\Cl(\Pi)$ is \emph{simpler} to solve than $\Pi$ is one of the foci of this paper. Before going further, we establish hereafter that the input-output specification $\Delta'$ of the colorless closure of a colorless task is a \emph{carrier} map, which is a condition that is often required for a task~\cite{bookHerlihyKR2013}.

\subsubsection{The I/O-Specification of a Colorless Closure is a Carrier Map}

Let $\Pi=(\I,\O,\Delta)$ be a colorless task. Recall that $\Delta$ is a \emph{carrier} map if, for every $\sigma$ and $\sigma'$ in~$\I$,
$
\sigma\subseteq\sigma'\Longrightarrow \Delta(\sigma)\subseteq\Delta(\sigma'),
$
where the latter inclusion must be read as $\Delta(\sigma)$ is a subcomplex of $\Delta(\sigma')$.
Being a carrier map is not necessary for $\Pi$ to be solvable. However, if two simplices $\sigma$ and $\sigma'$ in~$\I$ satisfy $\sigma\subseteq\sigma'$ and $\Delta(\sigma)\smallsetminus \Delta(\sigma')\neq\varnothing$, the output values outside $\Delta(\sigma')$ cannot be used for a set of processes starting with input~$\sigma'$ since these output values cannot be extended in case processes with inputs in~$\sigma'\smallsetminus\sigma$ eventually participate later.
Therefore, we may as well remove all simplices from $\Delta(\sigma)$ that are not in $\Delta(\sigma')$, and restrict ourselves to input-output specifications that are carrier maps.

\begin{lemma}
	\label{lem: closure is carrier}
Let $\Pi=(\I,\O,\Delta)$ be a colorless task, and let $\Cl(\Pi)=(\I,\O',\Delta')$. If $\Delta$ is a carrier map then $\Delta'$ is a carrier map.
\end{lemma}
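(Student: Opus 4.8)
The plan is to show directly that $\Delta'$ respects inclusions, using the assumption that $\Delta$ is a carrier map together with a "monotonicity" property of local-task solvability. So fix $\sigma\subseteq\sigma'$ in $\I$, and let $\tau$ be a non-empty set with $\tau\in\Delta'(\sigma)$; I must show $\tau\in\Delta'(\sigma')$. By definition of $\Cl$ this means two things: first $\tau\subseteq V(\Delta(\sigma))$, and second $\Pi_{\tau,\sigma}$ is solvable in one round by $\dim(\tau)+1$ processes running a colorless algorithm. The first requirement is immediate from $\Delta$ being a carrier map: $\sigma\subseteq\sigma'$ gives $\Delta(\sigma)\subseteq\Delta(\sigma')$, hence $\tau\subseteq V(\Delta(\sigma))\subseteq V(\Delta(\sigma'))$.

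The substantive step is the second requirement: from $1$-round solvability of $\Pi_{\tau,\sigma}=(\tau,\Delta(\sigma),\Delta_{\tau,\sigma})$ by $\dim(\tau)+1$ processes, deduce $1$-round solvability of $\Pi_{\tau,\sigma'}=(\tau,\Delta(\sigma'),\Delta_{\tau,\sigma'})$ by the same number of processes. The key observation is that these two local tasks have the \emph{same} input complex $\tau$ and the \emph{same} number of processes $\dim(\tau)+1$; only the output complex and specification change, and they change \emph{monotonically}: since $\Delta(\sigma)\subseteq\Delta(\sigma')$, for every face $\tau'$ of $\tau$ we have $\Skel_{\dim(\tau')}(\Delta(\sigma))\subseteq\Skel_{\dim(\tau')}(\Delta(\sigma'))$, so $\Delta_{\tau,\sigma}(\tau')\subseteq\Delta_{\tau,\sigma'}(\tau')$ (and on vertices both specifications coincide, sending $v$ to $v$). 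Thus $\Pi_{\tau,\sigma'}$ is a "relaxation" of $\Pi_{\tau,\sigma}$ with identical inputs: any algorithm solving $\Pi_{\tau,\sigma}$ already solves $\Pi_{\tau,\sigma'}$, because the decision map one obtains from Lemma~\ref{lem:colorless-solvability} — a simplicial map $f\colon\Bary(\Skel_{\dim(\tau)+1}(\tau))\to\Delta(\sigma)\subseteq\Delta(\sigma')$ agreeing with $\Delta_{\tau,\sigma}$ — is also a simplicial map into $\Delta(\sigma')$ that agrees with $\Delta_{\tau,\sigma'}$, since $\Delta_{\tau,\sigma}(\tau')\subseteq\Delta_{\tau,\sigma'}(\tau')$ for all faces $\tau'$. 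Hence $\Pi_{\tau,\sigma'}$ is $1$-round solvable by $\dim(\tau)+1$ colorless processes, and therefore $\tau\in\Delta'(\sigma')$.

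Putting the two parts together yields $\Delta'(\sigma)\subseteq\Delta'(\sigma')$ whenever $\sigma\subseteq\sigma'$; applying this on facets and closing under faces (the simplices of $\O'$ are by definition the images of $\Delta'$ together with all their faces) gives that $\Delta'(\sigma)$ is a subcomplex of $\Delta'(\sigma')$, which is exactly the carrier-map condition.

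I expect the main (and really the only) obstacle to be the bookkeeping around the local-task definition — namely checking carefully that the \emph{input} side of $\Pi_{\tau,\sigma}$ does not depend on $\sigma$ beyond the implicit constraint $\tau\subseteq V(\Delta(\sigma))$, and that the skeleton operator $\Skel_{\dim(\tau')}$ is itself monotone in its complex argument, so that enlarging the output complex from $\Delta(\sigma)$ to $\Delta(\sigma')$ genuinely enlarges (weakly) the specification $\Delta_{\tau,\cdot}(\tau')$ face-by-face. Once that monotonicity is in hand, the reuse of the solving simplicial map is essentially formal, via Lemma~\ref{lem:colorless-solvability}.
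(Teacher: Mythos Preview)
Your proposal is correct and follows essentially the same approach as the paper: fix $\sigma\subseteq\sigma'$ and $\tau\in\Delta'(\sigma)$, use $\Delta(\sigma)\subseteq\Delta(\sigma')$ to see that the same one-round solving map $g:\Bary(\tau)\to\Delta(\sigma)\subseteq\Delta(\sigma')$ for $\Pi_{\tau,\sigma}$ also agrees with $\Delta_{\tau,\sigma'}$, hence $\tau\in\Delta'(\sigma')$. If anything, your write-up is slightly more careful than the paper's (you explicitly verify $\tau\subseteq V(\Delta(\sigma'))$ and the monotonicity of $\Skel_{\dim(\tau')}$, points the paper glosses over).
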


\begin{proof}
Consider two simplices $\sigma,\sigma'\in\I$ satisfying $\sigma\subseteq\sigma'$, and let $\tau\in\Delta'(\sigma)$.
By definition, the local task $\Pi_{\tau,\sigma}=(\tau,\Delta(\sigma),\Delta_{\tau,\sigma})$ is solvable in one round,
by a simplicial map
${g:\Bary^{(1)}(\tau)\to\Delta(\sigma)}$ that agrees with $\Delta_{\tau,\sigma}$.
To show that $\tau\in\Delta'(\sigma')$, we show that the local task $\Pi_{\tau,\sigma'}=(\tau,\Delta(\sigma'),\Delta_{\tau,\sigma'})$ is solvable in one round,  using the same map $g$.
Since $\Delta(\sigma)\subseteq\Delta(\sigma')$, $g$ is a simplicial map from $\Bary^{(1)}(\tau)$ to $\Delta(\sigma')$.
To see that $g$ agrees with $\Delta_{\tau,\sigma'}$, let $\tau'\subseteq\tau$. If $\dim(\tau')> 0$, then
$
g(\tau')\in \Delta_{\tau,\sigma}(\tau')=\Delta(\sigma)\subseteq \Delta(\sigma')=\Delta_{\tau,\sigma'}(\tau').
$
If $\tau'=\{v\}$ is a vertex,
 then
$
g(v)\in \Delta_{\tau,\sigma}(v)=v= \Delta_{\tau,\sigma'}(v).
$
It follows that $g$ agrees with $\Delta'_{\tau,\sigma'}$, and thus $\tau\in\Delta'(\sigma')$, from which we conclude that ${\Delta'(\sigma)\subseteq \Delta'(\sigma')}$, i.e., $\Delta'$ is a carrier map.
\end{proof}

\subsubsection{Examples}
\label{examples: colorless closure}

\begin{itemize}
\item
Let $k\geq 3$.
For the set-agreement task $\SA_k$, the closure is solvable in zero rounds by having each process outputting its input,
since any combination of at most $k$ input values forms a valid output configuration 
of $\Cl(\SA_k)$.
To prove this, 
we show that for every $\sigma\in\I$
we have $\sigma\in\Delta'(\sigma)$.
First, note that for $\sigma\in\I$ such that $\sigma\neq [k]$, we have $\sigma\in\Delta(\sigma)$ and 
$\sigma\in\Delta(\sigma)\Rightarrow \sigma\in\Delta'(\sigma)$,
so we only have to show
$[k]\in\Delta'([k])$.
This is true since the local task $\Pi_{[k],[k]}$ is solvable in one round, by letting each process that sees more than one value output~$1$.

\item
On the other hand, for the Hexagon task $\HX$ we have $\Cl(\HX)=\HX$.
To see this, consider an input simplex $\sigma=(\{u_i,u_{i+1 \bmod 3}\})$, for some $i\in\{0,1,2\}$, its image $\Delta(\{u_i,u_{i+1 \bmod 3}\})=\{v_i,v_{i+1}\}\cup \{v_{i+3},v_{i+4\bmod 6}\}$,
and two vertices in its image that do not already constitute a simplex, i.e. $w\in\{v_i,v_{i+1}\}$ and $w'\in\{v_{i+3},v_{i+4\bmod 6}\}$.
Let $\tau=\{w,w'\}$.

If the local task $\Pi_{\tau,\sigma}=(\tau,\Delta(\sigma),\Delta_{\tau,\sigma})$ would have been solvable in one round,
then there would have been a map $f:V(\Bary^{(1)}(\tau))\to \Delta(\sigma)$ satisfying $f(\{w\})=w$ and $f(\{w'\})=w'$.
But $\Bary^{(1)}(\{w,w'\})$ is
\[
\{w\} \; \rule{3em}{2pt} \; \{w,w'\}  \; \rule{3em}{2pt} \; \{w'\}
\]
and any such map cannot be simplicial by continuity:
if $f(\{w,w'\})\in\{v_i,v_{i+1}\}$ then $\{f(\{w'\}),f(\{w,w'\})\}\notin\Delta(\sigma)$, and similarly
if $f(\{w,w'\})\in\{v_{i+3},v_{i+4\bmod 6}\}$ then $\{f(\{w\}),f(\{w,w'\})\}\notin\Delta(\sigma)$.
\end{itemize}

\subsection{Colorless Speedup Theorem}
\label{subsec:colorless speedup}

We now establish our speedup theorem for colorless algorithms, as stated next.

\begin{theorem}\label{theo:closure}
For every colorless task $\Pi=(\I,\O,\Delta)$, every $t>0$, and every $n\geq 2$, if $\Pi$ is solvable in $t$ rounds by $n$ processes running a wait-free colorless algorithm then the closure $\Cl(\Pi)$ is solvable in $t-1$ rounds by $n$ processes running a wait-free colorless algorithm.
\end{theorem}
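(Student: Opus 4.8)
The plan is to translate everything into the simplicial-map language of Lemma~\ref{lem:colorless-solvability} and exploit the identity $\Bary^{(t)}=\Bary\circ\Bary^{(t-1)}$. Assume $\Pi$ is solvable in $t$ rounds by $n$ processes running a wait-free colorless algorithm; by Lemma~\ref{lem:colorless-solvability} there is a simplicial map $f:\Bary(K)\to\O$ agreeing with $\Delta$, where $K:=\Bary^{(t-1)}(\Skel_n(\I))$ (we view $\Bary^{(t)}(\Skel_n(\I))$ as the barycentric subdivision of $K$). I would then define the candidate $(t-1)$-round solution of $\Cl(\Pi)=(\I,\O',\Delta')$ to be the map $g:K\to\O'$ given on vertices by $g(b):=f(b)$, where a vertex $b$ of $K$ is regarded as the vertex $\widehat{\{b\}}$ of $\Bary(K)$ on which $f$ is already defined. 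Since $V(\O')=V(\O)$, this sends vertices of $K$ to vertices of $\O'$, so it remains to check that $g$ is simplicial into $\O'$ and agrees with $\Delta'$; both follow from a single claim: for every $\sigma\in\I$ and every simplex $\rho$ of $\Bary^{(t-1)}(\Skel_n(\sigma))$, we have $g(\rho)\in\Delta'(\sigma)$.

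To prove this claim, fix such $\sigma$ and $\rho=\{b_0,\dots,b_k\}$, and set $\tau:=g(\rho)=\{f(b_0),\dots,f(b_k)\}\subseteq V(\O)$. Because $\Bary(\rho)$ is a subcomplex of $\Bary^{(t)}(\Skel_n(\sigma))$ and $f$ agrees with $\Delta$, we get $f(\Bary(\rho))\subseteq\Delta(\sigma)$; in particular $\tau\subseteq V(\Delta(\sigma))$. By Definition~\ref{def:colorless-closure} it then suffices to show that the local task $\Pi_{\tau,\sigma}=(\tau,\Delta(\sigma),\Delta_{\tau,\sigma})$ is solvable in one round by $\dim(\tau)+1$ processes running a colorless algorithm. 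Since $\Skel_{\dim(\tau)+1}(\tau)=\tau$, by the one-round part of Lemma~\ref{lem:colorless-solvability} this amounts to producing a simplicial map $h:\Bary(\tau)\to\Delta(\sigma)$ agreeing with $\Delta_{\tau,\sigma}$, i.e.\ a simplicial map that fixes every vertex of $\tau$ (the conditions $\Delta_{\tau,\sigma}$ imposes on higher faces being automatically met by any simplicial map, because $\Bary(\tau)$ has dimension $\dim(\tau)$).

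The heart of the argument is the construction of $h$ out of $f$ and $\rho$, and I expect this to be the main obstacle. For each vertex $w\in\tau$ fix an index $i(w)$ with $f(b_{i(w)})=w$, and for a nonempty face (subset) $S\subseteq\tau$ let $\rho_S:=\{\,b_{i(w)}:w\in S\,\}$, a face of $\rho$ whose barycenter $\widehat{\rho_S}$ is a vertex of $\Bary(\rho)\subseteq\Bary(K)$; then define $h(\widehat{S}):=f(\widehat{\rho_S})$. The naive alternative of letting $h(\widehat{S})$ be $f$ of the barycenter of the \emph{full} $f$-preimage of $S$ in $\rho$ fails, since $f$ need not send that barycenter to the intended vertex; choosing a single representative $b_{i(w)}$ per vertex is exactly what makes $h$ simultaneously fix the vertices ($h(\widehat{\{w\}})=f(b_{i(w)})=w$) and stay simplicial (a chain $S_0\subsetneq\cdots\subsetneq S_r$ in $\Bary(\tau)$ gives, the representatives being distinct, a chain $\rho_{S_0}\subsetneq\cdots\subsetneq\rho_{S_r}$ of faces of $\rho$, hence a simplex of $\Bary(\rho)$, which $f$ maps into $\Delta(\sigma)$). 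Equivalently, $h=f\circ\Bary(r)$ for the simplicial map $r:\tau\to\rho$, $w\mapsto b_{i(w)}$, composed with the functorial action of barycentric subdivision on simplicial maps.

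This yields $g(\rho)=\tau\in\Delta'(\sigma)$, proving the claim. Consequently $g$ maps every simplex of $K$ into $\O'$ (so $g$ is simplicial) and $g$ agrees with $\Delta'$, hence by Lemma~\ref{lem:colorless-solvability} the task $\Cl(\Pi)$ is solvable in $t-1$ rounds by $n$ processes running a wait-free colorless algorithm. I would finish with two small remarks: $\dim(\tau)+1\le\dim(\rho)+1\le n$, so the process counts in the use of Definition~\ref{def:colorless-closure} are consistent; and the argument does not require $\Delta$ to be a carrier map.
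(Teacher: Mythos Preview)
Your proof is correct and follows essentially the same approach as the paper: define the $(t-1)$-round decision map by evaluating $f$ at the ``solo'' vertices of the last barycentric subdivision, then certify that each image simplex $\tau=g(\rho)$ lies in $\Delta'(\sigma)$ by pulling back $\Bary(\tau)$ to $\Bary(\rho)$ via a choice of one preimage per output vertex and composing with $f$. The paper phrases this last step as choosing a face $\rho'\subseteq\rho$ on which $f'$ is bijective and using the resulting isomorphism $\chi:\Bary(\tau)\to\Bary(\rho')$, which is exactly your $\Bary(r)$ written slightly differently; your inclusion of $\Skel_n$ and the dimension check $\dim(\tau)+1\le n$ are in fact a bit more careful than the paper's write-up.
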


\begin{proof}
Let  $\Pi=(\I,\O,\Delta)$, $t>0$, $n\geq 2$, and $\Cl(\Pi)=(\I,\O',\Delta')$ (see Fig.~\ref{fig:theo-weak}).
Since $\Pi$ is solvable in $t$ rounds by a colorless algorithm, there exists a simplicial map
\[
f:\Bary^{(t)}(\I) \to\O
\]
that agrees with~$\Delta$, i.e., for every $\sigma\in\I$ and  $\rho\in\Bary^{(t)}(\sigma)$, $f(\rho)\in\Delta(\sigma)$.
We show the existence of a simplicial map
\[
f':\Bary^{(t-1)}(\I)\to\O'
\]
that agrees with~$\Delta'$, which will establish the fact that $\Cl(\Pi)$ is solvable in $t-1$ rounds by a colorless algorithm. Specifically, for every $u\in \Bary^{(t-1)}(\I)$, we set
\[
f'(u)=f(\{u\}).
\]
That is, $f'(u)$ is the output value produced by $f$ at the vertex $w=\{u\}$ corresponding to the view~$w$ after $t$~rounds of a process with view $u$ after $t-1$~rounds, and running solo at round~$t$.

\begin{figure}[!t]
\centering
\includegraphics[width=12cm]{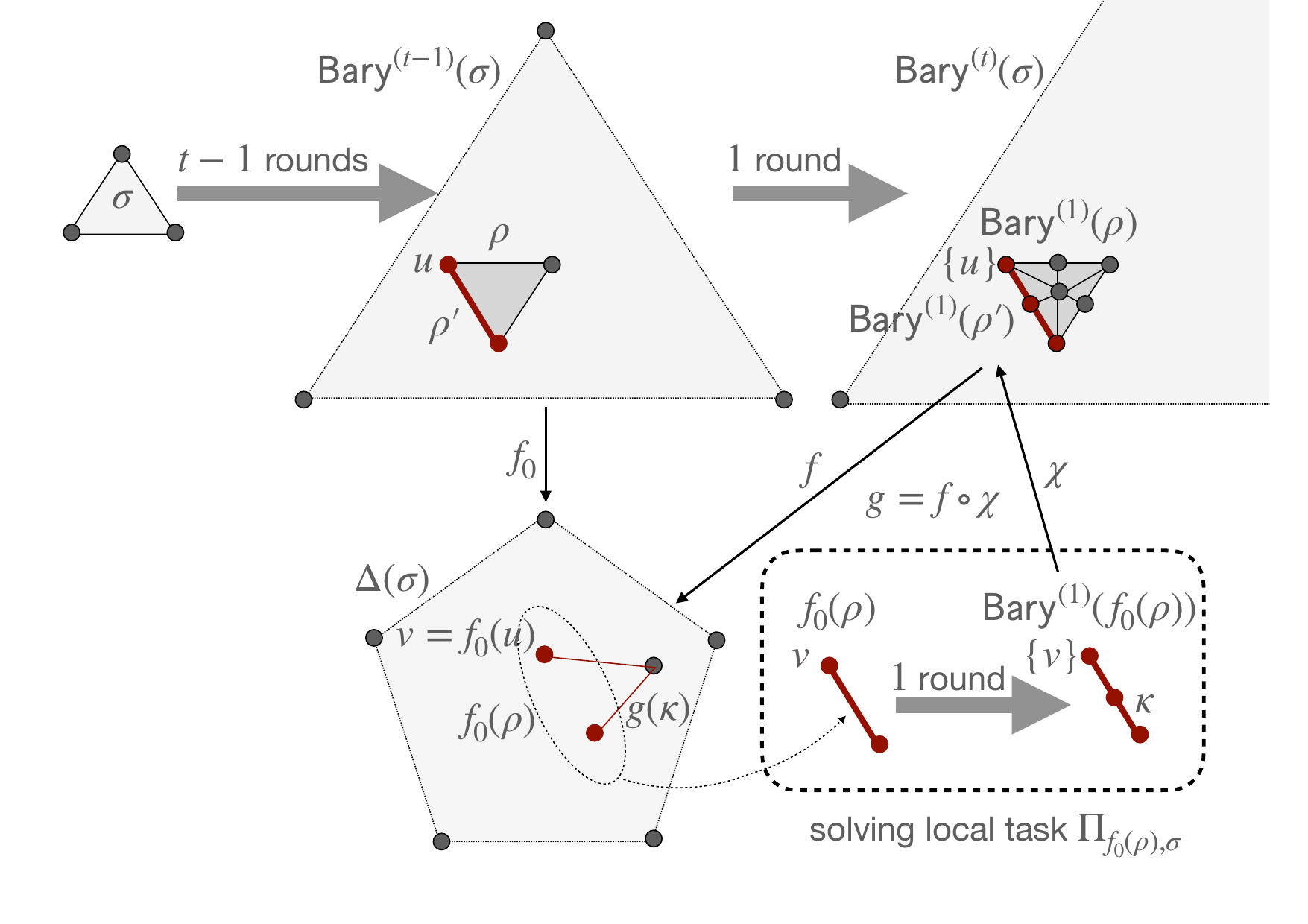}
\caption{\sl Construction in the proof of Theorem~\ref{theo:closure}}
\label{fig:theo-weak}
\end{figure}

To show that $f'$ is simplicial and agrees with~$\Delta'$, let $\sigma\in\I$ and let
\[
\rho\in\Bary^{(t-1)}(\sigma).
\]
We show that $f'(\rho)\in\Delta'(\sigma)$, by showing that the local task $\Pi_{f'(\rho),\sigma}$ is solvable in one round by a colorless algorithm. It is sufficient to show that there exists a simplicial map
\[
g:\Bary^{(1)}(f'(\rho))\to\Delta(\sigma)
\]
that agrees with $\Delta_{f'(\rho),\sigma}$. To show the existence of~$g$, we use the fact that $\Bary^{(1)}(f'(\rho))$ is isomorphic to  $\Bary^{(1)}(\rho')$ where $\rho'\subseteq\rho$ is any of the faces of $\rho$ for which $f':\rho'\to f(\rho)$ is a bijection. Indeed, $f':\rho\to f'(\rho)$ is onto by construction, but it may not be one-to-one. Let
\[
\chi:\Bary^{(1)}(f'(\rho))\to\Bary^{(1)}(\rho')
\]
be such an isomorphism, satisfying that, in addition, for every vertex~$u$ of~$\rho'$,
\[
\chi(\{f'(u)\})=\{u\}.
\]
That is, $\chi$~is consistent with respect to the solo executions starting from $u\in V(\rho')$ and from $f'(u)\in V(f'(\rho))$, for every vertex $u$ of~$\rho'$. As $\Bary^{(1)}(\rho')$ is a sub-complex of $\Bary^{(t)}(\sigma)$, the map
\[
g=f\circ \chi
\]
is well defined. Moreover, since $f$ is simplicial, $g$ is simplicial too. It remains to show that $g$ agrees with $\Delta_{\tau,\sigma}$. For this, let $\tau'\subseteq f'(\rho)$ be a face of~$f'(\rho)$. If $\tau'$ is a vertex~$v$, then let $u=f'^{-1}(v)$ be the pre-image of~$v$ in~$\rho'$. We have $\Bary^{(1)}(v)=\{v\}$. It follows that
\[
g(v)=f\circ\chi(\{v\})=f\circ\chi(\{f'(u)\})=f(\{u\})=f'(u)=v.
\]
Thus, $g(v)=v$, as desired. If $\dim(\tau')\geq 1$, then let $\kappa$ be a face of $\Bary^{(1)}(\tau')$. We have
\[
g(\kappa)=f\circ\chi(\kappa)=f(\chi(\kappa))\in \Delta(\sigma),
\]
merely because $\chi(\kappa)$ is a simplex of
\[
\Bary^{(1)}(\rho')\subseteq \Bary^{(1)}(\rho)\subseteq \Bary^{(1)}(\Bary^{(t-1)}(\sigma))= \Bary^{(t)}(\sigma),
\]
and $f$ agrees with~$\Delta$. Therefore, $g$~agrees with~$\Delta_{\tau,\sigma}$, and thus $f'(\rho)\in\Delta'(\sigma)$, from which it follows that $\Cl(\Pi)$ is solvable in $t-1$ rounds.
\end{proof}

Since the colorless closure task $\Cl(\Pi)=(\I,\O',\Delta')$ of a colorless task $\Pi=(\I,\O,\Delta)$ only potentially adds valid output simplices to $\Delta(\sigma)$ for forming $\Delta'(\sigma)$, for every $\sigma\in\I$, it follows  that, after applying the closure operator for some finite number of times~$t$, we get a fixed point, i.e. a task $\Cl^{(t)}(\Pi)$ such that $\Cl^{(t+1)}(\Pi)=\Cl^{(t)}(\Pi)$.
Naturally, this also implies  $\Cl^{(t')}(\Pi)=\Cl^{(t)}(\Pi)$ for every $t'\geq t$.

\begin{definition}
The \emph{fixed-point} of a colorless task $\Pi$ is the task $\Pi^\ast=(\I,\O^\ast,\Delta^\ast)$ such that $\Pi^\ast=\Cl^{(t)}(\Pi)$ for some $t\geq0$, and $\Cl^{(t+1)}(\Pi)=\Cl^{(t)}(\Pi)$.
\end{definition}

As a direct consequence of the speedup theorem (Theorem~\ref{theo:closure}) the fixed-point task $\Pi^*$ of a task~$\Pi$ is either $0$-round solvable, or not solvable at all. Indeed, consider $\Pi^\ast=(\I,\O^\ast,\Delta^\ast)$
	and assume it is solvable in $t>0$ rounds.
	By Theorem~\ref{theo:closure}, $\Pi^\ast=\Cl(\Pi^\ast)$ is solvable in $t-1$ rounds.
	Repeating this argument for $t$ times implies that $\Pi^\ast$ is $0$-round solvable.

\begin{lemma}
\label{lem:0round}
Let $n\geq 2$. Given a colorless task $\Pi=(\I,\O,\Delta)$,
its fixed-point $\Pi^\ast$ is either $0$-round solvable by $n$ processes running a wait-free colorless algorithm, or not solvable at all.
\end{lemma}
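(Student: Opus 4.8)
The plan is to run a short downward induction on the number of rounds, using the speedup theorem (Theorem~\ref{theo:closure}) together with the defining property of the fixed point, namely $\Cl(\Pi^\ast)=\Pi^\ast$. Note that the existence of $\Pi^\ast$ itself (i.e., that the sequence $\Cl^{(t)}(\Pi)$ stabilizes after finitely many steps) was already observed just above the definition, since $\Cl$ only ever adds output simplices and $V(\O)$ is finite; so we may freely take $\Pi^\ast$ as given and equal to $\Cl(\Pi^\ast)$.

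First I would recall that, by Lemma~\ref{lem:colorless-solvability}, if $\Pi^\ast$ is solvable at all by $n$ processes running a wait-free colorless algorithm in the \IIS\/ model, then it is solvable in $t$ rounds for some finite $t\geq 0$. If $t=0$ we are done, so suppose $t>0$. Applying Theorem~\ref{theo:closure} to the task $\Pi^\ast$ (this is where the hypothesis $n\geq 2$ is used): since $\Pi^\ast$ is solvable in $t>0$ rounds by a wait-free colorless algorithm, the closure $\Cl(\Pi^\ast)$ is solvable in $t-1$ rounds by a wait-free colorless algorithm. But $\Pi^\ast$ is a fixed point of $\Cl$, so $\Cl(\Pi^\ast)=\Pi^\ast$, and hence $\Pi^\ast$ itself is solvable in $t-1$ rounds. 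Iterating this step $t$ times lowers the round count to $0$, showing that $\Pi^\ast$ is $0$-round solvable by $n$ processes running a wait-free colorless algorithm. Therefore $\Pi^\ast$ is either $0$-round solvable or not solvable at all, as claimed.

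There is essentially no hard step here; the only things to be careful about are (i) invoking Lemma~\ref{lem:colorless-solvability} to obtain a finite round bound $t$ before starting the induction, so that the iteration terminates, and (ii) carrying the hypothesis $n\geq 2$ through each application of Theorem~\ref{theo:closure}, which is stated only for $n\geq 2$ — but $n$ is fixed throughout, so this is immediate. I do not expect any genuine obstacle: the lemma is a direct corollary of the speedup theorem once the fixed-point identity $\Cl(\Pi^\ast)=\Pi^\ast$ is in place.
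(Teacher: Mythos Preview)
Your proposal is correct and follows essentially the same argument as the paper: assume $\Pi^\ast$ is solvable in some $t>0$ rounds, apply Theorem~\ref{theo:closure} together with $\Cl(\Pi^\ast)=\Pi^\ast$ to reduce to $t-1$ rounds, and iterate down to~$0$. The paper's proof is slightly terser (it does not explicitly invoke Lemma~\ref{lem:colorless-solvability} to extract a finite~$t$), but the substance is identical.
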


\subparagraph{Example.}
The following corollary of Theorem~\ref{theo:closure} illustrates both the theorem's interest and its simplicity.

\begin{corollary}
For every $n\geq2$, the hexagon task cannot be solved by $n$ processes running a wait-free colorless algorithm.
\end{corollary}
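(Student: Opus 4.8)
The plan is to combine the speedup theorem (Theorem~\ref{theo:closure}) with the fixed-point observation about the Hexagon task recorded in Section~\ref{examples: colorless closure}. Recall that there we showed $\Cl(\HX)=\HX$, so $\HX$ is its own fixed point, i.e.\ $\HX^\ast=\HX$. Therefore by Lemma~\ref{lem:0round}, for every $n\geq 2$ the task $\HX$ is either $0$-round solvable by $n$ processes running a wait-free colorless algorithm, or not solvable at all. It thus suffices to rule out $0$-round solvability.

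For the $0$-round case I would invoke Lemma~\ref{lem:colorless-solvability}: a colorless task is $0$-round solvable by $n$ processes iff there is a simplicial map $f:\Skel_n(\I)\to\O$ agreeing with~$\Delta$. Here $\I=C_3$, which is $1$-dimensional, so for $n\geq 2$ we have $\Skel_n(C_3)=C_3$, and we need a simplicial map $f:C_3\to C_6$ with $f(u_i)\in\{v_i,v_{i+3}\}$ for each $i\in\{0,1,2\}$ and $\{f(u_i),f(u_{i+1\bmod 3})\}$ an edge of $C_6$ for each~$i$. The key step is to check that no such $f$ exists: this is exactly the parity/continuity obstruction already used in the Hexagon computation in Section~\ref{examples: colorless closure}. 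Concretely, for each $i$ there is a sign $\epsilon_i\in\{0,3\}$ with $f(u_i)=v_{i+\epsilon_i}$; the edge condition forces consecutive choices to be ``compatible'' in the cover, and going once around $C_3$ forces the total shift to be $0\bmod 6$ while the edges of $C_6$ only project down to edges of $C_3$ with a shift of $0\bmod 3$ that cannot close up to $0\bmod 6$ after an odd number ($3$) of steps — equivalently, $(C_6,f)$ being a covering of $C_3$ means $f$ induces an injection on stars, and a simplicial map $C_3\to C_6$ that is a section of this cover would give a graph homomorphism splitting a connected double cover of an odd cycle, which is impossible since $C_3$ is not bipartite while the fiber product forces a $2$-coloring. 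Either phrasing gives the contradiction.

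Combining the two parts: $\HX$ is not $0$-round solvable, hence by Lemma~\ref{lem:0round} (applied to $\HX^\ast=\HX$) it is not solvable at all by $n\geq 2$ processes running a wait-free colorless algorithm, which is the claim.

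The main obstacle is purely the $0$-round impossibility, i.e.\ showing there is no simplicial map $C_3\to C_6$ respecting $\Delta$; but this is the same small combinatorial/parity argument already spelled out for the closure computation of $\HX$, so it is routine rather than deep. The slight subtlety worth stating carefully is why $\Skel_n(C_3)=C_3$ for all $n\geq 2$ (because $\dim C_3=1\leq n-1$), which is what lets us treat all $n\geq 2$ uniformly via the fixed point.
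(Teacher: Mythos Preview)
Your proposal is correct and follows essentially the same route as the paper: use $\Cl(\HX)=\HX$ together with the speedup theorem to reduce to $0$-round solvability, then rule that out by showing there is no simplicial map $C_3\to C_6$ agreeing with~$\Delta$. The only cosmetic differences are that you invoke Lemma~\ref{lem:0round} where the paper iterates Theorem~\ref{theo:closure} directly, and for the $0$-round step the paper does a three-line explicit case check (fix $\delta(u_0)$, propagate along two edges, observe the third edge fails), whereas you phrase it as the nonexistence of a section of the connected double cover $C_6\to C_3$ (equivalently, $C_6$ is bipartite so it contains no $3$-cycle as the image of such a section); both arguments are the same obstruction.
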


\begin{proof}
If $\HX$ is solvable wait-free by $n\geq2$ processes in the IIS model, then there exists $t\geq 0$ such that  $\HX$ is solvable in $t$ rounds.
We have seen that $\Cl(\HX)=\HX$. It follows from Theorem~\ref{theo:closure} that if $\HX$ is solvable wait-free, then it is solvable wait-free in zero rounds.

Consider a possible zero-round algorithm for $\HX$ with $n\geq2$ processes, and its decision map~$\delta$.
As the algorithm must produce valid outputs for executions with a single input, we have $\delta(u_i)\in\{v_i,v_{i+3}\}$ for every $i\in\{0,1,2\}$.
Let $v_j=\delta(u_0)$ (and hence $j\in\{0,3\}$).
The definition of $\Delta$ for executions with two different inputs guarantees $\delta(v_1)\in \{u_j,u_{j+1}\}$ and by the above we have
$\delta(v_1)=u_{j+1}$.
Similarly,
$\delta(v_2)=u_{j+2}$,
and hence
$\delta\{v_0,v_2\}=\{u_j,u_{j+2}\}\notin\Delta\{v_0,v_2\}$, a contradiction.
\end{proof}


\section{The Topology of the Closure}
\label{sec:connectivity of closure}

In this section, we study the topology of the colorless closure task, which, as opposed to the general closure, displays very simple properties. 
For stating these properties, let us recall that a complex $\K$ is \emph{complete} if $V(\K)$ is a simplex of~$\K$.
It is \emph{complete up to dimension~$d$} if every set $\tau\subseteq V(\K)$ with $0\leq \dim(\tau)\leq d$ satisfies $\tau\in\K$.
A \emph{connected component} of a complex $\K$ is the subcomplex of $\K$ induced by all the vertices in a connected component of the graph $\Skel_1(\K)$. We mainly show that, for every $\sigma\in\I$, the closure of each connected component of $\Delta(\sigma)$ eventually becomes complete after a bounded number of closure operations. 


For a colorless task $\Pi=(\I,\O,\Delta)$, we
denote by  $\Cl^{(k)}(\Pi)=(\I,\O^{(k)},\Delta^{(k)})$
the $k$-th closure of~$\Pi$, 
defined  as
$\Cl^{(k)}(\Pi)=\Cl(\Cl^{(k-1)}(\Pi))$,
where $k$ is a positive integer and $\Cl^{(0)}(\Pi)=\Pi$.

\begin{theorem}\label{theo:connectivity}
Given a colorless task $\Pi=(\I,\O,\Delta)$ and a simplex $\sigma\in \I$,
the following hold.
\begin{itemize}
\item
Let $D$ be the largest diameter of a connected component in the graph $\Skel_1(\Delta(\sigma))$, and let $\ell=\lceil\log_2D\rceil+1$.
Then all the connected components of $\Delta^{(\ell)}(\sigma)$ are complete up to dimension~$\dim(\sigma)$.
\item For every $k\geq 0$, a set of vertices in $V(\Delta(\sigma))$ is a connected component of
$\Skel_1(\Delta(\sigma))$ if and only if it is a connected component of $\Skel_1(\Delta^{(k)}(\sigma))$.
\end{itemize}
\end{theorem}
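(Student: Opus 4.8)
The plan is to handle the two bullets separately, but both rest on understanding exactly when a single new edge $\{w, w'\}$ (with $w, w'$ already vertices of $\Delta(\sigma)$) enters $\Delta^{(k+1)}(\sigma)$ given the current complex $\Delta^{(k)}(\sigma)$. By Definition~\ref{def:colorless-closure} and Lemma~\ref{lem:colorless-solvability}, $\{w,w'\} \in \Delta'(\sigma)$ iff the local task $\Pi_{\{w,w'\},\sigma}$ is solvable in one round, i.e. iff there is a simplicial map $g : \Bary^{(1)}(\{w,w'\}) \to \Delta(\sigma)$ with $g(\{w\}) = w$ and $g(\{w'\}) = w'$. Since $\Bary^{(1)}(\{w,w'\})$ is the path $w \,\rule{1em}{1pt}\, \{w,w'\} \,\rule{1em}{1pt}\, w'$, such a $g$ exists precisely when there is a common neighbor $z$ of $w$ and $w'$ in $\Skel_1(\Delta(\sigma))$ — equivalently, a vertex $z$ with $\{w,z\}, \{z,w'\} \in \Delta(\sigma)$ (taking $z = w$ or $z = w'$ also works when $\{w,w'\}$ is already an edge). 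So: \emph{$\{w,w'\}$ is an edge of $\Skel_1(\Delta^{(k+1)}(\sigma))$ iff $w$ and $w'$ are at distance at most $2$ in $\Skel_1(\Delta^{(k)}(\sigma))$.} This is exactly one step of the classical graph-squaring operation. I would isolate this as the key lemma. One subtlety to check: adding edges at the graph level is consistent with the simplicial-complex structure — I must verify that the higher-dimensional simplices declared by $\Delta'$ (and their faces) do not interfere, but this is handled because $\Pi_{\tau,\sigma}$ for higher-dimensional $\tau$ only requires a map into $\Skel_{\dim(\tau')}(\Delta(\sigma))$, so once the $1$-skeleton is complete on a component, all higher simplices up to $\dim(\sigma)$ get filled in as well.

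For the second bullet, the claim is that the vertex partition into connected components is invariant under $\Cl$. One inclusion is immediate: $\Delta(\sigma) \subseteq \Delta^{(k)}(\sigma)$ (shown in the text right after Definition~\ref{def:colorless-closure}), so any two vertices connected in $\Skel_1(\Delta(\sigma))$ remain connected in $\Skel_1(\Delta^{(k)}(\sigma))$, hence components can only merge. For the reverse, I use the key lemma and induction on $k$: a new edge $\{w,w'\}$ of $\Delta^{(k+1)}(\sigma)$ comes from a distance-$2$ path in $\Skel_1(\Delta^{(k)}(\sigma))$, so $w$ and $w'$ were already in the same component of $\Skel_1(\Delta^{(k)}(\sigma))$, which by the inductive hypothesis is the same component of $\Skel_1(\Delta(\sigma))$. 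Hence no merging ever occurs, and the components are literally the same vertex sets for all $k$.

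For the first bullet, fix a connected component $C$ of $\Skel_1(\Delta(\sigma))$ with diameter $D$. By the second bullet, $C$ stays a single component throughout, and by the key lemma, after $k$ applications of $\Cl$ the induced subgraph on $C$ contains every pair of vertices at distance at most $2^k$ in the original $\Skel_1(\Delta(\sigma))[C]$. Once $2^k \geq D$, every pair of vertices of $C$ is adjacent, so $\Skel_1(\Delta^{(\ell)}(\sigma))$ restricted to $C$ is complete for $\ell = \lceil \log_2 D \rceil + 1$ (the $+1$ comfortably absorbing the case $D = 1$ and rounding). Then, as noted above, completeness of the $1$-skeleton on $C$ forces $\Delta^{(\ell)}(\sigma)$ to be complete up to dimension $\dim(\sigma)$ on $C$: for any $\tau \subseteq C$ with $\dim(\tau) \leq \dim(\sigma)$, pick any vertex $z \in \tau$; since every other vertex of $\tau$ is adjacent to $z$, the local task $\Pi_{\tau,\sigma}$ (which only asks for a simplicial map into $\Skel_{\dim(\tau)}(\Delta(\sigma))$ except at vertices) is one-round solvable by having a process with view $\tau'$ output the appropriate face through $z$. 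The main obstacle I anticipate is not the graph-squaring count — that is routine — but being careful about the last point: verifying rigorously that once a component's $1$-skeleton is complete, all higher-dimensional simplices up to $\dim(\sigma)$ actually get added by $\Cl$, i.e. constructing the required simplicial map $g : \Bary^{(1)}(\tau) \to \Delta^{(\ell-1)}(\sigma)$ for multi-vertex $\tau$ and checking it agrees with $\Delta_{\tau,\sigma}$. This requires a short but genuine argument using the "cone over $z$" idea and the freedom afforded by $\Skel_{\dim(\tau')}$ in the definition of the local task.
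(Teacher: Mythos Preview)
Your proposal is correct and follows essentially the same approach as the paper. Both proofs isolate the same key observation---that $\{w,w'\}\in\Delta^{(k+1)}(\sigma)$ iff $w,w'$ are at distance at most~$2$ in $\Skel_1(\Delta^{(k)}(\sigma))$---use it to get $1$-skeleton completeness on each component after $\lceil\log_2 D\rceil$ iterations, then spend one more iteration with a ``cone to a fixed vertex'' map on $\Bary(\tau)$ (your $z$, the paper's~$v_1$) to fill in the higher-dimensional simplices; the second bullet is handled identically by induction on~$k$ via the key lemma.
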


\begin{proof}
To establish the theorem, we first prove two auxiliary claims, with the same notations as in the statement of the theorem.

\begin{claim}\label{claim:diameter}
	Every two vertices $u\neq w$ of the same connected component of $\Delta(\sigma)$ satisfy $\{u,w\}\in\Delta^{(\ell-1)}(\sigma)$.
\end{claim}

\noindent\emph{Proof of claim.}
Let $u,v,w$ be three vertices of the same connected component~$\K$, such that $\{u,v\}\in \K, \{v,w\}\in \K$, but $\{u,w\}\notin \K$, and let us show that $\{u,w\}\in \Delta^{(1)}(\sigma)$.
(If there are no such three vertices then we are done.)
For this, it is sufficient to define a  simplicial map
\[
f:\Bary(\{u,w\})\to \Delta(\sigma)
\]
which agrees with $\Delta_{\{u,w\},\sigma}$. We set
\[
f(\{u\})=u, \; f(\{w\})=w, \; \mbox{and} \; f(\{u,w\})=v.
\]
In this way, any edge of $\Bary(\{u,w\})$ is mapped to either  $\{u,v\}$ or $\{v,w\}$, which both belong to~$\Delta(\sigma)$. It follows that $f$ is simplicial, and agrees with  $\Delta_{\{u,w\},\sigma}$.
Therefore, for any two vertices $u,w$ of $\K$ at distance at most~2 in the graph $\Skel_1(\K)$, $\{u,w\}\in \Delta^{(1)}(\sigma)$.
By the same argument, any two vertices $u,w$ of $\K$ at distance at most~4 in the graph $\Skel_1(\K)$ satisfy $\{u,w\}\in \Delta^{(2)}(\sigma)$, and more generally, for any two vertices $u,w$ of $\K$ at distance at most~$2^r$ in the graph $\Skel_1(\K)$, $\{u,w\}\in \Delta^{(r)}(\sigma)$.
As a consequence, for every two vertices $u,w$ of~$\K$, $\{u,w\}\in \Delta^{(\ell-1)}(\sigma)$.
\hfill $\diamond$

\bigskip 

We next show that a similar claim holds for any set of vertices in a connected component of $\Delta(\sigma)$, and not only for pairs.

\begin{claim}\label{claim:therest}
	Every set  $\tau\subseteq V(\Delta(\sigma))$ of vertices of the same connected component of $\Delta(\sigma)$
	with $2<  |\tau| \leq  |\sigma|$
	satisfies
	$\tau\in \Delta^{(\ell)}(\sigma)$.
\end{claim}

\noindent\emph{Proof of claim.} It is sufficient to show that the local task $\Pi^{(\ell-1)}_{\tau,\sigma}=(\tau,\Delta^{(\ell-1)}(\sigma),\Delta^{(\ell-1)}_{\tau,\sigma})$ is solvable in one round, which we do by defining a simplicial map
\[
f:\Bary(\tau)\to \Delta^{(\ell-1)}(\sigma)
\]
that agrees with $\Delta^{(\ell-1)}_{\tau,\sigma}$,
as follows.
Let $\tau=\{v_1,\dots,v_d\}$ where the vertices of~$\tau$ are indexed in an arbitrary order, where $d=|\tau|$.
For every singleton set~$\{v_i\}\subseteq \tau$ we set 
\[
f(\{v_i\})=v_i
\]
while for every set $S \subseteq \tau$ of cardinality $|S|>1$, we set 
\[
f(S)=v_1.
\]
Let $\tau'$ be a face of $\tau$.
The views encoded by different vertices in a simplex $\rho\in \Bary(\tau')$ of a barycentric subdivision are totally ordered by inclusion, so $\rho$ may contain at most one vertex that corresponds to singleton view (i.e., a view composed of a single vertex of~$\tau'$).
As a consequence, there are only three possible cases:
\begin{itemize}
	\item $f(\rho)=v_i$ whenever $\rho=\{v_i\}$, or
	\item $f(\rho)=v_1$ whenever $\rho$ contains no singleton sets but possibly $\{v_1\}$, or
	\item $f(\rho)=\{v_1,v_i\}$ for some $i\in\{2,\dots,d\}$ whenever $\rho$ contains the singleton~$\{v_i\}$ plus other non-singleton vertices.
\end{itemize}
In all three cases, the image of $\rho$ is a simplex of $\Delta^{(\ell-1)}_{\tau,\sigma}(\tau')$,
by Claim~\ref{claim:diameter}. Therefore $f$ is simplicial, and agrees with~$\Delta^{(\ell-1)}_{\tau,\sigma}$, which implies that $\tau\in \Delta^{(\ell)}(\sigma)$.
 \hfill $\diamond$

\bigskip 

We are now ready to prove Theorem~\ref{theo:connectivity}. For the first part of the theorem, let $\K$ be a connected component of~$\Delta(\sigma)$, and our goal is to show that every $\tau\subseteq V(\K)$  with $1\leq |\tau|\leq \dim(\sigma)+1$ satisfies $\tau\in\Delta^{(\ell)}(\sigma)$.
The claim holds for $|\tau|=1$ (i.e., for vertices) as every vertex of~$\K$ is by definition a vertex of $\Delta(\sigma)$.
It holds for $|\tau|=2$ (i.e., for edges) by Claim~\ref{claim:diameter}, and for $|\tau|>2$
by Claim~\ref{claim:therest}.

For establishing the second item, first note that one direction of the if and only if statement is trivial, as the closure operator can only add simplices, so connected components of $\Skel_1(\Delta^{(k-1)}(\sigma))$ can only merge when moving to  $\Skel_1(\Delta^{(k)}(\sigma))$ and not break.
For the other direction, we proceed by induction on $k\geq 0$.
The statement is trivial for $k=0$.

Let us assume that the statement holds for~$k$ and show that it holds for~$k+1$.
Let $u$ and $v$ be two vertices of $\Delta(\sigma)$ in two different connected components of  $\Delta(\sigma)$, and by assumption
in two different connected components of $\Delta^{(k)}(\sigma)$.
Let us show that $\{u,v\}\notin \Delta^{(k+1)}(\sigma)$.
For the purpose of contradiction, let us consider a map
 $$f:\Bary(\{u,v\})\to \Delta^{(k)}(\sigma)$$
that agrees with $\Delta^{(k)}_{\{u,v\},\sigma}$. We must have $f(\{u\})=u$, $f(\{v\})=v$, and $f(\{u,v\})=w$ for some vertex $w\in V(\Delta^{(k)}(\sigma)$). However, $u$ and $v$ are  in two different connected components of $\Delta^{(k)}(\sigma)$, which implies that $\{u,w\}$ or $\{v,w\}$ is not an edge of $\Delta^{(k)}(\sigma)$.
As a consequence, $f$ is not simplicial, and thus $\{u,v\}\notin \Delta^{(k+1)}(\sigma)$. In other words, no edges can be added between different connected components of $\Delta(\sigma)$ during successive closure operations.
\end{proof}

\paragraph*{Examples}
 \begin{itemize}
	\item  
	In Section~\ref{examples: colorless closure} we have proved that the closure of $k$-set agreement contains any combination of at most $k$ input values.
	This fact can now be directly derived from Theorem~\ref{theo:connectivity},
	as $\Delta([k])$ is connected and contains all the values of $[k]$.
	
	\item 
	In the same section, we have proved that $\Cl(\HX)=\HX$.
	This is also a direct consequence of Theorem~\ref{theo:connectivity},
	as for every $\sigma\in\I$, each of the connected components of $\Delta(\sigma)$ is full.
\end{itemize}


\section{Round-Reduction is Complete for 1-Dimensional Tasks}

1-dimensional tasks are tasks for which the input and output complexes are of dimension at most~1. 
In this section, we establish the completeness of the round-reduction proof technique for 1-dimensional colorless tasks, thus establishing part of Theorem~\ref{theo:C}.
FLP-style proofs are also complete in this case, but the proof of this fact is deferred to the next section, where we show that the techniques are equivalent.
The next theorem asserts that for colorless tasks of dimension at most~$1$
the reciprocal of Theorem~\ref{theo:closure} holds as well, thus establishing the completeness of the round-reduction technique in this case.

\begin{theorem}\label{theo:closure-reciprocal}
For every $1$-dimensional colorless task $\Pi=(\I,\O,\Delta)$, every $t>0$, and every $n\geq 2$, if the colorless closure $\Cl(\Pi)$ is solvable in $t-1$ rounds by $n$ processes running a wait-free colorless algorithm, then $\Pi$ is solvable in $t$ rounds by $n$ processes running a wait-free colorless algorithm.  
\end{theorem}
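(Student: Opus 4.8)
The plan is to establish the converse of the speedup theorem (Theorem~\ref{theo:closure}) in the 1-dimensional case by taking a simplicial map $f'$ witnessing the $(t-1)$-round solvability of $\Cl(\Pi)$ and converting it into a simplicial map $f$ witnessing the $t$-round solvability of $\Pi$. Concretely, given
\[
f':\Bary^{(t-1)}(\Skel_n(\I))\to\O'
\]
agreeing with $\Delta'$, I want to produce
\[
f:\Bary^{(t)}(\Skel_n(\I))\to\O
\]
agreeing with $\Delta$. The natural idea: the $t$-th barycentric subdivision is $\Bary^{(1)}$ applied to $\Bary^{(t-1)}$, so I should subdivide once more and then ``use up'' the extra round to locally fix the outputs. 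For each simplex $\rho\in\Bary^{(t-1)}(\Skel_n(\sigma))$, the value $\tau:=f'(\rho)$ lies in $\Delta'(\sigma)$, which by Definition~\ref{def:colorless-closure} means the local task $\Pi_{\tau,\sigma}$ is $1$-round solvable, i.e.\ there is a simplicial map $g_\rho:\Bary^{(1)}(\tau)\to\Delta(\sigma)$ agreeing with $\Delta_{\tau,\sigma}$ (fixing each vertex of $\tau$). The goal is to glue these local one-round solutions $g_\rho$ along the extra subdivision into a single global simplicial map $f$ on $\Bary^{(t)}$.

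The key construction is, for each $\rho$, to transport $g_\rho$ back along $f'$: a vertex of $\Bary^{(1)}(\rho)$ is a face $\rho'\subseteq\rho$, and $f'(\rho')$ is a face of $\tau=f'(\rho)$, hence a vertex of $\Bary^{(1)}(\tau)$; so I set $f(\{\rho'\}):=g_\rho(f'(\rho'))$. I must check (i) well-definedness: a vertex $\{\rho'\}$ of $\Bary^{(t)}$ belongs to many $\Bary^{(1)}(\rho)$'s (all $\rho\supseteq\rho'$ in $\Bary^{(t-1)}$), so the chosen $g_\rho$ must agree on the overlap; (ii) simpliciality; (iii) agreement with $\Delta$. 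Here is where dimension~1 enters crucially and must be exploited: simplices of $\Delta(\sigma)$ have dimension at most~$1$, so $\tau=f'(\rho)$ is a vertex or an edge; if it is a vertex, $g_\rho$ is forced (map everything to that vertex); if it is an edge $\{a,b\}$, then $\Bary^{(1)}(\{a,b\})$ is a path $a - \{a,b\} - b$ and $g_\rho$ is determined except for the single value $g_\rho(\{a,b\})\in\{a,b\}$ (it must be an endpoint since $\{a,b\}\in\Delta(\sigma)$ and $\Delta(\sigma)$ is $1$-dimensional — any other choice would need a triangle). So the only genuine freedom, and the only place consistency could fail, is the barycenter vertices; at every vertex $\rho'$ of $\Bary^{(t-1)}$ (so $f'(\rho')$ is already a vertex of $\O$), all the $g_\rho$'s are forced to output $f'(\rho')$, which resolves the main overlaps. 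For higher-dimensional overlaps I will need a coherent global rule for orienting each $\Delta'$-edge; for instance fix an arbitrary total order on $V(\O)$ and, whenever $f'(\rho)=\{a,b\}$ with $a<b$, declare $g_\rho(\{a,b\})=a$. This makes $f$ canonically defined by $f(\{\rho'\})=f'(\rho')$ if $\dim f'(\rho')\le 0$ along the relevant facet, and the smaller endpoint of $f'(\rho')$ otherwise — I should phrase it cleanly as: $f(\{\rho'\}) = \min\big(f'(\rho')\big)$ under the chosen order (this still equals $f'(\rho')$ when $f'(\rho')$ is a vertex).

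Then simpliciality and $\Delta$-agreement of $f$ follow by a local check. Take a simplex of $\Bary^{(t)}$, i.e.\ a chain $\rho'_0\subsetneq\rho'_1\subsetneq\cdots$ of faces of some $\rho\in\Bary^{(t-1)}(\Skel_n(\sigma))$; applying $f'$ gives a (possibly degenerate) chain of faces of $\tau=f'(\rho)$, an edge or vertex; so $\{f(\{\rho'_0\}),f(\{\rho'_1\}),\dots\}$ is a subset of $\{a,b\}$ (the vertices of $\tau$), hence a face of $\tau\in\Delta(\sigma)$, hence a simplex of $\Delta(\sigma)$ — giving simpliciality and agreement with $\Delta$ simultaneously. **I expect the main obstacle to be** the consistency/gluing argument: verifying that the single global orientation rule makes the locally-forced maps $g_\rho$ agree on all shared faces across different $\rho$'s in $\Bary^{(t-1)}$, and in particular handling the vertices $\rho'$ of $\Bary^{(t-1)}$ that appear as barycenters in many $\rho$'s — this is exactly the point where $1$-dimensionality is indispensable (in higher dimension $\tau$ could be a triangle and $g_\rho(\text{barycenter})$ could legitimately be the barycenter of an edge, destroying the canonical rule), and it is worth stating explicitly why the argument does not generalize, which is consistent with the fact that round-reduction fails for set agreement.
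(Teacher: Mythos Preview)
Your overall strategy---extend the $(t-1)$-round map $f'$ for $\Cl(\Pi)$ over one more subdivision using the one-round local solvers $g_\rho$ guaranteed by Definition~\ref{def:colorless-closure}---is correct and is essentially the paper's approach. The error is the shortcut you take afterwards: you assert that when $\tau=f'(\rho)=\{a,b\}$ one has ``$\{a,b\}\in\Delta(\sigma)$'' and hence $g_\rho(\{a,b\})\in\{a,b\}$, which is what justifies replacing $g_\rho$ by the global rule $f(\rho')=\min f'(\rho')$. But we only know $\tau\in\Delta'(\sigma)$, and the whole point of the closure is that $\Delta'(\sigma)$ may contain edges absent from $\Delta(\sigma)$. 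Concretely, if $\Delta(\sigma)$ is the path with edges $\{a,c\},\{c,b\}$ but not $\{a,b\}$, then $\{a,b\}\in\Delta'(\sigma)\smallsetminus\Delta(\sigma)$ and every one-round solver of $\Pi_{\{a,b\},\sigma}$ \emph{must} send the barycenter to the third vertex~$c$. Your $\min$ rule would output $a$ or $b$ there, making the non-edge $\{a,b\}$ the $f$-image of an edge of $\Bary^{(t)}$, so $f$ fails to land in $\Delta(\sigma)$. Your final simpliciality check (``the image is contained in $\{a,b\}$, a face of $\tau\in\Delta(\sigma)$'') repeats the same mistaken premise.

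The repair is simply to drop the $\min$ shortcut and keep the genuine $g_\rho$, as in your first formulation; the consistency worry then disappears in dimension~$1$. If $\rho'=\{w\}$ is a vertex of $\Bary^{(t-1)}(\I)$, every $g_\rho$ fixes it to $f'(w)$, so all choices of $\rho\supseteq\rho'$ agree. If $\rho'=\{w,w'\}$ is an edge, then in a $1$-complex the only $\rho\supseteq\rho'$ is $\rho'$ itself, and $\rho'$ has a unique carrier edge $e\in\I$; hence $g_{\rho'}$ is unambiguous and one sets $f(\{w,w'\})=g_{\rho'}(f'(\{w,w'\}))$. Simpliciality and agreement with $\Delta$ follow immediately from $g_{\rho'}:\Bary(\tau)\to\Delta(e)$ being simplicial and vertex-fixing, which is exactly the paper's argument.
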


\begin{proof}
Let $\Pi=(\I,\O,\Delta)$ such that $\Cl(\Pi)$ is solvable in $t-1$ rounds by $n$ processes. Let 
\[
\delta':\Bary^{(t-1)}(\Skel_n(\I))\to \O
\]
be a simplicial map agreeing with~$\Delta$. Roughly, an algorithm for solving $\Pi$ consists of two phases: first solving $\Cl(\Pi)$ using~$\delta'$, and, second, given the output of~$\delta'$, reconciliating these outputs (valid for $\Cl(\Pi)$, but not necessarily for~$\Pi$) using the algorithm solving the local task for these outputs (see Fig.~\ref{fig:theo-reciprocal}). More formally, les us consider a process~$p$ with input~$x$, i.e., $x\in \I$ is a vertex. After $t-1$ rounds, this process gets a view $w$ which is a vertex of $\Bary^{(t-1)}(\I)$. Then, $p$ proceeds with one more round of communication, and gets a view in $\Bary^{(t-1)}(\I)$, 
which is either of the form $\{w\}$ or of the form $\{w,w'\}$ where $\{w,w'\}$ is an edge of $\Bary^{(t-1)}(\I)$. 
Note that the property of the Barycentric subdivision guarantee that in the latter case $w'$ must contain an input $x'\neq x$ of another process, where $e=\{x,x'\}$ was the actual input. 
Let 
$
y=\delta'(w)\;\mbox{and}\; y'=\delta'(w').
$
Moreover, let $e'=\{y,y'\}$. Note that $e'$ is an edge of $\Delta'(e)$ as $\delta'$ solves $\Cl(\Pi)$, but it is not necessarily an edge of~$\Delta(e)$. The algorithm solving~$\Pi$ is as follows: 
\begin{itemize}
\item If the view of~$p$ after $t$ rounds is $\{w\}$, then $p$ outputs~$y$; 
\item If the view of~$p$ after $t$ rounds is $\{w,w'\}$, then $p$ outputs~$z=f_{e',e}(\{y,y'\})$ where 
\[
f_{e',e}:\Bary^{(1)}(e')\to \Delta(e)
\] 
is a simplicial map\footnote{There might be more than one simplicial map from $\Bary^{(1)}(e')$ to $\Delta(e)$, in which case one selects one of them arbitrarily for defining the algorithm solving~$\Pi$.} solving the local task $\Pi_{e',e}$. 
\end{itemize}
This algorithm is well defined, as if $p$ has view $\{w,w'\}$, then it knows $x$ and $x'$, and it can compute $y$ and $y'$ from the views $w$ and~$w'$. 
To show correctness, let $\sigma\in \I$, and let us show that our algorithm produces a simplex $\tau\in \Delta(\sigma)$. 
If $\sigma$ is a vertex~$x$, then all processes output $y\in\Delta'(x)$, which is a vertex of~$\Delta(x)$, as desired. 
If $\sigma$ is an edge $e=\{x,x'\}$, then let $\{w,w'\}\in\Bary^{(t-1)}(e)$ be the edge of the barycentric subdivision corresponding the the current configuration after $t-1$ rounds. Assume, w.l.o.g., that, during the $t$-th round, some processes (maybe none) get view~$\{w\}$ while some other processes (at least one) get view $\{w,w'\}$ --- the latter view is a vertex of $\Bary^{(t)}(e)$, whereas it was an edge of $\Bary^{(t-1)}(e)$. 
Note that starting from $\{w,w'\}\in\Bary^{(t-1)}(e)$, it is not possible that some processes reads a view $\{w\}$ in $\Bary^{(t)}(e)$ while some other processes reads $\{w'\}$ in $\Bary^{(t)}(e)$. 
It follows that a group of processes may output $y=\delta'(x)$ while another group of processes may output $z=f_{e',e}(\{y,y'\})$. The crucial property is that $f_{e',e}$ fixes vertices, that is, $f_{e',e}(\{y\})=y$. Since $f_{e',e}$ is simplicial and agrees with $\Delta_{e',e}$, we get that 
\[
\{y,z\}=\{f_{e',e}(\{y\}),f_{e',e}(\{y,y'\})\}\in \Delta_{e',e}(e')=\Delta(e), 
\]
as desired. This completes the proof of the theorem. 
\end{proof}

\begin{figure}[!t]
\centering
\includegraphics[width=12cm]{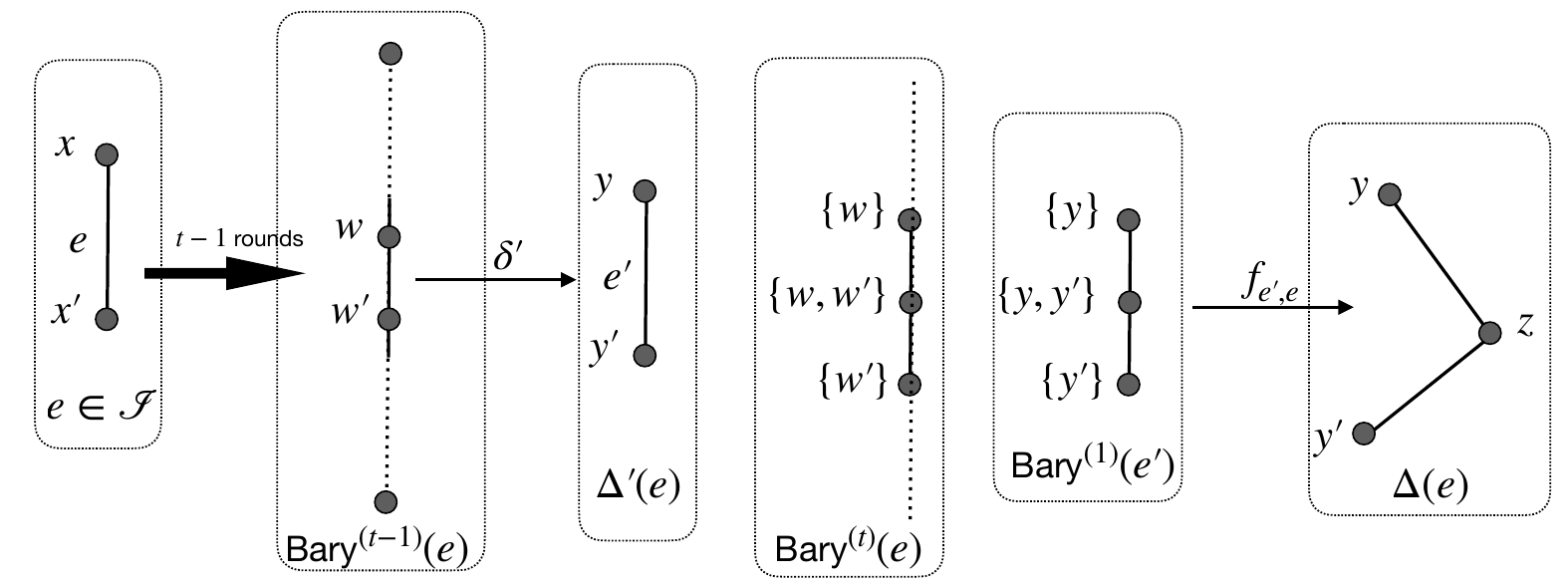}
\caption{\sl Proof of Theorem~\ref{theo:closure-reciprocal}}
\label{fig:theo-reciprocal}
\end{figure}

By combining Theorems~\ref{theo:closure} and~\ref{theo:closure-reciprocal}, we obtain the desired result. 

\begin{corollary}\label{cor:round-reduction-complete}
The round-reduction proof techniques is complete for 1-dimensional colorless tasks and wait-free colorless algorithms.
\end{corollary}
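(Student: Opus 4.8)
The plan is to combine the speedup theorem (Theorem~\ref{theo:closure}) with its $1$-dimensional reciprocal (Theorem~\ref{theo:closure-reciprocal}) into an \emph{exact} characterization of round complexity, and then run the round-reduction argument down to the fixed point of~$\Cl$.

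First I would record two elementary stability properties of $\Cl$ on $1$-dimensional tasks. Since $\Cl$ leaves $\I$ untouched and only adds output simplices, each of dimension at most $\dim(\sigma)\le 1$, the task $\Cl(\Pi)$ is again $1$-dimensional, and hence so is every iterate $\Cl^{(k)}(\Pi)$; this is what will let us reapply Theorem~\ref{theo:closure-reciprocal} at every step of the chain. Moreover, because $\Delta(\sigma)\subseteq\Delta'(\sigma)$ for all $\sigma\in\I$, any simplicial map witnessing $0$-round solvability of a task~$Q$ also witnesses $0$-round solvability of $\Cl(Q)$; equivalently, \emph{non}-$0$-round-solvability propagates backward along the closure chain. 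Finally, as already observed before the definition of the fixed point, the chain $\Pi,\Cl(\Pi),\Cl^{(2)}(\Pi),\dots$ stabilizes at a fixed point $\Pi^\ast=\Cl^{(t_0)}(\Pi)$ after a finite number $t_0$ of steps.

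Second I would derive the promised equivalence: for every $1$-dimensional colorless task $\Pi$, every $n\ge 2$, and every $t\ge 1$, $\Pi$ is solvable in $t$ rounds by $n$ processes running a wait-free colorless algorithm if and only if $\Cl(\Pi)$ is solvable in $t-1$ such rounds --- the forward implication is Theorem~\ref{theo:closure}, and the converse is Theorem~\ref{theo:closure-reciprocal}, which applies since $\Cl(\Pi)$ is $1$-dimensional. Iterating this equivalence $t_0$ times, together with the fact that $0$-round solvability is preserved by $\Cl$, yields that $\Pi$ is wait-free solvable (i.e., solvable in some finite number of rounds) if and only if its fixed point $\Pi^\ast$ is solvable in zero rounds; by Lemma~\ref{lem:0round}, $\Pi^\ast$ is moreover either $0$-round solvable or not solvable at all.

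Finally I would conclude completeness. Suppose a $1$-dimensional colorless task $\Pi$ is not wait-free solvable by $n\ge 2$ processes running a colorless algorithm. By the previous paragraph, $\Pi^\ast=\Cl^{(t_0)}(\Pi)$ is not $0$-round solvable, and by backward propagation of non-$0$-round-solvability, $\Cl^{(t)}(\Pi)$ is not $0$-round solvable for every $t\ge 0$. Computing the iterates $\Cl^{(t)}(\Pi)$ and certifying, via Lemma~\ref{lem:colorless-solvability}, that none of them is solvable in zero rounds --- which becomes effective once the fixed point is reached --- is exactly a round-reduction impossibility proof for $\Pi$, its soundness being the contrapositive of Theorem~\ref{theo:closure}. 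Hence every unsolvable $1$-dimensional colorless task admits a round-reduction impossibility proof, as claimed. I expect the only delicate point to be the bookkeeping that links ``$\Pi$ unsolvable'' with ``$\Pi^\ast$ not $0$-round solvable'' in both directions --- in particular ensuring that $1$-dimensionality is preserved along the entire closure chain, so that Theorem~\ref{theo:closure-reciprocal} can legitimately be reapplied at each of the $t_0$ unwinding steps.
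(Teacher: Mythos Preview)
Your proposal is correct and follows the same approach as the paper, which simply states that the corollary is obtained ``by combining Theorems~\ref{theo:closure} and~\ref{theo:closure-reciprocal}''; you have spelled out in full the fixed-point argument that the paper sketches in the introduction (preservation of $1$-dimensionality along the closure chain, stabilization at $\Pi^\ast$, and the equivalence between solvability of $\Pi$ and $0$-round solvability of $\Pi^\ast$). Your added care in checking that $\Cl$ keeps the task $1$-dimensional so that Theorem~\ref{theo:closure-reciprocal} can be reapplied at each step is exactly the bookkeeping the paper leaves implicit.
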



\section{Relations Between Round-Reduction and FLP-Style Proofs}
\label{sec:reductions}

In this section we establish tight connections between the FLP-style proof strategy and the round-reduction proof technique, in the context of wait-free solvability of colorless tasks.
Specifically, we first establish Theorem~\ref{theo:B} which asserts
	that the existence of a round-reduction proof implies the existence of an FLP-style proof,
from which the remaining part of Theorem~\ref{theo:C} (completeness of FLP-style proofs for 1-dimensional tasks) will follow.
This implies that the
round-reduction techniques and FLP-style proofs have the same power when considering impossibility proofs for 1-dimensional colorless tasks.
In Theorem~\ref{theo:ex-to-rr-colorless-dim1},
we give a direct proof for this:
the existence of
an FLP-style proof implies the existence of a round-reduction proof for the impossibility of such tasks
(complementing Theorem~\ref{theo:B}, for 1-dimensional tasks).
This direct proof may suggest that a more tight connection between the two proof techniques exists.
We start by formally defining FLP-style proofs.

\subsection{FLP-Style Proofs}

Given a colorless task $\Pi=(\I,\O,\Delta)$, an FLP-style proof constructs an infinite sequence of simplices $\sigma_0,\sigma_1,\dots$, where $\sigma_0\in \I=\Bary^{(0)}(\I)$, and, for every $t\geq 1$, $\sigma_t\in \Bary^{(1)}(\sigma_{t-1})\subseteq \Bary^{(t)}(\I)$, as follows.

The proof $P$ assumes for contradiction the existence of an algorithm $A$ solving $\Pi$.
$P$ starts by asking $A$ to reveal, for every $\sigma\in\I$, the \textit{valency} of~$\sigma$, that is, the set of output values that are returned by $A$ in executions starting with the input values forming the simplex~$\sigma$.
Then, $P$ chooses a simplex $\sigma_0\in \I$; in order to create an infinite sequence, $P$ must choose $\sigma_0$ such that $A$ is not able to claim it terminates in $0$ rounds, i.e., such that any possible assignment of outputs to the processes in~$\sigma_0$ is inconsistent with the valencies of~$\I$.

Given a sequence $\sigma_0,\dots,\sigma_t$ constructed by $P$ so far, $\sigma_{t+1}$ is obtained analogously, as follows. $P$ asks $A$ to reveal the valencies of all simplices in $\Bary^{(1)}(\sigma_t)$, that is,
for each $\sigma\in\Bary^{(1)}(\sigma_t)$, the set of output values produced by $A$ in all valid executions starting from~$\sigma$.
Based on these valencies, $P$ chooses one simplex $\sigma_{t+1}\in\Bary^{(1)}(\sigma_t)$;
as in the choice of $\sigma_0$, it must choose $\sigma_{t+1}$ such that $A$ is not able to assign outputs to the processes in $\sigma_{t+1}$ consistent with the valencies of~$\Bary^{(1)}(\sigma_t)$.

Let $\val:2^{\sigma_0}\cup 2^{\sigma_1}\cup\cdots\to 2^\O$ be the function defined by the valencies returned by $A$.
For any correct algorithm, the function $\val$ must satisfy some basic conditions of consistency with the task specification and with other valencies returned.
We next formalize these conditions for  $\sigma_t$.
\begin{itemize}
	\item
	Consistent with itself:
	each $\sigma\in\Bary^{(1)}(\sigma_t)$ satisfies
	$\val(\sigma)\subseteq\val(\sigma_t)$;
	moreover, $\cup_{\sigma\in\Bary^{(1)}(\sigma_t)}\val(\sigma)=\val(\sigma_t)$.	
	
	\item
	Consistent with $\Delta$:
	For each $\sigma\in\Bary^{(1)}(\sigma_t)$, $\val(\sigma)\subseteq\Delta(\sigma_0)$.

	\item
	Monotone: for each $\sigma'\subseteq\sigma\in\Bary^{(1)}(\sigma_t)$,
	$\val(\sigma')\subseteq\val(\sigma)$.
\end{itemize}

If this strategy can proceed forever, constructing an infinite sequence $\sigma_0,\sigma_1,\dots$ of simplices, then $A$ does not terminate in this execution, disproving the existence of an algorithm~$A$ solving $\Pi$.
At the core of FLP-style proofs stands a \emph{choice mechanism} that picks the next simplex~$\sigma_{t+1}$.
We present such a mechanism for one-dimensional colorless tasks.

Our work on the FLP-style technique continues recent lines of work regarding the power of extension-based proofs~\cite{AlistarhAEGZ19,AttiyaCR20,BrusseE21}.
We allow less diverse queries compared to these works, yet our results imply that if a one-dimensional colorless task is unsolvable, then the simple queries we allow are sufficient for proving this impossibility.

\paragraph*{Example}
The impossibility of solving the hexagon task can be proved using an FLP-style proof,
by constructing a sequence $\sigma_0,\sigma_1,\ldots$ as follows.
All the simplices $\sigma_0,\sigma_1,\ldots$ will be edges, and we will fix
$i\in\{0,1,2\}$
such that each $\sigma_t$ satisfies the invariants
\[(1)\;\;
 \val(\sigma_t)\subseteq\{v_i,v_{i+1}\}\cup\{v_{i+3},v_{i+4}\}\text{, and}\]
\[(2)\;\;
 \val(\sigma_t)\cap\{v_i,v_{i+1}\}\neq\emptyset \text{ and } \val(\sigma_t)\cap\{v_{i+3},v_{i+4}\}\neq\emptyset,\]
where the indices here and below are computed modulo $6$, unless otherwise specified.

To choose $\sigma_0$ and $i$, inspect the valencies of the three edges $(u_i,u_{i+1\bmod 3})$, for $i\in\{0,1,2\}$.
Since $\val(u_i)\subseteq \val (u_{i-1\bmod 3},u_i)\cap \val(u_i,u_{i+1\bmod 3})$ and valency is monotone, the valencies of every two edges must intersect, and thus for some $i\in\{0,1,2\}$ the edge $e=(u_i,u_{i+1\bmod 3})$
must satisfy both
$\val(e)\cap\{v_i,v_{i+1}\}\neq\emptyset$ and $\val(e)\cap\{v_{i+3},v_{i+4}\}\neq\emptyset$;
we fix this $i$, and set $\sigma_0=e$,
guaranteeing Invariant~$(2)$.
Invariant~$(1)$ holds since the valency must be consistent with~$\Delta$.

Assume $\sigma_0,\ldots,\sigma_t$ are chosen and satisfy both invariants, and that $\val(\sigma)$ is known for each $\sigma\in\Bary_1(\sigma_t)$.
Since $\val(\sigma)\subseteq\val(\sigma_t)$ for every
$\sigma\in\Bary_1(\sigma_t)$,
Invariant~$(1)$ will hold for any
$\sigma_{t+1}\in\Bary_1(\sigma_t)$ we may choose.
Let $\sigma_t=\{w_0,w_1\}$,
then $\Bary_1(\sigma_t)$ is composed of the vertices $\{w_0\},\{w_0,w_1\},\{w_1\}$ and the edges
$e_0=(\{w_0\},\{w_0,w_1\})$ and
$e_1=(\{w_1\},\{w_0,w_1\})$.

We have $\val(e_0)\cap\val(e_1)\neq\emptyset$,
and by Invariant~$(1)$ we also have $\val(e_0)\cap\val(e_1)\subseteq
\val(\sigma_t)\subseteq\{v_i,v_{i+1},v_{i+3},v_{i+4}\}$,
so at least one value $v\in\{v_i,v_{i+1},v_{i+3},v_{i+4}\}$ satisfies
$v\in\val(e_0)\cap\val(e_1)$;
assume without loss of generality that $v\in\{v_i,v_{i+1}\}$.

The valencies of the vertices are contained in the valencies of the edges,
hence $\val(e_0)\cup\val(e_1)=\val(\sigma_t)$,
so Invariant~$(2)$ implies that
both
$(\val(e_0)\cup\val(e_1))\cap\{v_i,v_{i+1}\}\neq\emptyset$ and
$(\val(e_0)\cup\val(e_1))\cap\{v_{i+3},v_{i+4}\}\neq\emptyset$ hold.
Hence, at least one edge $e\in\{e_0,e_1\}$
has
$\val(e)\cap\{v_{i+3},v_{i+4}\}\neq\emptyset$,
and since $v\in\val(e)$ it also have
$\val(e)\cap\{v_{i},v_{i+1}\}\neq\emptyset$.
This edge is set as $\sigma_{t+1}$, and Invariant~$(2)$ is satisfied for $\sigma_{t+1}$ as well.
Since this process can continue for every $t\geq 0$, the proof is complete.

\subsection{Round-Reduction Proofs Imply FLP-Style Proofs}

This section is dedicated to the proof of Theorem~\ref{theo:B}.

\begin{theorem}
	\label{theo:rr-to-ex-colorless}	
	For every colorless task~$\Pi$ and $n\geq2$, if there is a round-reduction  proof establishing the impossibility of solving $\Pi$ by $n$ processes running a wait-free colorless algorithm, then there is an FLP-style proof establishing the same impossibility.
\end{theorem}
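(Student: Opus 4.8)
The plan is to show that a round-reduction impossibility proof for $\Pi$ — which, by definition, amounts to exhibiting a $t\geq 1$ (or showing that for all $t$) such that $\Cl^{(t)}(\Pi)$ is not $0$-round solvable, i.e.\ that its fixed point $\Pi^\ast$ is not $0$-round solvable — can be unrolled into an FLP-style choice mechanism. The key observation is that $0$-round non-solvability of $\Pi^\ast$ is a \emph{finite, local} obstruction: there is some input simplex $\sigma^\ast\in\I$ (equivalently, some face of an input simplex) for which no assignment of a single output vertex to each input vertex of $\sigma^\ast$ is consistent with $\Delta^\ast$ restricted to the faces of $\sigma^\ast$. This is precisely the kind of obstruction that forces the prover's choice of $\sigma_0$: any algorithm $A$ solving $\Pi$ reveals valencies, and the prover should select the initial simplex inside $\sigma^\ast$ so that $A$ cannot claim termination in $0$ rounds.

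\textbf{The main steps, in order.} First, I would unpack the definition of a round-reduction proof and reduce it to the statement ``$\Pi^\ast$ is not $0$-round solvable by $n$ processes,'' using Lemma~\ref{lem:0round}; if $\Pi$ has a round-reduction impossibility proof at all, then by Theorem~\ref{theo:closure} and the fixed-point discussion, $\Pi^\ast$ must be in the ``not solvable at all'' case, hence in particular not $0$-round solvable. Second, I would set up the bijection-in-spirit between ``valencies that $A$ announces after $t-1$ rounds'' and ``the output complex $\Delta^{(1)}$ of the closure'': the content of the speedup theorem (Theorem~\ref{theo:closure}) is that if $A$ solves $\Pi$ in $t-1$ rounds at level $t-1$ of the subdivision, the induced valencies behave like a solution of $\Cl(\Pi)$. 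So I would prove, by induction on $t$, that whenever the prover has reached $\sigma_t$ with $\val$ consistent with $\Delta$ on all faces of $\sigma_t$, the restriction of $\val$ to $\Bary^{(1)}(\sigma_t)$ must be consistent with $\Delta^{(1)}$ — hence with $\Cl(\Pi)$ — on the ``coarsened'' simplex, and therefore the prover may invoke the $0$-round obstruction for $\Pi^\ast$ one level deeper. Third, I would show the prover never gets stuck: because $\Pi^\ast$ is not $0$-round solvable, at \emph{every} stage $t$ the valency $\val(\sigma_t)$ contains a sub-configuration that $A$ cannot legally collapse to a single output (in the local sense defining $\Delta'$), so there is always a child $\sigma_{t+1}\in\Bary^{(1)}(\sigma_t)$ on which $A$ still cannot output, by the consistency/monotonicity conditions on $\val$. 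Iterating forever yields the infinite sequence, contradicting that $A$ halts.

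\textbf{The technical heart} is translating ``$\Cl^{(t)}(\Pi)$ not $0$-round solvable'' into a \emph{propagating} invariant on valencies. The subtle point — the same one flagged in the paper's discussion of set agreement — is that $\Cl$ only adds output simplices, so the obstruction ``some $\tau\notin\Delta^\ast(\sigma)$'' must be phrased so that it survives passing from $\sigma_t$ to a chosen child $\sigma_{t+1}$: one needs that if $A$'s valencies were legal for $\Cl(\Pi)$ at this level, then they would have been legal for $\Pi$ at the next level down, which is exactly the reciprocal direction captured (for dimension $1$) by Theorem~\ref{theo:closure-reciprocal} and, in general, by the local-task solvability definition of $\Delta'$. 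The cleanest route is: assume $A$ solves $\Pi$ in some $t$ rounds; then iterating Theorem~\ref{theo:closure} gives a $0$-round solution to $\Cl^{(t)}(\Pi)=\Pi^\ast$ (once $t$ exceeds the fixed-point index), contradicting the hypothesis of the round-reduction proof directly — but to get an \emph{FLP-style} proof we must not assume $t$ is known, so instead we run the prover, and at each step use the \emph{local} statement underlying Definition~\ref{def:colorless-closure} to certify that $A$'s announced valencies on $\Bary^{(1)}(\sigma_t)$ cannot be collapsed, exactly because the corresponding local task $\Pi_{\tau,\sigma}$ is not $1$-round solvable when $\tau\notin\Delta'(\sigma)$.

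\textbf{The main obstacle} I anticipate is bookkeeping the identification between the geometric object the prover manipulates ($\sigma_t\subseteq\Bary^{(t)}(\I)$, a simplex in the $t$-th subdivision) and the ``virtual'' task $\Cl^{(t)}(\Pi)$ living on the original vertex set — in particular, making precise that a vertex $w$ of $\Bary^{(t-1)}(\sigma)$, together with its valency $\val(w)$, plays the role of an output vertex of $\Cl^{(t-1)}(\Pi)$, and that edges of $\Bary^{(1)}(\sigma_{t-1})$ correspond to the pairs whose membership in $\Delta^{(t)}$ is exactly what the closure operator decides. Once this dictionary is fixed, the argument is a routine induction: the prover's ``choice mechanism'' is just ``descend along a child simplex on which the fixed-point obstruction is still witnessed,'' and such a child always exists by the consistency conditions on $\val$ combined with $\Cl(\Pi^\ast)=\Pi^\ast$ not being $0$-round solvable.
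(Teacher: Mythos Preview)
Your proposal has a genuine gap: it never identifies a concrete, propagating invariant that the prover can check from valencies alone and that forces a choice of $\sigma_{t+1}$. The paper's proof does \emph{not} work by tracking the tower $\Cl^{(t)}(\Pi)$ level by level as you suggest; instead it uses Theorem~\ref{theo:connectivity} (the connectivity of the closure) in an essential way. From the non-$0$-round-solvability of $\Pi^\ast$ and the fact that each connected component of $\Delta^\ast(\sigma)$ is complete up to $\dim(\sigma)$, the paper extracts two input vertices $x,x'$ whose solo outputs lie in \emph{distinct connected components} of $\Delta(\sigma_0)$ for $\sigma_0=\{x,x'\}$. The invariant maintained is then simply ``$\sigma_t$ is an \emph{edge} whose valency meets two distinct connected components of $\Delta(\sigma_0)$'' (bivalence). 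This invariant (i) implies $A$ cannot decide at $\sigma_t$, since any output simplex lies in a single component, and (ii) propagates by an elementary pigeonhole on the two edges of $\Bary^{(1)}$ of an edge. Crucially, Theorem~\ref{theo:connectivity} also guarantees that connected components never merge under closure, which is what makes ``bivalent w.r.t.\ $\Delta(\sigma_0)$'' a stable notion.

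Your proposed dictionary---``a vertex $w\in\Bary^{(t-1)}(\sigma)$ together with $\val(w)$ plays the role of an output vertex of $\Cl^{(t-1)}(\Pi)$''---does not typecheck: $\val(w)$ is a \emph{set} of output values, not a single vertex of $\O$, so it is unclear which $\tau$ you would test for membership in $\Delta'(\sigma)$. Likewise, the step ``$\Pi_{\tau,\sigma}$ is not $1$-round solvable when $\tau\notin\Delta'(\sigma)$, hence $A$'s valencies cannot be collapsed'' is not a choice mechanism: it does not tell the prover \emph{which} child of $\sigma_t$ still carries an obstruction, nor why one must exist. Without the reduction to an edge and the connected-component invariant, you would have to argue directly on an arbitrary-dimensional $\sigma_t$ and its full barycentric subdivision, and there the abstract closure definition alone does not yield a pigeonhole. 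In short, the missing ingredient is Theorem~\ref{theo:connectivity}; once you invoke it, the FLP prover collapses to the classical ``maintain a bivalent edge'' argument.
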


\begin{proof}	
	Fix a colorless task $\Pi$ that has a round-reduction impossibility proof.
	By Lemma~\ref{lem:0round}, $\Pi^\ast$ is not $0$-round solvable.	
	For each simplex $\sigma\in\O$,
	all the connected components of $\Delta^{\ast}(\sigma)$ are complete up to dimension~$\dim(\sigma)$ by Claim~\ref{claim:therest}:
	if some simplices of dimension at most $\dim(\sigma)$ are missing in $\Delta^{\ast}(\sigma)$ then at least one of them would have been added to it when applying the closure operator, contradicting the fact that $\Pi^\ast$ is a fixed-point.
	
	To construct an FLP-style proof, we consider an algorithm $A$ that claims to solve $\Pi$,
	and
	define $\delta:V(\I)\to V(\O)$ to map each input value $x\in\I$ to the output value $\delta(x)$ produced by~$A$ in the execution where only the value $x$ appears;
	$\delta(x)$ is unique since the execution is unique and the algorithm is deterministic.
	The fact that $\Cl^\ast(\Pi)$ is not $0$-round solvable means that
	there is a simplex $\sigma\in \I$ such that $\delta(\sigma)=\{\delta(x)\mid x\in\sigma\}\notin\Delta^{\ast}(\sigma)$.
	As the connected components are complete up to dimension $\dim(\sigma)$, the simplex $\sigma$ is mapped to (at least) two different connected components, i.e. there are two input values $x,x'\in \sigma$
	and two connected components $C,C'$ of $\Delta(\sigma)$ such that $\delta(x)\in C$ and $\delta(x')\in C'$.
	
	Let $\sigma_0=\{x,x'\}$.
	As $\sigma_0\subseteq\sigma$, the fact that $\Delta^\ast$ is a carrier map (Lemma~\ref{lem: closure is carrier})
	implies that the connected components of $\Delta(\sigma_0)$ are a refinement of the connected components of $\Delta(\sigma)$.
	Hence, there is
	a connected component $C_0\subseteq C$ of $\Delta(\sigma_0)$ such that 	
	$\val(x)\cap C_0\neq\emptyset$,
	and similarly
	a different connected component $C'_0\subseteq C'$ of $\Delta(\sigma_0)$ such that 	
	$\val(x')\cap C'_0\neq\emptyset$.
	Note that Theorem~\ref{theo:connectivity} asserts that the connected components of $\Delta(\sigma_0)$ and of $\Delta^{\ast}(\sigma_0)$ are the same.

	Let us say that a configuration reachable from $\sigma_0$ is \emph{bivalent} (w.r.t.~$\sigma_0$) if a valency query on it returns output values in at least two different connected components of $\Delta(\sigma_0)$.
	Note that $\sigma_0$ is bivalent by construction, and that if the algorithm is in a bivalent configuration it cannot decide without taking further steps.
	
	We construct an infinite sequence $\sigma_0,\sigma_1,\ldots$ of bivalent configurations.	
	Each $\sigma_t$ will consist only of two views
	\[
	\sigma_t=\{w_t,w'_t\}.
	\]
	
	Assume $\sigma_0,\ldots,\sigma_t$ are chosen and $\val(\sigma)$ is known for each $\sigma\in\Bary_1(\sigma_t)$.
	%
	Recall that $\Bary_1(\sigma_t)$ is composed of the vertices $\{w_t\},\{w_t,w'_t\},\{w'_t\}$ and the edges
	$e=(\{w_t\},\{w_t,w'_t\})$ and
	$e'=(\{w'_t\},\{w_t,w'_t\})$.
	Let $C_t$ be a connected component of $\Delta(\sigma_0)$ such that $\val(\{w_t,w'_t\})\cap C_t\neq\emptyset$.
	
	The fact that $\sigma_t$ is bivalent implies that there is a connected component
	$C'_t$ of $\Delta(\sigma_0)$, $C'_t\neq C_t$,
	such that $\val(\sigma_t)\cap C'_t\neq\emptyset$.
	As the valencies of the vertices are contained in the valencies of the edges,
	we have $\val(e)\cup\val(e')=\val(\sigma_t)$.
	Hence, at least one edge $f\in\{e,e'\}$
	has
	$\val(f)\cap C'_t\neq\emptyset$.
	Since $\{w_t,w'_t\}\in f$,
	it also has
	$\val(f)\cap C_t \neq\emptyset$.
	The edge $f$ is thus bivalent, and we set it as $\sigma_{t+1}$.
	This process can continue for every $t\geq 0$, and the proof is complete.
\end{proof}

\subsection{FLP-Style Proofs are Complete for 1-Dimensional Tasks}

This section is dedicated to the proof of Theorem~\ref{theo:C} for FLP-style proofs,
	i.e. for showing these are complete for 1-dimensional colorless tasks.
For this, first observe that if there is an FLP-style proof for the impossibility of a $1$-dimensional colorless task~$\Pi$, then $\Pi$ is unsolvable.
By the completeness of the round-reduction proofs (cf. Corollary~\ref{cor:round-reduction-complete}), there is a round-reduction impossibility proof for that task.
On the other hand, Theorem~\ref{theo:rr-to-ex-colorless} asserts that if a colorless task has a round-reduction impossibility proof then it also has an FLP-style impossibility proof.
The establishes the desired equivalence between the two forms of proofs for colorless tasks.

\begin{corollary}
	\label{cor:flp iff rr dim 1}
Let $\Pi$ be a $1$-dimensional colorless task.
There is an FLP-style proof for the impossibility of solving $\Pi$ using wait-free colorless algorithms if and only if there is a round-reduction proof of this impossibility for~$\Pi$.
\end{corollary}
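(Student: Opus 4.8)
The plan is to prove the two directions separately, each by assembling results already in hand rather than by any new topological construction. The ``if'' direction is immediate: Theorem~\ref{theo:rr-to-ex-colorless} (stated above as Theorem~\ref{theo:B}) asserts that \emph{any} colorless task admitting a round-reduction impossibility proof also admits an FLP-style impossibility proof, so in particular this holds when $\Pi$ is $1$-dimensional. Hence for that direction I would simply cite it.

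For the ``only if'' direction I would proceed in two steps. First, establish the \emph{soundness} of FLP-style proofs: if a $1$-dimensional colorless task $\Pi$ admits an FLP-style impossibility proof, then $\Pi$ is not wait-free solvable by colorless algorithms. Suppose, towards a contradiction, that some colorless algorithm $A$ solves $\Pi$. By Lemma~\ref{lem:colorless-solvability}, $A$ does so within a finite number $T$ of rounds, witnessed by a simplicial map $f:\Bary^{(T)}(\Skel_n(\I))\to\O$ agreeing with $\Delta$. Let $A$ answer each valency query on a configuration $\rho$ reached after $t\le T$ rounds by the set $\bigcup\{f(\rho')\mid \rho'\in\Bary^{(T-t)}(\rho)\}$ of output values still reachable; one checks that these answers satisfy the self-consistency, $\Delta$-consistency, and monotonicity requirements imposed on $\val$ in the definition of an FLP-style proof, and that once $t=T$ every configuration $\rho$ has valency $f(\rho)\in\Delta(\sigma_0)$, a simplex on which $A$ can legally and consistently output. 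Consequently the prover cannot pick a valid $\sigma_T\in\Bary^{(1)}(\sigma_{T-1})$ (every candidate admits a consistent output assignment), so the construction halts after at most $T$ steps, contradicting the assumption that it produces an infinite sequence against every algorithm claiming to solve $\Pi$. Second, having concluded that $\Pi$ is unsolvable, I would invoke Corollary~\ref{cor:round-reduction-complete} (completeness of round-reduction for $1$-dimensional colorless tasks) --- equivalently Lemma~\ref{lem:0round} applied to the fixed point $\Pi^\ast$ --- to get that $\Pi^\ast$ is not $0$-round solvable, which is precisely a round-reduction impossibility proof for $\Pi$. Alternatively, one can bypass the detour through unsolvability and obtain FLP $\Rightarrow$ round-reduction directly for $1$-dimensional tasks via Theorem~\ref{theo:ex-to-rr-colorless-dim1}; I would mention this but treat the route through Corollary~\ref{cor:round-reduction-complete} as the primary one.

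The only genuinely new reasoning is the soundness step, and that is where I expect the (mild) difficulty to lie: one must carefully translate ``$A$ actually solves $\Pi$, witnessed by $f$ after $T$ rounds'' into valency answers that literally meet the formal conditions on $\val$, and argue that they leave the prover with no legal move at round $T$. Everything else is a direct appeal to Theorem~\ref{theo:B} and Corollary~\ref{cor:round-reduction-complete}, so combining the two implications yields the stated equivalence.
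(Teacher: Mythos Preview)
Your proposal is correct and follows essentially the same route as the paper: the ``if'' direction is Theorem~\ref{theo:rr-to-ex-colorless}, and the ``only if'' direction goes via soundness of FLP-style proofs plus Corollary~\ref{cor:round-reduction-complete}, with Theorem~\ref{theo:ex-to-rr-colorless-dim1} noted as an alternative direct argument. The only difference is that the paper asserts soundness of FLP-style proofs in a single sentence (``first observe that if there is an FLP-style proof\dots then $\Pi$ is unsolvable'') without spelling out the valency-answer construction you provide; your elaboration of that step is fine but more than the paper deems necessary.
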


This corollary can also be proved directly: one direction is  Theorem~\ref{theo:rr-to-ex-colorless} (for any dimension),
and the other is given in Theorem~\ref{theo:ex-to-rr-colorless-dim1} (below).
We get that, for 1-dimensional tasks, the FLP-style proof style can be mechanized, i.e., for any $1$-dimensional task~$\Pi$, the FLP-style proof technique succeeds for~$\Pi$ if and only if~$\Pi$ is not wait-free solvable in the IIS model.

\begin{corollary}
	\label{cor:flp is complete for 1 dim}
The FLP-style proof technique is complete for $1$-dimensional tasks.
\end{corollary}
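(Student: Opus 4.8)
The plan is to chain together two results already in place: the completeness of the round-reduction technique for $1$-dimensional colorless tasks (Corollary~\ref{cor:round-reduction-complete}) and the implication ``round-reduction proof $\Rightarrow$ FLP-style proof'' (Theorem~\ref{theo:rr-to-ex-colorless}). First I would pin down what ``complete'' means here: the FLP-style technique is complete for a class of tasks if, for every task $\Pi$ in the class, an FLP-style proof succeeds for $\Pi$ exactly when $\Pi$ is not wait-free solvable by $n\geq 2$ colorless processes. The ``only if'' direction is soundness and is immediate from the definition: if the infinite sequence $\sigma_0,\sigma_1,\dots$ can be built, then any hypothetical algorithm~$A$ fails to output along this execution, so no colorless algorithm solves~$\Pi$. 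Hence the real content is the ``if'' direction --- every unsolvable $1$-dimensional colorless task admits an FLP-style impossibility proof --- and that is what the argument below targets.

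So fix a $1$-dimensional colorless task $\Pi$ that is not wait-free solvable by $n\geq 2$ processes. The first step produces a round-reduction impossibility proof for~$\Pi$. Here I would use that iterating $\Cl$ reaches a fixed point $\Pi^\ast$ (because $\Cl$ preserves $V(\I)$ and $V(\O)$ and only ever adds output simplices), and then invoke Lemma~\ref{lem:0round}: $\Pi^\ast$ is either $0$-round solvable or not solvable at all. If $\Pi^\ast$ were $0$-round solvable, then by the reciprocal speedup theorem for $1$-dimensional tasks (Theorem~\ref{theo:closure-reciprocal}), applied repeatedly to unwind $\Pi^\ast=\Cl^{(t)}(\Pi)$ one step at a time, $\Pi$ itself would be solvable --- contradiction. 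Therefore $\Pi^\ast$ is \emph{not} $0$-round solvable, which is exactly a successful round-reduction impossibility proof for~$\Pi$ (this is precisely Corollary~\ref{cor:round-reduction-complete}).

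The second step converts this round-reduction proof into an FLP-style proof by applying Theorem~\ref{theo:rr-to-ex-colorless}, which holds for colorless tasks of arbitrary dimension and in particular for our $1$-dimensional~$\Pi$; this yields the desired FLP-style impossibility proof and closes the argument. I do not expect a genuine obstacle, since all the difficulty has been absorbed into the earlier theorems; the one point to flag is that the round-reduction step relies on the \emph{reciprocal} direction of the speedup theorem (Theorem~\ref{theo:closure-reciprocal}), which is established only in dimension~$1$. This is exactly why the completeness statement is restricted to $1$-dimensional tasks, and why higher-dimensional tasks such as set agreement fall outside the scope of this argument. (As a by-product, since $0$-round solvability is decidable via Lemma~\ref{lem:colorless-solvability} and the fixed point is reached after a bounded number of iterations, the whole FLP-style proof can be generated mechanically from~$\Pi$.)
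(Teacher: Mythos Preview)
Your proposal is correct and follows essentially the same route as the paper: both combine Theorem~\ref{theo:rr-to-ex-colorless} with the $1$-dimensional completeness of round-reduction (via Lemma~\ref{lem:0round} and Theorem~\ref{theo:closure-reciprocal}). The paper phrases it contrapositively (FLP fails $\Rightarrow$ round-reduction fails $\Rightarrow$ $\Pi^\ast$ is $0$-round solvable $\Rightarrow$ $\Pi$ is solvable), whereas you argue directly and also spell out the soundness direction, but the content is the same.
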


\begin{proof}
By Theorem~\ref{theo:rr-to-ex-colorless}, if  the FLP-style proof technique fails for~$\Pi$, then the round-reduction proof also fails for~$\Pi$ as well.
By Lemma~\ref{lem:0round}, this implies that $\Pi^\ast$ is $0$-round solvable.
By Theorem~\ref{theo:closure-reciprocal}, the original task~$\Pi$ is solvable.
\end{proof}


We also give a direct proof for the converse of 
Theorem~\ref{theo:rr-to-ex-colorless} for 1-dimensional tasks.

\begin{theorem}
	\label{theo:ex-to-rr-colorless-dim1}
	For every $1$-dimensional colorless task~$\Pi$ and $n\geq2$, if there is an FLP-style proof for the impossibility of solving $\Pi$ by $n$ processes running a wait-free colorless algorithm, then there is a round-reduction proof for the same impossibility.
\end{theorem}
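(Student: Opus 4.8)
The plan is to prove the contrapositive: assuming $\Pi$ is a $1$-dimensional colorless task for which the round-reduction technique \emph{fails}, we exhibit an algorithm $A$ that defeats every FLP-style proof. By Lemma~\ref{lem:0round}, the failure of round-reduction means that the fixed-point task $\Pi^\ast=(\I,\O^\ast,\Delta^\ast)$ is $0$-round solvable, and by Theorem~\ref{theo:closure-reciprocal} the original task $\Pi$ is in fact wait-free solvable; let $A$ be a colorless algorithm solving $\Pi$ in some number of rounds $T$ (via the simplicial map of Lemma~\ref{lem:colorless-solvability}). The point is that $A$, when answering valency queries, can always truthfully reveal valencies that are \emph{consistent} and that eventually force the FLP prover to stop, because $A$ really does decide. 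So the work is to show: for every sequence $\sigma_0,\sigma_1,\dots$ the prover might try to build against $A$, there is some finite $t$ at which the prover cannot choose $\sigma_{t+1}\in\Bary^{(1)}(\sigma_t)$ satisfying the FLP requirement (that $A$ cannot assign outputs to the processes of $\sigma_{t+1}$ consistent with the revealed valencies of $\Bary^{(1)}(\sigma_t)$).

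The key observation, specific to dimension $1$, is that the prover's sequence must consist of edges $\sigma_t=\{w_t,w'_t\}$ (as in the hexagon example and in the proof of Theorem~\ref{theo:rr-to-ex-colorless}), and by monotonicity and self-consistency of $\val$, the only way the prover can keep going is if $\val(\sigma_t)$ meets at least two distinct connected components of $\Delta(\sigma_0)$ — i.e. $\sigma_t$ must stay ``bivalent'' in the sense of Theorem~\ref{theo:rr-to-ex-colorless}. Here $\sigma_t$ is a simplex of $\Bary^{(t)}(\sigma_0)$, hence an edge of some $\Bary^{(t)}(\sigma_0)$, and since $A$ decides after $T$ rounds, once $t\geq T$ the valency $\val(\sigma_t)$ equals the single output value $\delta_A$ that $A$ produces on the corresponding view (or, if $\sigma_t$ is still an edge of $\Bary^{(T-1)}(\sigma_0)$ refined once more, a set of outputs forming a genuine simplex of $\O$). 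The plan is to show that such a $\val(\sigma_t)$, being a simplex of $\O$ contained in a single connected component of $\Delta(\sigma_0)$, cannot be bivalent — so the prover is stuck. More precisely, I would argue: if the prover survives to round $t$, then $\sigma_t$ is bivalent, hence it straddles two components $C_t\neq C'_t$ of $\Delta(\sigma_0)$; but $A$'s decision map is simplicial into $\O$ and agrees with $\Delta$, so for $t$ large enough all of $\Bary^{(1)}(\sigma_t)$ gets valencies that are subsets of a single $A$-output edge, which lies in one connected component of $\Delta(\sigma_0)$ by the second item of Theorem~\ref{theo:connectivity} (components are preserved under closure, and $A$'s outputs respect $\Delta$, hence $\Delta^\ast$). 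Contradiction.

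To package this as an honest round-reduction statement rather than a statement about a specific $A$: the theorem claims existence of a round-reduction proof, and we already have it — by Corollary~\ref{cor:round-reduction-complete}, for $1$-dimensional colorless tasks the round-reduction technique is complete, so if $\Pi$ is unsolvable (which it is, since it admits an FLP-style impossibility proof) then computing the fixed point $\Pi^\ast$ and checking that it is not $0$-round solvable constitutes a round-reduction impossibility proof. Thus the cleanest route is: (i) an FLP-style impossibility proof for $\Pi$ implies $\Pi$ is not wait-free solvable; (ii) apply Corollary~\ref{cor:round-reduction-complete} to get a round-reduction impossibility proof. The ``direct'' content promised by the theorem statement is then the first direction's converse flavor — I would additionally spell out, as above, how an assumed algorithm $A$ is used to construct, from the round-reduction data, the choice mechanism, mirroring the construction in Theorem~\ref{theo:rr-to-ex-colorless} but run backwards.

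The main obstacle I anticipate is the bookkeeping around what ``$\val$'' can legitimately be when $\sigma_t$ transitions from being an edge of $\Bary^{(t-1)}(\sigma_0)$ to a vertex of $\Bary^{(t)}(\sigma_0)$: one has to argue carefully that once the prover is forced into vertices corresponding to views after $A$'s decision round, the valencies collapse to single values (or to edges of $\O$ lying within one component of $\Delta(\sigma_0)$), killing bivalence. Handling the ``mixed'' configurations — where some processes see $\{w_t\}$ and others $\{w_t,w'_t\}$, exactly the subtlety exploited in the proof of Theorem~\ref{theo:closure-reciprocal} — is where the argument is most delicate, and I would lean on the same barycentric-subdivision fact used there (that you cannot have one process at $\{w\}$ and another at $\{w'\}$ simultaneously) to keep the valency sets small and within a single connected component.
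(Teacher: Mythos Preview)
Your proposal is logically correct, but it is precisely the \emph{indirect} route the paper already sketches in the paragraph preceding the theorem: an FLP-style impossibility proof implies $\Pi$ is unsolvable, and then Corollary~\ref{cor:round-reduction-complete} (which rests on Theorem~\ref{theo:closure-reciprocal}) yields a round-reduction proof. All the work you do in the contrapositive---building a real algorithm $A$ from the $0$-round solvability of $\Pi^\ast$, tracking bivalence of $\sigma_t$, worrying about mixed views---is unnecessary once you have invoked Theorem~\ref{theo:closure-reciprocal}: if $\Pi$ is genuinely solvable, any impossibility prover is defeated trivially, because $A$'s true valencies eventually collapse to decided values.

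The paper's proof is genuinely different and bypasses Theorem~\ref{theo:closure-reciprocal} altogether, going straight from the existence of an FLP-style proof to the non-$0$-round-solvability of $\Pi^\ast$. The two key ideas you are missing are: (i)~the existence of an FLP-style proof forces, for \emph{every} valency assignment $\val:\I\to 2^{\O}$, some edge $e$ with $\val(e)$ disconnected in $\Delta(e)$---because if $\val(e)$ were connected for the prover's chosen $e$, one could lay out a spanning walk of $\val(e)$ along a sufficiently fine subdivision of $e$ and answer all further queries consistently, defeating the prover; and (ii)~a hypothetical $0$-round solution $\delta$ for $\Pi^\ast$ manufactures exactly such a forbidden $\val$, by setting $\val(e)$ to be the vertex set of the connected component of $\Delta(e)$ containing $\delta(e)$ (using Theorem~\ref{theo:connectivity} to identify components of $\Delta$ and $\Delta^\ast$). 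This buys a self-contained argument that never passes through actual solvability of $\Pi$; your route, by contrast, leans entirely on the reciprocal speedup theorem and so does not supply the ``direct'' content the theorem is meant to provide.
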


\begin{proof}
	Let $\Pi=(\I,\O,\Delta)$ be a task with $\dim(\I)\leq 1$, for which an FLP-style proof of impossibility exists.
	In order to prove that the impossibility of $\Pi$ also has a round-reduction proof, we first state and prove two claims.
	
	\begin{claim}
		\label{claim:eb-implies-connectivity}
		For every function $\val:\I\to 2^{\O}$ there exists an edge $e\in \I$ such that $\val(e)$ is not a connected subgraph of $\Delta(e)$.
	\end{claim}
	
	\begin{proof}[Proof of Claim~\ref{claim:eb-implies-connectivity}]
		Fix some  $\val:\I\to 2^{\O}$ and let $e=\{x,x'\}\in\I$ be the first simplex chosen by the FLP-style proof according to $\val$.
		Assume for contradiction that $\val(e)$ is connected in $\Delta(e)$,
		and let $y=\delta(x)$ and $y'=\delta(x')$.
		Let $(y=z_0,z_1,\ldots,z_k=y')$ be a path in $\delta(e)$ that cover all the nodes of $\delta(e)$, possibly with repetitions.
		If the length $k$ of the path is not a power of $2$, concatenate the last node to the path to complete its length to a such a power;
		let $t=\log_2k$.
		
		We show how to answer all valency queries after choosing $e$.
		Consider all $t$-round executions starting from $e$, and the corresponding $1$-dimensional subdivided simplex.
		This simplex is a path of $k$ edges, with the first node being $x$ (formally, the image of $x$ under the subdivision operator), then $k-1$ new nodes, and lastly $x'$.
		To the execution represented by the $i$-th node in the subdivision, assign the output value $z_i$.
		This assignment is consistent with $\val$ since $y$ is assigned to the $x$-solo execution, $y'$ to the $x'$-solo execution, each value in $\val(e)$ to some execution starting from $e$, and it maps every two neighboring execution to two neighboring output values.
		
		Using this assignment, the protocol can consistently answer any output or valency query starting from $e$, and the prover lose.
		Hence, no FLP-style proof exists, a contradiction. This completes the proof of Claim~\ref{claim:eb-implies-connectivity}.
	\end{proof}
	Note that in the above proof, we used the fact that we can map a $1$-dimensional simplex to a $0$-dimensional one (when concatenating copies of $z_k$ to the path);
	this would have not been possible with colored tasks.
	
	\begin{claim}\label{claim:zesecond}
		\label{claim:closure-1dim-unsolvable}
		The closure task $\Pi^\ast$ of $\Pi$ is not $0$-round solvable.
	\end{claim}
	
	\begin{proof}[Proof of Claim~\ref{claim:zesecond}]
		Assume for contradiction that $\Pi^\ast$ is $0$-round solvable, i.e. there is a simplicial map  $\delta:\I\to\O^\ast$.
		As $\delta$ is simplicial, the vertices of a simplex $e=\{x,x'\}\in \I$ are mapped to neighboring vertices $\delta(x),\delta(x')$ in $\O^\ast$, and we denote the connected component of $\Delta^\ast$ containing these vertices by $K_e$.
		Theorem~\ref{theo:connectivity} guarantees that $K_e$ is also the connected component of $\Delta(e)$ containing $\delta(x)$ and $\delta(x')$.
		
		We use $\delta$ to define a valency assignment
		$\val:\I\to 2^{V(\O)}$ by $\val(x)=\delta(x)$ for $x\in V(\I)$ and $\val(e)=V(K_e)$ when $|e|=2$,
		which is a legal valency assignment since  $x\in e$ implies $\val(x)\subseteq\val(e)$.
		However, for each $e\in\I$ we have that $\val(e)$ is connected in $\Delta(e)$,
		contradicting 	Claim~\ref{claim:eb-implies-connectivity}. This completes the proof of Claim~\ref{claim:zesecond}.
	\end{proof}
	
	The proof of Theorem~\ref{theo:ex-to-rr-colorless-dim1}
	is now immediate.
	The closure $\Pi^\ast$ of $\Pi$
	is not $0$-round solvable by Claim~\ref{claim:closure-1dim-unsolvable}.
	If $\Pi$ was solvable then a repeated use of Theorem~\ref{theo:closure} would have implied that $\Pi^\ast$ is $0$-round solvable, a contradiction.
\end{proof}


\section{Applications}
\label{sec:applications}

Throughout this paper, we have shown how the theory we developed applies for set agreement and the Hexagon task.
We complete the paper by presenting some further applications:
a round-reduction lower bound proof for approximate agreement, and
a round-reduction impossibility proof for covering tasks.
Note that in terms of techniques, these proofs (and the proofs for set agreement and the Hexagon task in the paper) are completely different from previous proofs of similar results:
round-reduction works directly on the task specification, in an algorithmic way that does not depend on the specific task at hand.
Hence, the arguments in round-reduction proofs are applied directly on the task specification, and not on executions of a protocol (which are encapsulated in the round-reduction theorem).

\subsection{Time Lower Bound for Approximate Agreement}

Let us show that the bound $\lceil\log_2D\rceil$ in Theorem~\ref{theo:connectivity} is tight.
For this purpose, consider the approximate agreement task.
For an integer $N\geq 1$, let $\epsilon=1/N$, and the $\epsilon$-agreement task defined as follows.
The input complex~$\I$ of $\epsilon$-agreement
is merely the edge
\footnote{Some works allow other input values as well, but this is not necessary for the lower bounds we present here}
\[
0 \; \rule{3em}{2pt} \; 1.
\]
The output complex~$\O$ is the path
\[
0  \; \rule{3em}{2pt} \; \epsilon  \; \rule{3em}{2pt} \; 2\epsilon  \; \rule{3em}{2pt} \; \dots  \; \rule{3em}{2pt} \; (N-1)\epsilon \; \rule{3em}{2pt} \; 1.
\]
Finally, the input-output specification $\Delta$ satisfies for each set $S
\subseteq \I$
\[
\Delta(S)=
\{T\subseteq\O \mid\min{S}\leq \min{T}\textrm{ and } \max{T}\leq\max{S}\}
\]
and specifically, for an element $x\in \I$ it specifies $\Delta(\{x\})=\{x\}$.

\begin{proposition}
For every $\epsilon\in (0,1)$, $\epsilon$-agreement cannot be solved by $n\geq 2$ processes in less than $\lceil\log_2 1/\epsilon\rceil$ rounds.
\end{proposition}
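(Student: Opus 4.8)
The plan is to derive the lower bound for $\epsilon$-agreement directly from the round-reduction machinery, specifically from Theorem~\ref{theo:closure} together with the diameter bound in Theorem~\ref{theo:connectivity}. The key observation is that the output complex $\O$ of $\epsilon$-agreement is a path with $N+1 = 1/\epsilon + 1$ vertices, so its diameter as a graph is $N = 1/\epsilon$. Since the input complex $\I$ is a single edge $\{0,1\}$, and $\Delta(\{0,1\}) = \O$, the relevant quantity $D$ in Theorem~\ref{theo:connectivity} is exactly $N = 1/\epsilon$.

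\textbf{Main argument.} First I would suppose for contradiction that some wait-free colorless algorithm solves $\epsilon$-agreement in $t$ rounds with $t < \lceil \log_2 (1/\epsilon) \rceil$. Applying Theorem~\ref{theo:closure} iteratively $t$ times, the task $\Cl^{(t)}(\Pi)$ would be solvable in zero rounds. To get a contradiction, I would show that $\Cl^{(t)}(\Pi)$ is \emph{not} zero-round solvable whenever $t < \lceil \log_2(1/\epsilon)\rceil$. The way to see this is to control how fast the closure operator can add new edges to $\Delta(\{0,1\})$. By Claim~\ref{claim:diameter} (inside the proof of Theorem~\ref{theo:connectivity}), after $r$ closure steps the only pairs $\{u,w\}$ that can appear in $\Delta^{(r)}(\{0,1\})$ are those at distance at most $2^r$ in the original path $\Skel_1(\Delta(\{0,1\}))$. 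In particular, the pair $\{0,1\}$ itself, which is at distance $N = 1/\epsilon$, is added to $\Delta^{(r)}(\{0,1\})$ only once $2^r \geq 1/\epsilon$, i.e. once $r \geq \lceil \log_2(1/\epsilon)\rceil$. So for $t < \lceil\log_2(1/\epsilon)\rceil$, the edge $\{0,1\}$ is \emph{not} in $\Delta^{(t)}(\{0,1\})$.

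\textbf{Finishing.} Now a zero-round algorithm for $\Cl^{(t)}(\Pi)$ is given by a simplicial map $\delta : \I \to \O^{(t)}$ agreeing with $\Delta^{(t)}$. Validity on the solo executions forces $\delta(0) = 0$ and $\delta(1) = 1$ (since $\Delta^{(t)}(\{0\}) = \{0\}$ and $\Delta^{(t)}(\{1\}) = \{1\}$, the closure not touching vertex-specifications). But then $\delta(\{0,1\}) = \{0,1\}$ must be a simplex of $\Delta^{(t)}(\{0,1\})$, which we just argued it is not. This contradiction shows $\Cl^{(t)}(\Pi)$ is not zero-round solvable, hence $\Pi$ is not $t$-round solvable, completing the proof. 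I would also remark that, combined with Theorem~\ref{theo:closure-reciprocal} (the task is $1$-dimensional), the bound is tight: running the closure $\lceil\log_2(1/\epsilon)\rceil$ times produces a zero-round-solvable task, so $\epsilon$-agreement \emph{is} solvable in $\lceil\log_2(1/\epsilon)\rceil$ rounds, which is why the proposition says "tight" in the surrounding text.

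\textbf{Expected obstacle.} The only delicate point is the lower bound on the closure's growth rate, i.e.\ proving that one closure step at most \emph{doubles} the reachable distance and cannot add the long edge prematurely. This requires arguing that any simplicial map $f : \Bary^{(1)}(\{u,w\}) \to \Delta^{(r-1)}(\{0,1\})$ fixing the endpoints $u$ and $w$ must send the midpoint $\{u,w\}$ to some vertex $z$ adjacent (or equal) to both $u$ and $w$ in $\Delta^{(r-1)}(\{0,1\})$, so that $\operatorname{dist}(u,w) \le \operatorname{dist}(u,z) + \operatorname{dist}(z,w) \le 2\cdot(\text{max distance in } \Delta^{(r-1)})$; an induction on $r$ then gives the $2^r$ bound. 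This is essentially the content of Claim~\ref{claim:diameter}, so the work is mostly in citing it correctly and checking the vertex-specification $\Delta^{(t)}(\{x\}) = \{x\}$ is preserved by closure (which follows since a $0$-round solution of the local task $\Pi_{\{x\},\sigma}$ forces output $x$).
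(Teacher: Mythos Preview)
Your overall strategy matches the paper's exactly: iterate Theorem~\ref{theo:closure}, then show that for $k<\lceil\log_2(1/\epsilon)\rceil$ the edge $\{0,1\}$ is absent from $\Delta^{(k)}(\{0,1\})$, which blocks any zero-round solution because validity forces $\delta(0)=0$ and $\delta(1)=1$.

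The one genuine slip is your citation of Claim~\ref{claim:diameter}. That claim proves the \emph{upper} bound direction: vertices at distance at most $2^r$ in $\Skel_1(\Delta(\sigma))$ \emph{do} become edges after $r$ closure steps. It says nothing about pairs at larger distance failing to become edges, which is the direction you need. You correctly spell out the required argument in your ``Expected obstacle'' paragraph (a one-round solution of the local task $\Pi_{\{u,w\},\sigma}$ forces the midpoint to land on a common neighbor of $u$ and $w$ in $\Delta^{(r-1)}(\sigma)$, so an edge of $\Delta^{(r)}$ can only span distance $\le 2$ in $\Delta^{(r-1)}$, and induction gives the $2^r$ bound), but then you say ``this is essentially the content of Claim~\ref{claim:diameter}'' --- it is not; it is the converse, and it is proved afresh in the paper's own proof of the proposition, not extracted from Theorem~\ref{theo:connectivity}. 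So the work is not ``mostly in citing it correctly'': you actually have to prove this direction, and the short argument you outlined is the one to write down.
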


Before proving this preposition, we remark that for $n>2$ processes this bound is tight, and is the same for colored and colorless algorithm.
Interestingly, for $n=2$ processes there is a colored algorithm requiring only $\lceil\log_31/\epsilon \rceil$ rounds~\cite{AspnesH90,HoestS06}.
Hence, while colorless and colored algorithms have the same computability power, colored algorithms are provably stronger in terms of time complexity.

\begin{proof}
The diameter $D$ of $\O$ is $N=1/\epsilon$. Let us first show that if $0\leq k <\lceil\log_2 1/\epsilon\rceil$, we have $\{0,1\}\notin \Delta^{(k)}(\{0,1\})$. The proof is based on the following fact: for any two distinct vertices  $u$ and~$v$ of $\Delta^{(k)}(\{0,1\})$ at distance at least~3 in $\Skel_1(\Delta^{(k)}(\{0,1\}))$, we have $\{u,v\}\notin \Delta^{(k+1)}(\{0,1\})$. To establish this fact, let us consider any map
\[
f:\Bary^{(1)}(\{u,v\})\to \Delta^{(k)}(\{0,1\})
\]
agreeing with $\Delta^{(k)}_{\{u,v\},\{0,1\}}$. Since $f$ agrees with $\Delta^{(k)}_{\{u,v\},\{0,1\}}$, we must have $f(u)=u$ and $f(v)=v$. Therefore, if $w=f(\{u,v\})$ then either $\{u,w\}$ or $\{w,v\}$ is not an edge of $\Delta^{(k)}(\{0,1\})$ because $u$ and $v$ are at distance greater than~2. Therefore $f$ is not simplicial, which shows that $\{u,v\}\notin \Delta^{(k+1)}(\{0,1\})$, as claimed. It follows that, for $k <\lceil\log_2 1/\epsilon\rceil$, we have $\{0,1\}\notin \Delta^{(k)}(\{0,1\})$.

This latter fact implies that the $k$-th closure of $\epsilon$-agreement is not solvable in zero rounds: an algorithm~$f$ solving this $k$-th closure must satisfy $f(0)=0$ and $f(1)=1$.
As a consequence, if some processes start with~0, and some other processes start with input~1, all these processes jointly output the set $\{0,1\}$, which is not a valid output as $\{0,1\}\notin \Delta^{(k)}(\{0,1\})$.
\end{proof}

\subsection{Impossibility of Covering Tasks}
\label{subsec:covering-tasks}

Recall that, for two connected simplicial complexes $\I$ and $\O$, and for a simplicial
map $f:\O\to\I$, the pair $(\O,f)$ is a \emph{covering complex} of~$\I$ if, for every~$\sigma\in\I$, $f^{-1}(\sigma)$~is a union of pairwise disjoint simplexes.
This condition can be rephrased as $f^{-1}(\sigma)=\cup_{i=1}^k\tau_i$ with $f_{|\tau_i}:\tau_i\to \sigma$ is one-one.
The simplexes $\tau_i$, $i=1,\dots,k$, are called the \emph{sheets} of $\sigma$. We often refer to $f$ as a \emph{covering map}. The following observations follow directly from the definition of covering complex (see, e.g., \cite{Rotman73}).
\begin{itemize}
\item If $\sigma\in \I$ is a simplex of dimension $d$, each sheet $\tau_i$ of $\sigma$ is also a simplex of dimension~$d$.
\item The two complexes $\I$ and $\O$ are \emph{locally isomorphic}, in the sense that for each vertex $v\in \O$ the complex $\mbox{star}(v)$ is isomorphic to the complex $\mbox{star}(f(v))$.
Recall that the star of a vertex~$v$ in a complex~$\mathcal{K}$
is the complex $\mbox{star}(v)$ consisting of all the simplexes of~$\mathcal{K}$ that contain~$v$.
\item All the simplices in $\O$ have the same number of sheets.
\end{itemize}
We define below a colorless variant of the chromatic covering tasks introduced in~\cite{FraigniaudRT13}.

\begin{definition}
Given a covering complex $(\O,f)$ of a complex $\I$, the colorless \emph{covering task} $(\I,\O,\Delta)$ is the task where $\Delta$ is defined, for every $\sigma\in\I$, by
\[
\Delta(\sigma)=\{\tau\in\O\mid f(\tau)\subseteq \sigma\},
\]
where $f(\tau)\subseteq \sigma$ means that $f(\tau)$ is a sub-complex of the complex defined by~$\sigma$ and all its faces.
A covering complex is \emph{non-trivial} if each simplex in $\I$ has more than one sheet.
\end{definition} 
The Hexagone task discussed above is a basic example of a covering task. Figure~\ref{fig:cover-task} provides another example of a covering task;
there, $f(\sigma')=f(\sigma'')=\sigma$, and accordingly the image of $\sigma$ under $\Delta$ is the union of the two complexes with unique facets $\sigma'$ and $\sigma''$.

\begin{figure}[!t]
\centering
\includegraphics[width=12cm]{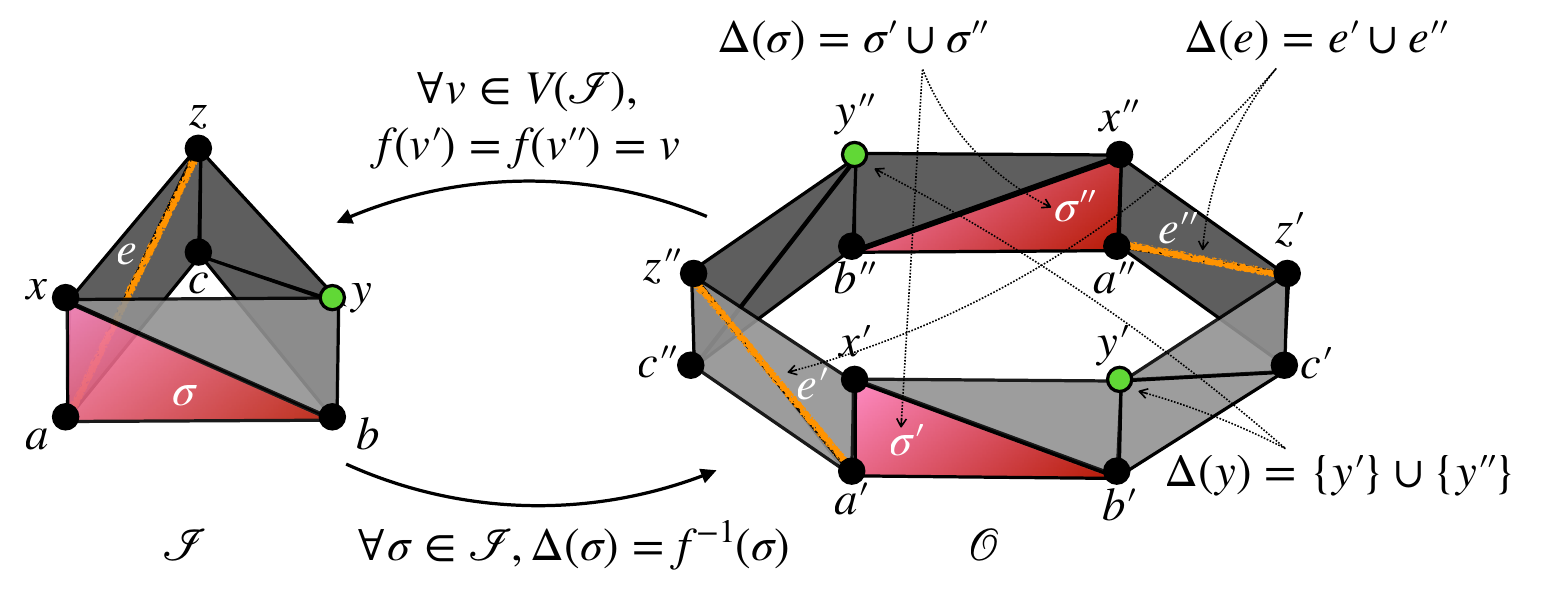}
\caption{\sl A 2-dimensional covering task extending the Hexagone task to a higher dimension}
\label{fig:cover-task}
\end{figure}

We now turn to prove a general impossibility result for covering tasks.
\begin{theorem}
	\label{thm:covering impossibility}
	No non-trivial colorless covering tasks can be solved by $n\geq2$ processes running a wait-free colorless algorithm.
\end{theorem}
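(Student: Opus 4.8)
The plan is to show that for any non-trivial colorless covering task $\Pi = (\I, \O, \Delta)$, the closure operator fixes $\Pi$, i.e.\ $\Cl(\Pi) = \Pi$, and then invoke the speedup theorem (Theorem~\ref{theo:closure}) together with the fact that a non-trivial covering task is not $0$-round solvable. First I would argue the latter: if $\Pi$ were $0$-round solvable, there would be a simplicial map $\delta : \I \to \O$ agreeing with $\Delta$, in particular with $\delta(v) \in \Delta(v)$ for each vertex $v$, which means $f(\delta(v)) = v$. Since $f$ is a covering map and $\I$ is connected, such a global section $\delta$ (a simplicial map splitting $f$) would force $(\O, f)$ to be a trivial covering — $\delta$ would be an isomorphism onto its image and $f$ its inverse on that image, and by connectedness this image is all of $\O$. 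This contradicts non-triviality (each simplex has more than one sheet, so $f$ is not injective).

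The heart of the argument is therefore $\Cl(\Pi) = \Pi$. By definition of $\Cl$, it suffices to show that for every $\sigma \in \I$ and every $\tau \subseteq V(\Delta(\sigma))$ with $\tau \notin \Delta(\sigma)$, the local task $\Pi_{\tau,\sigma}$ is \emph{not} solvable in one round. Here I would use the structure of covering complexes recalled in the excerpt: $\Delta(\sigma) = \{\rho \in \O \mid f(\rho) \subseteq \sigma\}$ is precisely the disjoint union of the sheets $\tau_1, \dots, \tau_k$ of $\sigma$, each of which maps isomorphically onto $\sigma$ under $f$. Crucially, $\Delta(\sigma)$ is a disjoint union of simplices — it has no edges between distinct sheets. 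So if $\tau \subseteq V(\Delta(\sigma))$ is not already a face of some single sheet, then $\tau$ contains two vertices $w, w'$ lying in different connected components (different sheets) of $\Delta(\sigma)$. Then the required simplicial map $g : \Bary(\tau) \to \Delta(\sigma)$ for $\Pi_{\tau,\sigma}$ must satisfy $g(\{w\}) = w$ and $g(\{w'\}) = w'$, yet $\Delta(\sigma)$ being a disjoint union of simplices, any path in $\Skel_1(\Bary(\tau))$ from $\{w\}$ to $\{w'\}$ must map to a path in $\Skel_1(\Delta(\sigma))$ — impossible since $w$ and $w'$ are in different components. Hence $\Pi_{\tau,\sigma}$ is unsolvable in one round, so $\tau \notin \Delta'(\sigma)$, giving $\Delta'(\sigma) = \Delta(\sigma)$ and thus $\Cl(\Pi) = \Pi$. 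This can also be seen as a direct consequence of Theorem~\ref{theo:connectivity}, since each connected component of $\Delta(\sigma)$ is already a full simplex (a sheet), so the closure adds nothing.

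With these two facts, the theorem follows exactly as in the Hexagon corollary: if $\Pi$ were wait-free solvable by $n \geq 2$ processes running a colorless algorithm, it would be solvable in $t$ rounds for some $t \geq 0$; since $\Cl(\Pi) = \Pi$, repeated application of Theorem~\ref{theo:closure} brings this down to $0$-round solvability, contradicting the first step. I expect the main obstacle to be the careful bookkeeping in the first step — precisely arguing that a simplicial section of a covering map over a connected base forces the covering to be trivial, i.e.\ that $\delta$ extends from vertices to a simplicial isomorphism and that connectedness propagates this across all of $\O$. Once the ``disjoint union of sheets'' description of $\Delta(\sigma)$ is in hand, the closure-fixed-point part is essentially the same continuity/connectivity obstruction already used for the Hexagon task, just stated for arbitrary dimension.
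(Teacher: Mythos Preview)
Your proposal is correct and follows the same overall structure as the paper's proof: establish $\Cl(\Pi)=\Pi$ via the disjoint-sheets description of $\Delta(\sigma)$ (indeed, the paper invokes Theorem~\ref{theo:connectivity} exactly as you suggest), then use the speedup theorem to reduce to $0$-round solvability, and finally rule out a $0$-round solution.

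The one place your argument differs from the paper is the step you yourself flag as the main obstacle---the $0$-round impossibility. You argue abstractly that a simplicial section $\delta$ of the covering map $f$ forces the covering to be trivial, via connectedness of~$\O$. The paper instead gives a concrete, self-contained path argument: pick $y=\delta(x)$ and a second preimage $y'\neq y$ of $x$, take a path $y=y_0,\dots,y_k=y'$ in $\O$, project by $f$ to a cycle $x_0,\dots,x_k$ in $\I$ (with $x_0=x_k=x$), and find the first index $i$ where $\delta(x_i)=y_i$ but $\delta(x_{i+1})\neq y_{i+1}$; then the edges $\{y_i,y_{i+1}\}$ and $\{y_i,\delta(x_{i+1})\}$ are two distinct sheets of $\{x_i,x_{i+1}\}$ sharing the vertex $y_i$, contradicting disjointness. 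This is really the same covering-space uniqueness-of-lifts principle underlying your ``section implies trivial'' claim, just unrolled explicitly---so the paper's version avoids the bookkeeping you anticipated by trading the global surjectivity argument for a local first-point-of-disagreement contradiction.
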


A similar result was proved in the past using an ad-hoc argument~\cite{FraigniaudRT13} for colored covering tasks,
and here we give a round-reduction based proof for colorless covering tasks.
Recall that by Theorem~\ref{theo:rr-to-ex-colorless}, this also means that the claim has an FLP-style proof,
which proves Theorem~\ref{theo:D}.

\begin{proof}
	Consider a non-trivial covering complex $(\O,f)$ of a complex $\I$, and the corresponding covering task $\Pi=(\I,\O,\Delta)$.
	For the case of the Hexagon task we have seen that $\Cl(\HX)=\HX$, and here we start the same why.
	
	Consider an input simplex $\sigma\in\I$, and note that each of its sheets is a simplex, and that its sheets do not intersect.
	Hence, all the connected components of $\Delta(\sigma)$ are complete (each is of dimension $\dim(\sigma)$),
	and
	Theorem~\ref{theo:connectivity} implies that $\Delta^\ast(\sigma)=\Delta(\sigma)$, so $\Cl(\Pi)=\Pi$.
%
%
	Hence, if $\Pi$ is solvable wait-free by $n\geq2$ processes in the IIS model, then
	by Lemma~\ref{lem:0round} it is wait-free solvable in zero rounds.

	Consider a possible zero-round algorithm for $\Pi$, and its decision map~$\delta$.
	Recall that $\delta$ must be simplicial, i.e. maps simplices to simplices, and hence also paths to paths.
	
	Fix $x\in V(\I)$, its image $y=\delta(x)\in V(\O)$ under $\delta$, and note that $f(y)=x$.
	Since $(\O,f)$ is non-trivial, there is another vertex $y'\in V(\O)$, $y'\neq y$, such that $f(y')=x$.
	As $\O$ is connected, it contains a path $(y_0=y,y_1,\ldots,y_k=y')$ connecting $y$ and $y'$, and $\I$ contains its image $C=(x_0=f(y_0),\ldots,x_k=f(y_k))$ under $f$.
	Note that $x=x_0=x_k$, so $C$ is in fact a cycle in $\I$.
	Apply $\delta$ to $C$, and the fact that $\delta$ is simplicial gives a cycle $\delta(C)$ in $\O$.
	
	As $\delta(x_0)=y_0$ but $\delta(x_k)\neq y_k$, there must exist a minimal index $0\leq i<k$ such that $\delta(x_i)=y_i$ and $\delta(x_{i+1})\neq y_{i+1}$.
	By the construction of the path, $f(y_i,y_{i+1})=(x_i,x_{i+1})$, and by the assumption that $\delta$ solves the task and hence comply with $\Delta$ we have
	$f(y_i,\delta(x_{i+1}))=(x_i,x_{i+1})$.
	Hence $f^{-1}(x_i,x_{i+1})$ contains both $(y_i,y_{i+1})$ and $(y_i,\delta(x_{i+1}))$, i.e.~$(x_i,x_{i+1})$ has two intersecting sheets, in contradiction to $(\O,f)$ being a covering complex.
\end{proof}


\section{Conclusion}


The purpose of this paper is to relate
round-reduction proof techniques (formally stated in the Speedup Theorem) and FLP-style proof techniques, when applied to colorless tasks within the framework of wait-free computing in the IIS model.

The round-reduction technique offers many good features, including the fact that it is mechanical (it is sufficient to check whether the fixed-point closure is solvable in zero rounds), it enables to derive not only impossibility results but also complexity lower bounds (e.g., for approximate agreement),
and it extends to wait-free computing in models stronger than IIS (e.g., IIS augmented with \textsf{Test\&Set} objects).
On the other hand, FLP-style proofs are very generic, and essentially apply to all models, including $t$-resilient models.
Moreover, we have shown that FLP-style proofs are not weaker than round-reduction proofs,
and it is possible that they are stronger.
Nevertheless, we have also shown that for 1-dimensional colorless tasks the two techniques have exactly the same power,
and are both complete in the sense that if a task is not solvable then any of the two techniques will enable to establish this fact.

It would be interesting to know whether the equivalence between the two techniques holds for arbitrary colorless tasks, and not only for the 1-dimensional ones,
and we conjecture that this is indeed the case.
Note however that if this conjecture is true, then these two proof techniques cannot be complete, simply because it is known that set-agreement impossibility has no extension-based proof~\cite{AlistarhAEGZ19,AttiyaCR20}.

The round-reduction $\Cl$ we define and study in this work is not the only one possible 
	(nor are 
	$F_{\mbox{\tiny IIS}}$ used in~\cite{FPR2022} nor or the operator from Appendix~\ref{app:comparison-based});
	defining a more powerful operator for proving lower bounds on general algorithms is a central open question left in~\cite{FPR2022}, 
	and we leave a similar question open here with regard to colorless algorithm.
	Nevertheless, we have shown that for 1-dimensional colorless tasks, $\Cl$ is in fact the most powerful operator possible --- this operator is shown to be complete for such tasks in Corollary~\ref{cor:round-reduction-complete}.
	The more general question of whether there exists if-and-only-if operators for wait-free computing, as was shown for synchronous failure-free computing in networks, is another central open question. 

Another research direction is to try to design an analog of round-reduction for other computational models, such as  $t$-resilient models
(the speedup theorem of~\cite{FPR2022}, as well as ours, assume the ability of each process to run solo).
The ultimate goal of the line of study initiated in this paper is to better understand the relation between backward and forward induction in the context of distributed computing.


\bibliography{colorless-speedup}

\appendix

\newpage
\centerline{\Large \bf A P P E N D I X}

\section{Comparison-Based Algorithms}
\label{app:comparison-based}

In this section, we show that using a comparison-based closure operator does not suffice for obtaining a speedup theorem for comparison-based algorithms. Recall that a comparison-based algorithm~\cite{ChaudhuriHT99} is bounded not to use the actual values of the process IDs, but solely there relative values. That is, an algorithm $A$ is comparison-based if the output of each process is the same in the two scenarios in which $\ID(p_i)=x_i$ for $i=1,\dots,n$, or $\ID(p_i)=y_i$ for $i=1,\dots,n$, as long as, for every $i\neq j$, $x_i<x_j \iff y_i<y_j$. We use the same notations as in Section~\ref{sec:round-reduction-proofs}. Let us define the comparison-based closure $\Pi'=\Cl(\Pi)=(\I,\O',\Delta')$ of a general (i.e., chromatic) $n$-process task $\Pi=(\I,\O,\Delta)$ as in Definition~\ref{def:colorless-closure}, with the requirement that the local task $\Pi_{\sigma,\tau}$ must be solvable by a comparison-based algorithm (instead of a colorless algorithm). Now, assume, for the purpose of contradiction, that a speedup theorem equivalent to Theorem~\ref{theo:closure} holds for comparison-based algorithms, that is, for every $n$-process  task $\Pi=(\I,\O,\Delta)$, and every $t>0$, if $\Pi$ is wait-free solvable in $t$ rounds by a comparison-based algorithm then the comparison-based closure $\Cl(\Pi)$ is wait-free solvable in $t-1$ rounds by a comparison-based algorithm. We obtain a contradiction by considering weak symmetry-breaking.

The weak symmetry-breaking task $\Pi=(\I,\O,\Delta)$ for $n$ processes $p_1,\dots,p_n$ is defined as follows. Processes have no inputs apart from the fact that every process $p_i$ knows its ID~$i$. That is, $\I$ has a single facet $\sigma^{(n)}=\{1,\dots,n\}$. Each process must output a value in $\{0,1\}$, with the only constraint that if all processes participate, then at least one must output~0, and at least one  must output~1. That is, $\O$ is obtained from the full $(n-1)$-dimensional chromatic complex by removing the two facets
\[
\phi^{(n)}_0=\{(i,0):i=1,\dots,n\} \;\mbox{and}\; \phi^{(n)}_1=\{(i,1):i=1,\dots,n\}.
\]
Weak symmetry-breaking is trivially solvable by a general algorithm using the actual values of the process IDs. For instance $p_1$ outputs~0, while all other processes output~1 solves the task\footnote{This algorithm is not comparison-based because $p_1$ running solo outputs~0, while $p_2$ running solo outputs~1.}. However, if one requires the algorithm to be comparison-based, then the solvability of symmetry-breaking exhibits a complex panorama. Specifically,  weak symmetry-breaking is not wait-free solvable by a comparison-based algorithm when $n$ is a power of a prime, and is wait-free solvable otherwise~\cite{AttiyaP16,CastanedaR10,CastanedaR12}.
In particular it is not solvable for $n=2,3,4,5$, but is solvable for $n=6$ (actually it is solvable by a 3-round comparison-based algorithm for all $n=6k$, $k\geq 1$~\cite{Kozlov17}). We shall show that, for every $n\geq 2$, the $n$-process weak symmetry-breaking task is a fixed point for the comparison-based closure. Since, this task is not solvable in zero rounds by a comparison-based algorithm, this implies that, for every~$n\geq 2$, weak symmetry-breaking is not solvable by any comparison-based algorithm, a contradiction.

We therefore focus now on establish the following.

\begin{proposition}\label{prop:comparison-based}
Let $n\geq 2$, and let $\Pi$ be the $n$-process weak symmetry-breaking task. The comparison-based closure $\Cl(\Pi)$ of $\Pi$ satisfies $\Cl(\Pi)=\Pi$.
\end{proposition}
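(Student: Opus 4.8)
The plan is to show that the comparison-based closure adds no new output simplices to weak symmetry-breaking, i.e.\ that every "missing" output simplex stays missing. Recall that $\O$ is the full $(n-1)$-dimensional chromatic complex minus the two monochromatic facets $\phi^{(n)}_0$ and $\phi^{(n)}_1$. Since $\I$ has the single facet $\sigma^{(n)}=\{1,\dots,n\}$ and $\Delta(\sigma^{(n)})=\O$, the only way a new simplex could be added is if some $\tau\subseteq V(\O)$ with $\tau\notin\O$ were such that the local task $\Pi_{\tau,\sigma^{(n)}}$ becomes solvable in one round by a comparison-based algorithm. The only candidates for such $\tau$ are the two monochromatic facets $\phi^{(n)}_0$ and $\phi^{(n)}_1$ (every proper face of them, and every non-monochromatic set, is already in $\O$). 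So it suffices to prove: \emph{the local task $\Pi_{\phi^{(n)}_0,\sigma^{(n)}}$ is not solvable in one round by a comparison-based algorithm}, and symmetrically for $\phi^{(n)}_1$.

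First I would unwind what $\Pi_{\phi^{(n)}_0,\sigma^{(n)}}$ demands: the input is the simplex $\tau=\phi^{(n)}_0$ (a single facet), the output complex is $\Delta(\sigma^{(n)})=\O$, and the specification requires each process starting alone at vertex $(i,0)$ to output $(i,0)$, while on larger input faces it may output any simplex of $\O$ of no larger dimension. In particular, a one-round comparison-based solution, restricted to the execution where \emph{all} $n$ processes participate and each $p_i$ holds input $(i,0)$, must produce an output simplex in $\O$ — hence \emph{not} the all-zero facet $\phi^{(n)}_0$ and not $\phi^{(n)}_1$; so at least one process outputs $0$ and at least one outputs $1$. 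But this is exactly an instance of weak symmetry-breaking on $n$ processes, run in one round by a comparison-based algorithm. The key additional observation is that the fixing constraint — $p_i$ running solo must output $(i,0)$, i.e.\ output label $0$ — does not help the algorithm escape: it only further restricts the algorithm. I would then invoke the known impossibility: for $n$ a prime power, weak symmetry-breaking is not solvable by \emph{any} wait-free comparison-based algorithm~\cite{AttiyaP16,CastanedaR10,CastanedaR12}, in particular not in one round, so $\phi^{(n)}_0\notin\Delta'(\sigma^{(n)})$ and likewise $\phi^{(n)}_1\notin\Delta'(\sigma^{(n)})$.

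The subtle point — and the main obstacle — is that the proposition claims $\Cl(\Pi)=\Pi$ for \emph{all} $n\geq 2$, not merely prime powers, whereas weak symmetry-breaking \emph{is} solvable by a comparison-based algorithm for composite non-prime-power $n$ (e.g.\ $n=6$). So for such $n$ I cannot argue via global unsolvability of weak symmetry-breaking; I must argue directly that \emph{one} round does not suffice for the \emph{fixed} variant. Here I would use the fixing constraint more carefully: in the one-round IIS model with full-information, a comparison-based decision map on the fully-participating execution is constrained by the structure of the one-round protocol complex (the standard chromatic subdivision of $\sigma^{(n)}$) together with the comparison-based symmetry; the solo-execution fixing forces the corners of this subdivision to carry label $0$ on corner $i$ (thus matching $(i,0)$), and I would show that any comparison-based simplicial map from the one-round subdivision into $\O$ respecting these corner constraints must map some central simplex to a monochromatic facet, contradicting that the image lies in $\O$. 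This is essentially the same parity/counting obstruction (Sperner-type or the $f$-vector mod $p$ argument of~\cite{AttiyaP16}) that underlies weak symmetry-breaking impossibility, but now localized to a single round; the delicate part is making precise that comparison-based symmetry over $n$ IDs, combined with the corner constraints, still forces a monochromatic cell even when $n$ is not a prime power, because one round is simply too short to break the symmetry that the solo-execution constraints impose. I expect this localized counting argument for general $n$ to be where the real work lies; the prime-power case is immediate from prior results, and the reduction of "$\Cl(\Pi)=\Pi$" to "the two monochromatic facets are not added" is routine bookkeeping with Definition~\ref{def:colorless-closure}.
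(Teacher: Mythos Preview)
Your reduction is correct: the only way $\Cl(\Pi)\neq\Pi$ could happen is if one of the two monochromatic facets $\phi^{(n)}_0,\phi^{(n)}_1$ gets added to $\Delta'(\sigma^{(n)})$, so the whole proposition reduces to showing the local tasks $\Pi_{\phi^{(n)}_b,\sigma^{(n)}}$ are not one-round solvable by a comparison-based algorithm. Your unwinding of the local task is also fine.

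The gap is exactly where you say the ``real work lies'': you do not actually prove the non-prime-power case, and the route you sketch is unlikely to succeed. The Sperner-type and $f$-vector mod~$p$ arguments you invoke are precisely the tools that \emph{break down} when $n$ is not a prime power --- that is why weak symmetry-breaking \emph{is} solvable for such~$n$. There is no reason to expect a counting obstruction ``localized to one round'' to survive for $n=6$, since no global counting obstruction exists there. So as written the proposal establishes the proposition only for prime-power~$n$.

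The paper's proof avoids this entirely by \emph{induction on~$n$}, using only the symmetry of comparison-based algorithms and not any arithmetic of~$n$. The key observation is that in the one-round chromatic subdivision $\Xi(\phi^{(n)}_0)$, the $n$ boundary faces $\tau^{(n-1)}_i$ (each omitting one process) are permuted by the comparison-based symmetry, so the algorithm's output on all of them is isomorphic. By the inductive hypothesis applied to $n-1$ processes, each such face contains a monochromatic $(n-2)$-simplex, and by symmetry they are all of the \emph{same} colour, say all~$0$. To avoid extending one of these to an all-$0$ facet, the $n$ central vertices (those with full view) must all output~$1$ --- but these central vertices themselves form a facet, now all-$1$, which is forbidden. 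This argument is short, uniform in~$n$, and sidesteps the prime-power dichotomy altogether; it is the missing idea in your proposal.
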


\begin{proof}
The proof is by induction on~$n$. We denote by $\Xi$ the chromatic subdivision corresponding to one round of a wait-free algorithm in the \IIS\/ model. For $n=2$, $\Xi(\phi^{(2)}_0)$ is the path
\[
(0,\{(0,0)\}) \; \rule{3em}{2pt} \;   (1,\{(0,0),(1,0)\}) \; \rule{3em}{2pt} \;   (0,\{(0,0),(1,0)\}) \; \rule{3em}{2pt} \;  (1,\{(0,0)\})
\]
By definition of the local task $\Pi_{\phi^{(2)}_0,\sigma^{(2)}}$, the two extremities of this path must output~0. The two central nodes may output two different values with a comparison-based algorithm, but this will not prevent one of the three edges to have its extremities mapped to the same output, i.e., to an edge not in~$\Delta(\sigma^{(2)})$. It follows that $\phi^{(2)}_0\notin\Delta'(\sigma^{(2)})$. By the same arguments, we get that $\phi^{(2)}_1\notin\Delta'(\sigma^{(2)})$. As a consequence, the proposition holds for $n=2$.

For the induction step, let us show that $\phi^{(n)}_0\notin\Delta'(\sigma^{(n)})$ for $n\geq 3$ (see Figure~\ref{fig:comparison-based}). The 1-round protocol complex $\Xi(\phi^{(n)}_0)$ the chromatic subdivision of $\phi_0$. For $i=1,\dots,n$, let $\tau^{(n-1)}_i$ be the $(n-2)$-dimensional face of $\phi_0$ where process~$i$ does not appear. The key point is that any comparison-based algorithm will act exactly the same on each of the faces $\tau^{(n-1)}_i$, $i=1,\dots,n$, that is, a comparison-based algorithm maps all complexes $\Xi(\tau^{(n-1)}_i)$, $i=1,\dots,n$, to isomorphic complexes which differ only in the IDs of the processes, but not in the outputs. By the induction hypothesis, all of these complexes contains an all-0 simplex, or all of them contains and all-1 simplex. Let us assume, w.l.o.g., that it is an all-0 simplex, and denote these simplices by $\rho^{(n-1)}_1,\dots,\rho^{(n-1)}_n$ (see Figure~\ref{fig:comparison-based}). For avoiding to create an  all-0 facet in $\Xi(\phi^{(n)}_0)$, each of the center nodes $(i,\{(j,0):j=1,\dots,n\})$, $i=1,\dots,n\}$ must thus output~1, which results in outputting an all-1 facet. Such a facet is not in $\Delta(\sigma^{(n)})$, and therefore no comparison-based can solve the local task $\Pi_{\phi^{(n)}_0,\sigma^{(n)}}$, and thus $\phi^{(n)}_0\notin\Delta'(\sigma^{(n)})$.

By the same arguments, we get $\phi^{(n)}_1\notin\Delta'(\sigma^{(n)})$, and therefore $\Cl(\Pi)=\Pi$.
\end{proof}

\begin{figure}[!t]
\centering
\includegraphics[width=7cm]{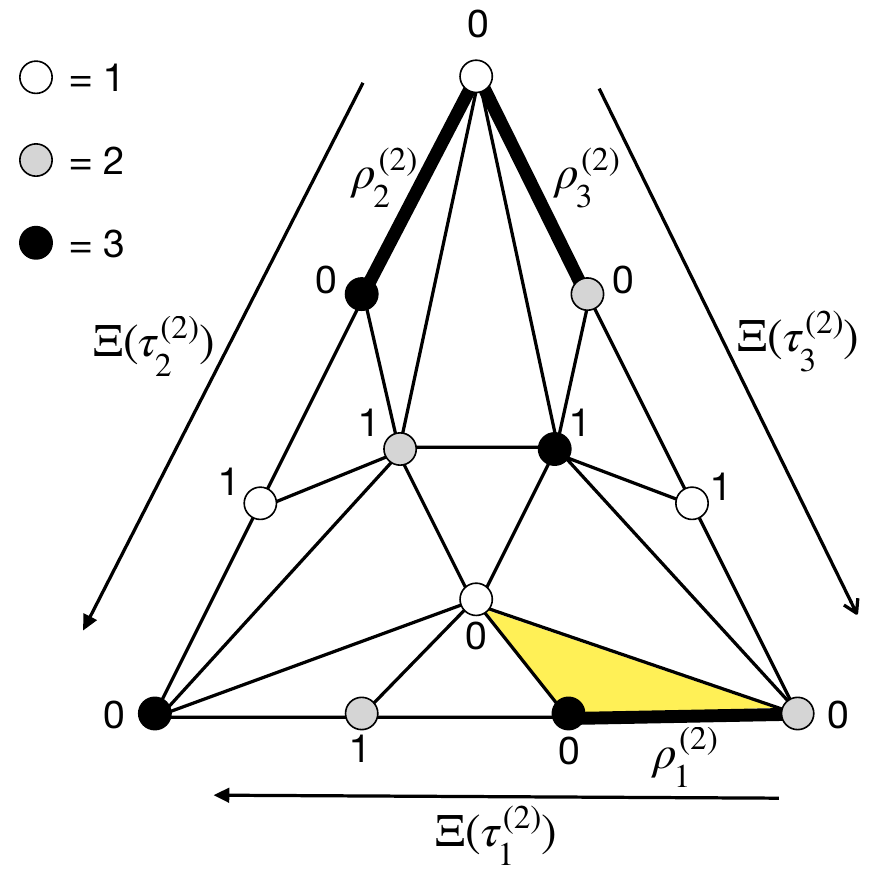}
\caption{\sl Proof of Proposition~\ref{prop:comparison-based}. Outputs of an hypothetical comparison-based algorithm solving the local task $\Pi_{\phi^{(3)}_0,\sigma^{(3)}}$. By the symmetries induced by the comparison-based algorithm on the sides of the triangle, the outputs of the three central nodes cannot prevent the existence of an all-0 triangle, or of an all-1 triangle (such as the marked bottom-right triangle). }
\label{fig:comparison-based}
\end{figure}

\end{document}